\documentclass[a4paper,twocolumn,11pt, unpublished]{quantumarticle}
\pdfoutput=1
\usepackage[utf8]{inputenc}
\usepackage[english]{babel}
\usepackage[T1]{fontenc}
\usepackage{tikz}
\usetikzlibrary{decorations.pathreplacing}

\usepackage{amsmath}
\usepackage[colorlinks=true, linkcolor=blue, citecolor=blue, urlcolor=blue, anchorcolor=black]{hyperref}
\usepackage{orcidlink}
\usepackage{tikz}
\usepackage{lipsum}
\usepackage{amsthm}
\newtheorem{theorem}{Theorem}
\usepackage{subcaption}
\usepackage{comment}
\usepackage{amsfonts}
\usepackage{diagbox}
\usepackage{color}
\usepackage{dsfont}
\usepackage{graphicx}
\usepackage{amssymb}
\usepackage{amsmath,amssymb,bm,amsfonts,mathrsfs,bbm}
\usepackage{physics}

\newtheorem{proposition}{Proposition}
\newtheorem{corollary}{Corollary}
\newtheorem{definition}{Definition}
\newtheorem{lemma}{Lemma}
\newtheorem{observation}{Observation}
\newtheorem{conjecture}{Conjecture}
\usepackage[numbers]{natbib}

\newcommand{\one}[0]{\mathds{1}}
\usetikzlibrary{shapes}
\usepackage{tikz-3dplot}
\usetikzlibrary{intersections, backgrounds}
\usepackage{soul,xcolor}


\begin{document}

\title{
Bounding the entanglement of a state from its spectrum}

\author{Jofre Abellanet-Vidal$^{\orcidlink{0009-0007-3118-0883}}$}
\email{jofre.abellanet@uab.cat}
\affiliation{F\'isica Te\`orica: Informaci\'o i Fen\`omens Qu\`antics, Departament de F\'isica, Universitat Aut\`onoma de Barcelona, E-08193 Bellaterra, Spain.}

\author{Guillem M\"uller-Rigat$^{\orcidlink{0000-0003-0589-7956}}$}
\email{guillem.muller@uj.edu.pl}
\affiliation{Faculty of Physics, Astronomy and Applied Computer Science, Jagiellonian University, ul. \L ojasiewicza 11, 30-348 Kraków, Poland.}

\author{Albert Rico$^{\orcidlink{0000-0001-8211-499X}}$}
\email{albert.ricoandres@uni-siegen.de}
\affiliation{F\'isica Te\`orica: Informaci\'o i Fen\`omens Qu\`antics, Departament de F\'isica, Universitat Aut\`onoma de Barcelona, E-08193 Bellaterra, Spain.}
\affiliation{Naturwissenschaftlich-Technische Fakult\"{a}t, Universit\"{a}t Siegen, Walter-Flex-Stra\ss e 3, 57068 Siegen, Germany.}

\author{Anna Sanpera$^{\orcidlink{0000-0002-8970-6127}}$}
\email{anna.sanpera@uab.cat}
\affiliation{F\'isica Te\`orica: Informaci\'o i Fen\`omens Qu\`antics, Departament de F\'isica, Universitat Aut\`onoma de Barcelona, E-08193 Bellaterra, Spain.}
\affiliation{ICREA -- Instituci\'o Catalana de Recerca i Estudis Avan\c{c}ats, Lluis Companys 23, 08010 Barcelona, Spain.}

\begin{abstract}
We introduce a framework to upper bound  the entanglement content of a bipartite quantum state from its spectrum alone. Using linear maps and their inverses, we derive rigorous constraints on the maximal entanglement that can be activated under global unitary transformations. We  use as entanglement quantifiers the negativity and the Schmidt number; however, our framework is general and applies to any other entanglement measure. Our approach yields compact analytical sufficient criteria for bounding the entanglement of full-rank states in arbitrary dimensions and reveals new spectral constraints on Schmidt number witnesses.
\end{abstract}
\maketitle

\section{Introduction}
\label{sec:introduction}
Recent works have investigated the spectral structure of separable (non-entangled) mixed states whose separability is preserved under arbitrary global unitary transformations, i.e., states that cannot be made entangled in any basis~\cite{kus_geometry_2001}. This raises a complementary question: for those states in which entanglement can be unitarily generated, how much can it be activated given a fixed spectrum? While any separable pure state can be transformed into a maximally entangled state by a suitable global unitary, this is no longer true for sufficiently mixed states. Mixedness, therefore, imposes intrinsic  basis-independent limitations on the entanglement that can be generated.

Detecting entanglement in arbitrary mixed states is notoriously difficult. Major advances in this respect have been achieved through entanglement witnesses, positive maps, and semidefinite-programming hierarchies~\cite{terhal_bell_2000, lewenstein_optimization_2000, guhne_entanglement_2009, doherty_complete_2004, doherty_detecting_2005}. These methods certifying the presence of entanglement, provide lower bounds on the entanglement content of a state and are primarily designed when the full knowledge of the state is accessible. Converse criteria, capable of providing upper bounds on the entanglement compatible with partial information about the state, remain much less developed.

Here, we address this problem and ask how much entanglement can be generated unitarily given a state with a fixed spectrum. Equivalently, we upper bound the maximal entanglement that can be generated from a fixed state by arbitrary global unitary transformations. This spectral formulation is motivated both practically and operationally. From a practical standpoint, spectral properties can be estimated more efficiently than full quantum-state tomography or shadow tomography~\cite{keyl_estimating_2001, lloyd_quantum_2014,Haah2017SampleTomography, cieslinski_analysing_2024}. From an operational standpoint, the problem captures the ultimate entangling potential of a mixed state under the dynamics implemented in quantum circuits. While local optima under unitaries may be obtained with standard numerical techniques, we address the problem of upper bounding global optima.

The entangling power of unitary operations has been extensively studied~\cite{zanardi_entangling_2000, chen_entanglement_2016, linowski_entangling_2020}. However, results that quantify the maximal entanglement attainable along a global unitary orbit are, to the best of our knowledge, largely restricted to two-qubit systems~\cite{munro_maximizing_2001, ishizaka_maximally_2000, verstraete_maximally_2001, wei_maximal_2003} and to qubit states in the symmetric subspace~\cite{serrano-ensastiga_maximum_2023, serrano_ensastiga_entangling_2025}. Existing bounds on entanglement are also often most effective for low-rank states~\cite{liu_bounding_2024, johnston_complete_2022}. By contrast, the regime of full-rank and highly mixed states, where spectral constraints are most relevant is largely unexplored.

We present here a constructive framework for deriving spectral upper bounds on entanglement (see Fig.~\ref{fig:Framework}). Using linear maps and their inverses, we derive powerful sufficient criteria that rule out the possibility that a state with a given spectrum exceeds a prescribed amount of entanglement. We illustrate our approach for two entanglement quantifiers: the negativity of partial transposition~\cite{zyczkowski_volume_1998,vidal_computable_2002}, and the Schmidt number~\cite{terhal_schmidt_2000}.

Before proceeding further, we briefly summarize our main results. First, we show that upper bounding the entanglement attainable under global unitary transformations does not necessarily require full knowledge of the spectrum: in many cases, a single eigenvalue already suffices to do the task. Second, our criteria are effective for highly mixed states, where entanglement characterization is harder. 
Third, our framework, based on invertible linear maps, applies to general entanglement measures beyond the negativity and the Schmidt number. Fourth, as a complementary result, we derive new spectral constraints on Schmidt-number witnesses. \\

The manuscript is organized as follows. In Section~\ref{sec:prelims}, we briefly review the notions of entanglement and entanglement measures for pure and mixed states, and present the main two entanglement measures relevant in our work; the so-called negativity and the Schmidt number. We also introduce linear invertible maps, used to infer entanglement properties directly from the spectrum of a quantum state. In  Section~\ref{sec:AbsEnt} we define the sets of states whose entanglement is upper bounded from spectrum and we present the methods used to characterize them. In Section~\ref{sec:NEGFS}, we characterize the set of states whose negativity is bounded from their spectrum. In Section~\ref{sec:SNFS}, we move to the sets of states whose Schmidt number is constrained by the spectrum.  Additionally, we derive relations connecting these two previously introduced, nonequivalent types of sets, and discuss the spectral properties of our entanglement witnesses. Finally, we present our conclusions in Section~\ref{conclusions}.

During the completion of this work, we became aware of a complementary approach to spectral Schmidt number characterization~\cite{mallick_absolute_2026}.

\section{Preliminaries}
\label{sec:prelims}
Previous studies have focused on characterizing the set of mixed states that remain separable under any unitary transformations~\cite{zyczkowski_volume_1998,kus_geometry_2001, verstraete_maximally_2001, gurvits_separable_2003,johnston_separability_2013,abellanet-vidal_sufficient_2025}, as well as those that remain positive under partial transposition (PPT) under all unitary operations~\cite{hildebrand_PPT_2007,arunachalam_is_2015}. These sets are known as absolutely separable (AS)~\cite{kus_geometry_2001} and absolutely PPT (APPT), respectively. Since global unitaries do not change the spectrum of a state, these sets are also referred to as separable from spectrum (SEPFS) and PPT from spectrum (PPTFS), respectively. In this work, we adopt the latter terminology.
 It is known that the set of SEPFS in $\mathbb C^N\otimes \mathbb C^M$ corresponds to highly mixed states of full rank, with the only exception of the states whose spectrum is  $\lambda_{i}= 1/(D-1)$ for $i=1,\dots, D-1$ and $\lambda_{D}=0$ \cite{verstraete_maximally_2001, zyczkowski_volume_1998,gurvits_separable_2003,johnston_separability_2013,abellanet-vidal_sufficient_2025}. As expected, such states are close to the maximally mixed state (MMS), which is by definition separable and invariant under global unitary transformations.

While the PPTFS sets admit a complete, although complex, characterization based on linear matrix inequalities for arbitrary local dimensions~\cite{hildebrand_PPT_2007}, the complete characterization of SEPFS sets remains unknown in arbitrary dimensions. Furthermore,  it remains an open question whether these two sets coincide \cite{arunachalam_is_2015}. 
Here, by contrast, we focus on characterizing the set of states whose entanglement content remains bounded under arbitrary global unitaries. For completeness, we briefly review some well-known concepts related to bipartite entanglement in pure and mixed states.\\

\noindent\textbf{Pure state entanglement measures.} A pure bipartite quantum state, $\ket{\psi}_{AB}\in\mathcal{H}=\mathbb{C}^{N}\otimes\mathbb{C}^{M}$,  ($N\leq M$) is separable if it can be written as a product state, $\ket{\psi}_{AB}=\ket{\phi}_A\otimes\ket{\varphi}_B$, and it is entangled otherwise. As a consequence of the singular value decomposition, the entanglement of bipartite pure states is fully determined by its Schmidt decomposition: 
\begin{definition}
\label{def:SR}
The Schmidt decomposition of a bipartite pure state $\ket{\psi}\in \mathbb{C}^{N}\otimes\mathbb{C}^{M}$ with $N\leq M$ is given by
\begin{equation}
\label{eq:SchmidtDecomposition}
 \ket{\psi} = \sum_{i=1}^\chi a_i\ket{v_i}\ket{w_i},
\end{equation}
where both $\{\ket{v_i}\}$ and $\{\ket{w_i}\}$ form orthonormal bases. Here $\{a_i>0\}$  are known as Schmidt coefficients, satisfying $\sum_{i=1}^\chi a_i^2=1$, where $\chi$ is the so-called Schmidt rank (SR), with $1\leq\chi\leq N$. 
\end{definition}
Notice that, up to LOCC, the Schmidt decomposition of a state can be written as $\ket{\psi} = \sum_{i=1}^\chi a_i\ket{i}\ket{i}$ without changing its entanglement content. Clearly, a state $\ket{\Psi}$ is separable if and only if its Schmidt rank is one ($\chi = 1$). The Schmidt coefficients can be conveniently arranged in a vector $\boldsymbol{a}=(a_1^2,...,a_\chi^2)$, with $a_i\geq a_{i+1}$. Entanglement transformations between pure states under local operations and classical communication (LOCC) are given by Nielsen majorization criterion \cite{nielsen_conditions_1999}: a state $\ket{\psi}_{AB}$ with Schmidt vector $\boldsymbol{a}$ can be transformed via LOCC  into a state $\ket{\phi}_{AB}$ with Schmidt vector $\boldsymbol{b}$ if and only if $\boldsymbol{b}$ majorizes $\boldsymbol{a}$, written $\boldsymbol{b}\succ\boldsymbol{a}$, namely
\begin{equation}
    \sum_{i=1}^k b_i^2\geq \sum_{i=1}^k a_i^2,
\end{equation}
for all integers $k=1,\dots\chi$ indexing the Schmidt coefficients. Therefore, the Schmidt decomposition fully characterizes the entanglement properties of a bipartite pure state. A complete set of entanglement measures is given by the partial sums
\begin{equation}
\label{eq:PartialSums}
    m_k := 1-\sum_{i=1}^k a_i^2\,.
\end{equation}
From a resource-theory perspective, entanglement is a resource while separable states are free states. Since monotonicity of entanglement under local operations and classical communication (LOCC) is considered the only natural requirement which an entanglement measure should fulfill,  meaning that entanglement cannot be increased nor be generated by LOCC, any quantity $\text{EM}(\ket{\psi})$ that satisfies this condition naturally vanishes for separable states and defines an entanglement measure (EM)  \cite{guhne_entanglement_2009,horodecki_quantum_2009, eltschka_quantifying_2014, plenio_introduction_2006, vedral_quantifying_1997, vidal_entanglement_2000}. The Schmidt rank $\chi$ of a state or its corresponding partial sums $\{m_k\}$ are examples of entanglement measures for bipartite pure states.   \\

\begin{figure}
    \centering
    \includegraphics[width=\linewidth]{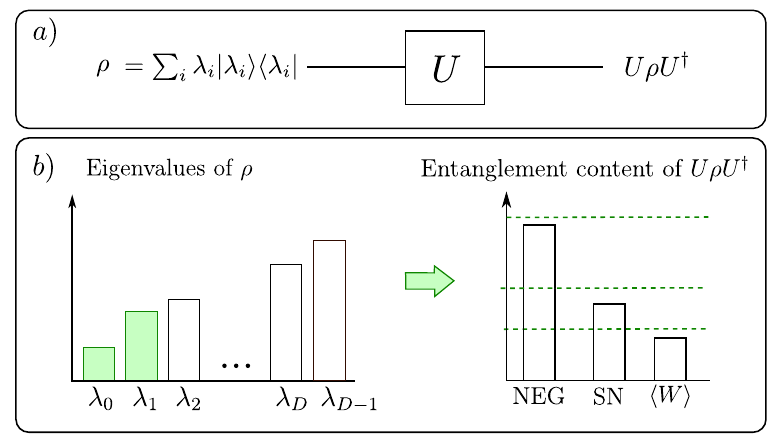}
    \caption{ $a)$ Given a noisy mixed quantum state $\rho$, we aim to address the question of how much entanglement it can have under the action of arbitrary circuit gates $U$. $b)$ From the knowledge of a few eigenvalues of $\rho$, we derive upper bounds to the entanglement content of the output state $U\rho U^\dagger$ with respect to various entanglement quantifiers, such as witness violation (Theorem~\ref{thm:witnessAlb}), entanglement negativity (Section~\ref{sec:NEGFS}), and Schmidt number (Section~\ref{sec:SNFS}).  }
    \label{fig:Framework}
\end{figure}

\noindent \textbf{Mixed state entanglement measures.} Entanglement of mixed states is considerably more difficult to characterize than entanglement of pure states.
We recall that a  bipartite mixed state, $\rho_{AB}$, acting on  $\mathcal{H}=\mathbb{C}^{N}\otimes\mathbb{C}^{M}$, ($N\leq M$) is separable if it can be written as 
\begin{equation}
\rho_{AB}=\sum_{i}p_{i}\ketbra{\phi_{i}}{\phi_{i}}_A\otimes\ketbra{\varphi_{i}}{\varphi_i}_B,
\end{equation}
with $p_i\geq 0$ and $\sum_ip_i=1$. Otherwise, it is entangled. From now on, for simplicity, we will omit the subsystem subscripts $AB$.  Since the decomposition of a density matrix as a convex combination of pure states $\{p_{i},\ket{\psi_{i}}\}$ is not unique, defining entanglement measures for mixed states is much harder. Given an entanglement measure $\text{EM}$ for pure states, it can be extended to mixed states as
\begin{equation}
\label{eq:MixedMeasure}
\text{EM}(\rho)=\inf_{\{p_i,\ket{\psi_i}\}}\max_i \text{EM}(\ket{\psi_i}),
\end{equation}
where the minimization is taken over all possible ensembles decomposing $\rho$ and the maximization is taken over all elements of the specific decomposition. However, the above minimization is a difficult task in general. The convex roof extension  \cite{uhlmann_roofs_2010} of pure state entanglement (convex) monotones, also allows to define mixed state entanglement measures:
\begin{equation}
\label{eq:MixedMonotone}
\text{EM}(\rho)=\min_{\{p_i,\ket{\psi_i}\}}\sum_{i}p_{i}\text{EM}(\ket{\psi_i}),
\end{equation}
where, again, the minimization is taken over all possible ensembles decomposing $\rho$. A key property of both, Eqs.~\eqref{eq:MixedMeasure} and \eqref{eq:MixedMonotone}, is that they are convex functions by construction, which guarantees that the entanglement content of a state cannot increase under convex mixtures. 

In particular, here we consider two well known bipartite EM: negativity and
Schmidt number.  Negativity (NEG)~\cite{zyczkowski_volume_1998, vidal_computable_2002} measures the action of a map that is positive (P) but not completely positive (CP); the transposition map \cite{peres_separability_1996, horodecki_separability_1996}. It does not arise as an extension of a pure state EM, even though convexity allows to upper bound it from pure states. It is defined as: 
\begin{definition}
\label{def:Negativity}
The negativity $\mathcal{N}$ of a bipartite mixed state $\rho \in \mathcal B(\mathbb{C}^{N}\otimes\mathbb{C}^{M})$ is given by 
\begin{equation}
\label{eq:Negativity}
\mathcal{N}(\rho) = \frac{|| \rho^\Gamma||_1-1}{2},
\end{equation}
where $||A||_1 = \mathrm{Tr}(\sqrt{A^\dagger A})$ is the trace norm and $\Gamma$ denotes the partial transpose of $\rho$ with respect to any of the subsystems.
\end{definition}
Similarly, any P but not CP map has an associated negativity leading to an EM (see the discussion in Appendix~\ref{sec:NegativityReduction} for the case of the reduction map). Another important measure of entanglement is the  so-called Schmidt number (SN). It corresponds to the extension of the Schmidt rank of pure states to mixed states. 
\begin{definition}
\label{def:SN}
 A state $\rho \in \mathcal B(\mathbb{C}^{N}\otimes\mathbb{C}^{M})$ has Schmidt number  $\chi$ if there exists a decomposition $\rho = \sum_i p_i\ketbra{\psi_i}$, with $p_i\geq 0$, where all pure states $\ket{\psi_i}$ have at most Schmidt rank $\chi$, and admits no decomposition where all states have less than $\chi$ Schmidt rank. 
\end{definition}
Notice that the monotones $m_k$ in~\eqref{eq:PartialSums} vanish for $k>\chi$, and thus provide a sense of noise robustness at detecting Schmidt rank and Schmidt number.  
Last, we briefly recall how entanglement of physical systems is detected in practice. To this end, one typically makes use of observables known as entanglement witnesses~\cite{horodecki_separability_1996}:
\begin{definition}
    An entanglement witness $W$ is an observable, $W=W^\dag$, whose expectation value is nonnegative on any separable state $\rho$, $\Tr(\rho W)\geq 0$, and there exists at least one entangled state, $\rho_{ent}$, such that  $\Tr(\rho_{ent} W) <0$
\end{definition}

A bipartite entanglement witness acting on $\mathcal{H}_A\otimes\mathcal{H}_B$ is called decomposable \cite{terhal_family_2001,lewenstein_characterization_2001} if it can be written in the form $W = P+Q^{\Gamma}$ where $P$ and $Q$ are positive semidefinite operators. It follows that decomposable witnesses cannot detect entangled states with positive partial transpose.

\section{Entanglement upper bounds from the spectrum}
\label{sec:AbsEnt}
We start by defining the sets of bipartite states whose entanglement is upper bounded under any global unitary transformation.

\begin{definition}
\label{def:EFS}
A state $\rho \in \mathcal{B}(\mathbb{C}^{N} \otimes \mathbb{C}^{M})$ is said to be $\gamma$-entangled from spectrum, $\mathrm{EFS}_{\gamma}$, w.r.t. the entanglement measure EM, if for every global unitary $U$, it holds that
\[
\mathrm{EM}(U \rho U^{\dagger}) \leq \gamma,
\]
and the bound is saturated for at least one such unitary.
\end{definition}

\noindent 
The $\text{EFS}_\gamma$ are convex and satisfy $\mathrm{EFS}_{\gamma'}\subset \mathrm{EFS}_\gamma$ for $\gamma'<\gamma$ (disregarding the attainability condition). Note also that there is not a lower bound on entanglement, 
as there always exists a global unitary, 
$ U = \sum_{i,j}\ketbra{ij}{\lambda_{f(i,j)}}$,   where $\{\ket{i,j}\}$ denotes the computational basis  and $f(i,j)$  is a proper function such that: 
$$\tilde\rho= U \left(\sum_{\iota}\lambda_{\iota}\ketbra{\lambda_{\iota}}{\lambda_{\iota}}\right) U^{\dagger}=
\sum_{\iota}\lambda_{\iota}\ketbra{ij}{ij},$$
is a separable state ($\gamma = 0$) for any EM \cite{clarisse_disentangling_2007}.  Clearly, $\mathrm{SEPFS}\subset \text{EFS}_{\gamma}$ for any $\gamma >0$, since the SEPFS states remain separable for any global unitary. 

To better illustrate the problem under consideration, in Fig.~\ref{fig:Pictorial} we provide a sketch of the different nested convex sets characterizing the entanglement content of quantum states according to a given EM, and the corresponding nested convex subsets $\mathrm{EFS}_{\gamma}$ representing the quantum states whose entanglement cannot increase under global unitary transformations. 
 \begin{figure}
    \centering
    \includegraphics[width=\linewidth]{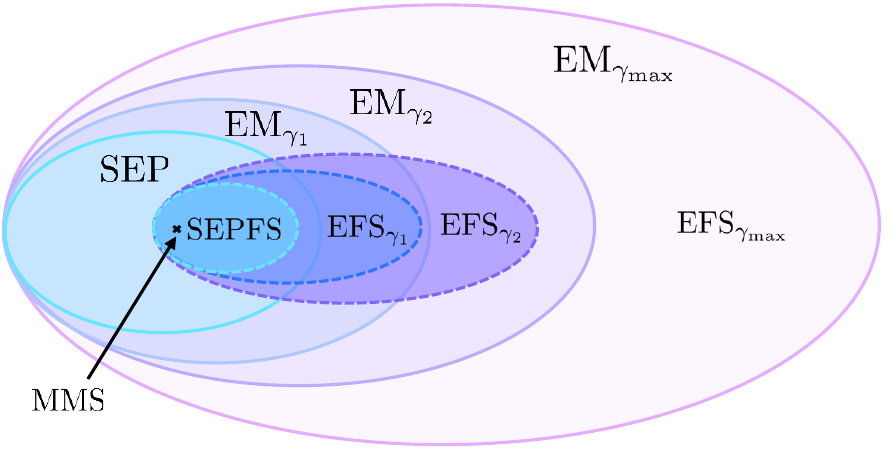}
    \caption{Pictorial representation of the set of quantum states bounded by a convex entanglement measure ($\text{EM}(\rho)\leq\gamma$) and of the sets with $\text{EFS}_{\gamma}$ , considering $0<\gamma_{1}<\gamma_{2}<\gamma_{\max}$. Note that when $\gamma=\gamma_{\max}$, the whole set of states is recovered. The maximally mixed state is denoted as MMS.}
    \label{fig:Pictorial}
\end{figure}
\subsection{Characterization of the $\text{EFS}_{\gamma}$ sets}
\label{sec:HowTo}
In order to characterize the $\text{EFS}_{\gamma}$ sets, we reformulate the techniques presented in \cite{lewenstein_sufficient_2016,abellanet-vidal_sufficient_2025, lewenstein_linear_2022}, which use families of positive invertible linear maps to provide sufficient conditions to certify SEPFS. Here, the same technique is used to certify if $\rho\in\text{EFS}_{\gamma}$. This method, however, does not allow the complete characterization of the $\text{EFS}_{\gamma}$ sets. 

Our approach is built upon Theorem~\ref{thm:InverseMapSNFS}, which makes use of the unitarily covariant reduction map~\cite{bardet_characterization_2020, bhat_linear_2011}
\begin{equation}
\label{eq:ReductionMap}
\Lambda_{\alpha}(\sigma)=\Tr(\sigma)\cdot\mathds{1}+\alpha\sigma,
\end{equation}
and its unitarily covariant inverse
\begin{equation}
\label{eq:ReductioInversenMap}
\Lambda_{\alpha}^{-1}(\sigma)=\frac{1}{\alpha}\left(\sigma-\frac{\Tr(\sigma)\mathds{1}}{D+\alpha} \right).
\end{equation}
We keep the unnormalized notation because the positivity of $\Lambda_\alpha^{-1}$, the spectral inequalities, and the associated convex-geometric construction are invariant under multiplication by a positive scalar.

\begin{theorem}
\label{thm:InverseMapSNFS}
Let $\Lambda_{\alpha}$ be the family of the reduction maps. By $\alpha_{-},\alpha_{+}$ we denote the range of the parameter $\alpha\in[\alpha_{-},\alpha_{+}]$, for which every $\rho\in\mathcal{B}(\mathbb{C}^{N}\otimes\mathbb{C}^{M})$ fulfills $\text{EM}[\Lambda_{\alpha}(\rho)]= EM[\sigma]\leq\gamma$. Then, if $\Lambda_{\alpha}^{-1}(\sigma) \geq 0$, 
it follows that $\sigma\in\text{EFS}_{\gamma}$.  \end{theorem}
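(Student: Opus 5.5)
Given $\sigma$ with $\Lambda_\alpha^{-1}(\sigma)\geq 0$, the plan is to exhibit every unitary rotation $U\sigma U^\dagger$ as the image under $\Lambda_\alpha$ of some (unnormalized) positive operator. The hypothesis on $\alpha\in[\alpha_-,\alpha_+]$ then bounds its entanglement by $\gamma$. The key ingredient is the unitary covariance of both $\Lambda_\alpha$ and $\Lambda_\alpha^{-1}$, which is immediate from \eqref{eq:ReductionMap} and \eqref{eq:ReductioInversenMap}: $\Tr$ and $\mathds{1}$ are invariant under conjugation.

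\textbf{Step 1.} Set $\rho:=\Lambda_\alpha^{-1}(\sigma)$, which is positive semidefinite by assumption. A direct check gives $\Lambda_\alpha(\Lambda_\alpha^{-1}(\sigma))=\sigma$. Indeed,
$\Tr\Lambda_\alpha^{-1}(\sigma)=\frac{1}{\alpha}\left(\Tr\sigma-\frac{D\Tr\sigma}{D+\alpha}\right)=\frac{\Tr\sigma}{D+\alpha}$, so
$\Lambda_\alpha(\rho)=\frac{\Tr\sigma}{D+\alpha}\mathds{1}+\sigma-\frac{\Tr\sigma}{D+\alpha}\mathds{1}=\sigma$.
Since $\rho\geq0$ and $\Tr\rho=\Tr\sigma/(D+\alpha)>0$ (assuming $\alpha>-D$, which is implicit for the inverse to exist), $\rho$ is a valid state up to normalization. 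As noted before the theorem, the normalization is irrelevant: we may rescale $\rho$ to unit trace, which rescales $\sigma$ by the same positive factor and leaves $\mathrm{EM}$ of the normalized state unchanged.

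\textbf{Step 2.} Fix an arbitrary global unitary $U$. By covariance, $U\sigma U^\dagger=U\Lambda_\alpha(\rho)U^\dagger=\Lambda_\alpha(U\rho U^\dagger)$, and $U\rho U^\dagger$ is again a state in $\mathcal{B}(\mathbb{C}^N\otimes\mathbb{C}^M)$. The hypothesis of the theorem states that $\mathrm{EM}[\Lambda_\alpha(\rho')]\leq\gamma$ for \emph{every} state $\rho'$. Applying it to $\rho'=U\rho U^\dagger$ gives $\mathrm{EM}(U\sigma U^\dagger)\leq\gamma$ for all $U$, which is the defining condition of $\mathrm{EFS}_\gamma$.

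\textbf{Main obstacle.} The subtle point is the attainability clause of Definition~\ref{def:EFS}. The argument above only yields membership in the set with $\max_U\mathrm{EM}(U\sigma U^\dagger)\leq\gamma$, i.e.\ in $\mathrm{EFS}_{\gamma'}$ for some $\gamma'\leq\gamma$. I would state the conclusion in this sense, consistent with the remark after Definition~\ref{def:EFS} that the nesting disregards attainability. The other potential issue is that $\alpha$ could be negative. Then $\Lambda_\alpha^{-1}$ and the trace formula require $\alpha\neq0$ and $D+\alpha>0$; I would check that this holds on $[\alpha_-,\alpha_+]$ for the maps considered, which it does since $\Lambda_\alpha$ must map states to positive operators for $\mathrm{EM}[\Lambda_\alpha(\rho)]$ to make sense.
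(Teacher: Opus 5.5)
Your proof is correct and takes the same route as the paper. The paper's argument is just the identity $\Lambda_\alpha[\Lambda_\alpha^{-1}(\sigma)]=\sigma$ combined with the unitary covariance of $\Lambda_\alpha$ noted right after the theorem; you make the trace computation, the normalization and the covariance step explicit. Reading the conclusion as $\max_U \mathrm{EM}(U\sigma U^\dagger)\leq\gamma$, without attainability and excluding $\alpha=0$ where $\Lambda_\alpha^{-1}$ is undefined, is the right interpretation, since the paper's proof does not establish saturation either.
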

\begin{proof} (See also \cite{lewenstein_sufficient_2016}). By considering the convex sets $\text{EFS}_{\gamma}$,
    if $\Lambda_{\alpha}^{-1}(\sigma) \geq 0$, then $\Lambda_{\alpha}[\Lambda_{\alpha}^{-1}(\sigma)]=\sigma$.
    \end{proof}
    
Since the reduction map is a unitarily covariant linear map, i.e.,  $\Lambda_{\alpha} (U\rho U^{\dagger})= U'\Lambda(\rho)U'^{\dagger}$, with $U,U'$ unitaries, the condition $\Lambda_{\alpha}^{-1}(\sigma) \geq 0$ depends only on the eigenvalues of $\sigma$ and is invariant under global unitary transformations~\cite{abellanet-vidal_sufficient_2025}. Thus, the characterization of each set, $\text{EFS}_{\gamma}$, requires, according to Theorem \ref{thm:InverseMapSNFS}, to determine the corresponding values of $\alpha_{\mp}$. Decisively, $\alpha_{-} = -1$ is the extremal threshold ensuring the positivity of the reduction map, as shown previously in the context of $\mathrm{SEPFS}$~\cite{lewenstein_sufficient_2016}. This implies that $\alpha_{-} = -1$ is also the extreme value for all sets $\mathrm{EFS}_{\gamma}$, since $\mathrm{SEPFS} \subset \mathrm{EFS}_{\gamma}$ for any $\gamma > 0$ and any EM. In contrast, the value of $\alpha_{+}$ generally depends on both the specific EM under consideration and the chosen threshold $\gamma$.
To determine the value of $\alpha_{+}$, we make use of Lemma~\ref{Lemma:GeneralAlphaPM} (originally derived in~\cite{abellanet-vidal_sufficient_2025}),  to obtain a linear condition on the eigenvalues of $\rho$ certifying that $\rho \in \mathrm{EFS}_{\gamma}$. This provides an inner characterization of the desired sets. Moreover, since the reduction map is linear, it suffices to analyze its action on pure states. For this purpose, we focus on the so-called pseudo-pure states (PPS), i.e.,
\begin{equation}
\rho_{PPS} = (1-p)\ketbra{\psi}{\psi} + p\frac{\mathds{1}_{D}}{D},
\end{equation}
where the noise parameter is given by $p = D/(D+\alpha_{+})$. In Appendix~\ref{sec:PPS}, we provide an operational way to infer the value $p$. 

\begin{lemma}{\cite{abellanet-vidal_sufficient_2025}}
\label{Lemma:GeneralAlphaPM} 
 Let $\rho \in \mathcal B(\mathbb{C}^{N}\otimes\mathbb{C}^{M})$  be a normalized state acting on a Hilbert space of global dimension $D=NM$, and $\boldsymbol{\lambda} = \{\lambda_i^{\uparrow} \}_{i=0}^{D-1}$ the corresponding eigenvalues in a non-decreasing order, with $\sum_{i=0}^{D-1}\lambda_i = 1$.  Given $\alpha_{+}\geq 0$ and $\alpha_{-}\leq0$, the vector $\boldsymbol{\lambda}$ is contained in the convex hull defined by the conditions 
\begin{equation}
\label{eq:DemoLemma1}
\lambda_{0}\geq \frac{1}{D+\alpha_{+}},  
\end{equation}
\begin{equation}
\label{eq:DemoLemma2}
\lambda_{D-1}\leq \frac{1}{D+\alpha_{-}},   
\end{equation} 
if and only if
 \begin{equation}
 \label{eq:GeneralAlphaPM}
K \cdot \sum_{i=0}^{c-1} \lambda_i + \left[D -K \cdot c + \alpha_+ \right] \cdot \lambda_c \geq 1,
\end{equation}
where $ K= \left( 1 - \frac{\alpha_+}{\alpha_-} \right)$, and 
\begin{equation*}
c = \left\lceil \frac{\alpha_+ + \alpha_- (D - 1 + \alpha_+)}{\alpha_- - \alpha_+} \right\rceil.
\end{equation*}
\end{lemma}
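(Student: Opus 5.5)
The plan is to work entirely in the ordered simplex and treat both sides as convex polytopes, so that the ``if and only if'' becomes a statement about vertices and facets. Write $a=1/(D+\alpha_+)$ and $b=1/(D+\alpha_-)$. Since $\alpha_+\ge 0\ge\alpha_-$ (and $\alpha_->-D$), we have $a\le 1/D\le b$.

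Let $\Delta^\uparrow=\{\lambda_0\le\dots\le\lambda_{D-1},\ \sum_i\lambda_i=1\}$. Because the entries are ordered, the conditions \eqref{eq:DemoLemma1}--\eqref{eq:DemoLemma2} mean exactly that every eigenvalue lies in $[a,b]$. So the set in question is
\[
P=\Delta^\uparrow\cap[a,b]^D .
\]

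\textbf{Vertices of $P$.} The polytope $[a,b]^D\cap\{\sum_i\lambda_i=1\}$ is permutation invariant. Its vertices are the vectors with all but at most one entry at a bound. Ordered, this is
\[
v^\ast=(\underbrace{a,\dots,a}_{c},\,t,\,\underbrace{b,\dots,b}_{D-1-c}),\qquad a\le t\le b .
\]
Solving $ca+t+(D-1-c)b=1$ with $t\in[a,b)$ fixes the number $c$ of entries equal to $a$. I expect that inserting $a$ and $b$ and solving for $c$ gives exactly the ceiling expression for $c$ in the statement; this will be checked first.

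\textbf{Facet.} The vertices of $P$ inside the chamber $\Delta^\uparrow$ are $v^\ast$ and the uniform vector $(1/D,\dots,1/D)$, together with vertices of the chamber walls. I will check that the affine functional
\[
F(\lambda)=K\sum_{i<c}\lambda_i+(D-Kc+\alpha_+)\lambda_c
\]
equals $1$ at the relevant vertices. At the uniform vector $\lambda_i\equiv a$ direction one gets $a\,(Kc+D-Kc+\alpha_+)=a(D+\alpha_+)=1$. At $v^\ast$, the value $K=1-\alpha_+/\alpha_-$ should be precisely the one making $F(v^\ast)=1$ after eliminating $t$ through the trace condition. This identifies $F\ge 1$ as the supporting hyperplane through the ``corner'' $v^\ast$.

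\textbf{Necessity.} I will show $F\ge 1$ on $P$ by minimizing $F$ over $P$ as a linear program. A minimizer can be pushed to a vertex. Moreover, any ordered vector in $[a,b]^D$ can be written as a mixture of $v^\ast$ with vectors that only increase the first $c+1$ partial sums. Equivalently, $P$ is the set of ordered $\lambda$ majorized by $v^\ast$. The weights of $F$ are nonnegative and non-increasing up to index $c$, because $D-Kc+\alpha_+\le K$ by the choice of the ceiling. Hence $F$ is a nonnegative combination of the partial sums $\sum_{i\le k}\lambda_i$ for $k\le c$, and majorization gives $F(\lambda)\ge F(v^\ast)=1$.

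\textbf{Sufficiency (expected main obstacle).} Here the task is to show that the single inequality $F\ge1$, together with ordering and normalization, already forces $\lambda_0\ge a$ and $\lambda_{D-1}\le b$. I would treat this by LP duality, solving two linear programs over $\Delta^\uparrow\cap\{F\ge 1\}$:
\[
\min \lambda_0 \qquad\text{and}\qquad \max \lambda_{D-1}.
\]
For the first, ordering gives $\lambda_i\le\lambda_c$ for $i<c$, so $F\ge1$ yields $(D+\alpha_+)\lambda_c\ge1$, i.e.\ $\lambda_c\ge a$. Lowering $\lambda_0$ below $a$ then requires raising the top entries, and the trace constraint combined with the exact coefficient $K$ should show this is impossible. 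For the second, maximizing $\lambda_{D-1}$ pushes mass to the top, which drains the bottom $c+1$ entries and violates $F\ge1$ exactly when $\lambda_{D-1}>b$.

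The delicate point is the ceiling in $c$. I expect that the boundary cases, where $t=a$ or $t=b$ so that the ceiling is attained, have to be handled separately, and that this is where the specific value $K=1-\alpha_+/\alpha_-$ is needed to make both bounds tight simultaneously.
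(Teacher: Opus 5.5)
\textbf{Gap: you are proving the statement for the wrong set.} The two conditions describe two \emph{separate} regions. The first is $A=\{\lambda_0\ge a\}$, the spectra with $\Lambda_\alpha^{-1}(\rho)\ge 0$ for some $\alpha\in(0,\alpha_+]$. The second is $B=\{\lambda_{D-1}\le b\}$, the spectra with $\Lambda_\alpha^{-1}(\rho)\ge0$ for some $\alpha\in[\alpha_-,0)$. The lemma is about $\mathrm{conv}(A\cup B)$. This reading is forced by three things. Convexity of $\mathrm{EFS}_\gamma$ is exactly what lets one certify this hull. The later theorems read ``$\lambda_0\ge a$, \emph{or} \dots''. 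And Fig.~\ref{fig:SetsLemma1} has hexagon vertices at edge midpoints, i.e.\ at spectra like $(0,\tfrac12,\tfrac12)$ with vanishing smallest eigenvalue.

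You replaced this by the intersection $P=\Delta^\uparrow\cap[a,b]^D$, and for $P$ the ``if'' direction is false. Take $D=4$, $\alpha_+=4$, $\alpha_-=-1$. Then $a=\tfrac18$, $b=\tfrac13$, $K=5$, $c=1$ and $F(\lambda)=5\lambda_0+3\lambda_1$. The vector $w=(0,\tfrac13,\tfrac13,\tfrac13)$ has $F(w)=1$, yet $\lambda_0=0<a$. So the LP step you flagged (showing $\min\lambda_0\ge a$ over $\Delta^\uparrow\cap\{F\ge 1\}$) cannot be completed.

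Your facet step also fails. In the same example $v^\ast=(\tfrac18,\tfrac5{24},\tfrac13,\tfrac13)$, and $F(v^\ast)=\tfrac54\neq 1$. Your count of $a$-entries in $v^\ast$ does reproduce the lemma's $c$, but that reflects polar duality between an intersection and the hull of a union; it does not put $v^\ast$ on the hyperplane. The hyperplane $F=1$ actually passes through $u=(a,\dots,a,1-(D-1)a)$, a vertex of $A$ that typically lies outside $P$. For $\alpha_-=-1$ it also passes through the vertex $w=(0,b,\dots,b)$ of $B$. Your necessity argument never examines $B$, whose points have entries far below $a$.

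\textbf{What a correct argument looks like.} Part of your machinery survives. The ceiling gives exactly $0\le D-Kc+\alpha_+\le K$, so the weights $f_i$ of $F$ are nonnegative and non-increasing. Hence $F$ on the ascending rearrangement is a nonnegative combination of the concave functions $\Sigma_k$ (the sum of the $k+1$ smallest entries). So $\{F\ge1\}$ is convex, and necessity reduces to checking $A$ and $B$ separately. On $A$, $F\ge a\sum_i f_i=a(D+\alpha_+)=1$. On $B$, each $\Sigma_k\ge\max\{0,1-(D-1-k)b\}$, with all these bounds attained at $w$, so $F\ge F(w)$. For $\alpha_-=-1$ one has $F(w)=b(D+\alpha_+-K)=1$, since $K=1+\alpha_+$.

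For sufficiency you must show that every vertex of the polytope $\Delta^\uparrow\cap\{F\ge 1\}$ lies in $A\cup B$; it is not true that $F\ge1$ forces both bounds. The chamber $\Delta^\uparrow$ is the simplex with vertices $v^{(k)}=(0,\dots,0,\tfrac1k,\dots,\tfrac1k)$.
\begin{itemize}
\item $F(v^{(k)})$ is the average of the last $k$ weights, hence non-decreasing in $k$. For $\alpha_-=-1$ and $\alpha_+>0$, only $v^{(D)}$ and $v^{(D-1)}=w$ satisfy $F\ge1$, and both lie in $B$.
\item The remaining vertices are the points with $F=1$ on the edges $[v^{(k)},v^{(D)}]$. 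These have the form $(s,\dots,s,t,\dots,t)$ with $D-k$ entries equal to $s$.
\item Solving $F=1$ together with normalization gives $s=a$ when $D-k>c$, so the point is in $A$. It gives $t=b$ when $D-k\le c$, so the point is in $B$.
\end{itemize}
The case $-1<\alpha_-<0$ is analogous. The restriction $\alpha_-\ge-1$ genuinely matters: for $\alpha_-<-1$ the ``if'' direction of the literal statement fails. For example, $D=6$, $\alpha_+=2$, $\alpha_-=-2$ gives $F=2(\lambda_0+\lambda_1+\lambda_2+\lambda_3)$. The vector $(0,\tfrac16,\tfrac16,\tfrac16,\tfrac16,\tfrac13)$ has $F=1$. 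But it has a zero entry, so it cannot be a nontrivial mixture involving $A$, and its largest entry $\tfrac13$ exceeds $b=\tfrac14$, so it is not in $B$ either. Hence it lies outside $\mathrm{conv}(A\cup B)$.
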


Specifically, if we consider $\alpha_{-}=-1$ from positivity of the map, the whole problem is reduced to calculating $\alpha_{+}$ since we can express the whole equation as a function of only $\alpha_{+}$ and $D$ as
\begin{equation*}
\label{eq:implications}
K=1+\alpha_{+},\quad c=\lceil \frac{D-1}{1+\alpha_{+}} \rceil.
\end{equation*}

Lemma~\ref{Lemma:GeneralAlphaPM} exploits the results obtained for PPS based on $\alpha_{+}$ and $\alpha_{-}$ to construct sufficient conditions for membership in $\mathrm{EFS}_{\gamma}$ for arbitrary mixed states. As a result, this approach provides simple analytical conditions that certify when the entanglement content (according to a given entanglement measure) of a mixed state cannot exceed $\gamma$ under any global unitary transformation. As a byproduct, only a few eigenvalues are required to certify that $\rho \in \mathrm{EFS}_{\gamma}$, rather than the full spectrum. Moreover, since the eigenvalues are non-decreasingly ordered, it is possible to substitute $\lambda_{c}\geq\lambda_{c-1}$ in Eq.~\eqref{eq:GeneralAlphaPM} 
The resulting condition detects fewer states, but also requires knowledge of fewer eigenvalues to certify belonging to the $\text{EFS}_{\gamma}$ sets.\\ 

Lemma~\ref{Lemma:GeneralAlphaPM} can be applied in two complementary directions: either to determine the maximal value of $\gamma$ compatible with a prescribed spectrum, or to identify the spectra compatible with a prescribed entanglement threshold $\gamma$. In Fig.~\ref{fig:SetsLemma1}, we provide an intuitive geometric interpretation of our criteria using the eigenvalues of the states under examination.

\begin{figure}[h!]

\centering
\begin{tikzpicture}[scale=3.5]
\definecolor{asep1}{HTML}{66E5FF}
\definecolor{asep2}{HTML}{000000}
\definecolor{asn2}{HTML}{2F74FF}
\definecolor{asn3}{HTML}{7F66FF}
\definecolor{asnN}{HTML}{E4ADFF}

\fill[black, opacity=0.2] 
    (0,0.81649658) -- 
    (-0.35355,0.204) --
    (0,0) -- 
    cycle;

\filldraw[fill=asnN,draw=asnN, very thick, solid, fill opacity = 0.1] (0,0.81649658) -- (-0.70710678,-0.40824829) -- (0.70710678,-0.40824829) -- cycle;
    \filldraw[fill=asnN,draw=asnN] (0,0.81649658) circle (0.7pt);
    \filldraw[fill=asnN,draw=asnN] (-0.70710678,-0.40824829) circle (0.7pt);
    \filldraw[fill=asnN,draw=asnN] (0.70710678,-0.40824829) circle (0.7pt);

\filldraw[fill=asn3, draw=asn3, very thick, dashed, fill opacity = 0.1] (0,0.65) -- (0.3536,0.204) -- (0.5628,-0.324996) -- (0,-0.408) -- (-0.5628,-0.324996) -- (-0.35355,0.204) -- cycle ;

\filldraw[fill=asn2,draw=asn2,very thick, dashed, fill opacity=0.1] (0,0.5) -- (0.3536,0.204) -- (0.4330086,-0.249996) -- (0,-0.408) -- (-0.4330086,-0.249996) -- (-0.35355,0.204) -- cycle ;

\filldraw[fill=asep1, draw=asep1, very thick,dashed, fill opacity=0.2] (0,0.326598) -- (0.3536,0.204) -- (0.2828,-0.1633) -- (0,-0.408) -- (-0.2828,-0.163299) -- (-0.35355,0.204) -- cycle ;

\draw[->] (0,-0.408) -- (0,0.9) node[right] {$\lambda_0$};
\draw[->] (0.3536,0.204) -- (-0.77,-0.445) node[above left] {$\lambda_1$};
\draw[->] (-0.35355,0.204) -- (0.77,-0.445) node[above right] {$\lambda_2$};

\node[scale=0.6] at (0.05,0.05) {MMS};

\end{tikzpicture}
\caption{Geometry of the $\mathrm{EFS}_{\gamma}$ inner characterization using Lemma~\ref{Lemma:GeneralAlphaPM} in the probability simplex described by the eigenvalues $\boldsymbol{\lambda}$ of the density matrix for $D=3$ in barycentric coordinates. The figure is illustrative as $D=3$ does not correspond to any bipartite splitting. $0\leq\gamma_{1}<\gamma_{2}<\gamma_{\max}$ is assumed, following Fig.~\ref{fig:Pictorial}.}
    \label{fig:SetsLemma1}
\end{figure}
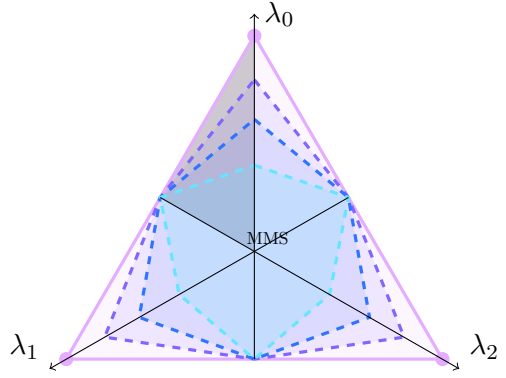

\section{Negativity from spectrum}
\label{sec:NEGFS}

We proceed now to characterize the set of bipartite states whose negativity cannot exceed a given value $\gamma$ under global unitaries. In accordance with Def.~\ref{def:Negativity}, we denote these sets as $\mathrm{NEGFS}_{\gamma}$.

\begin{definition}
\label{def:NEGFS}
A bipartite state $\rho \in \mathcal{B}(\mathbb{C}^{N} \otimes \mathbb{C}^{M})$ has $\mathrm{NEGFS}_{\gamma}$ if, for every global unitary $U$, $\mathcal{N}(U \rho U^{\dagger}) \leq \gamma$ and the bound $\gamma$ is saturated for at least one such unitary. 
\end{definition}

Although similar sets can be defined for other P but not CP maps, the negativity associated to the transposition map yields better conditions to quantify entanglement in arbitrary dimensions (see Appendix~\ref{sec:NegativityReduction}). At the operational level, determining upper bounds on the negativity of a state $\rho$ under partial transposition is motivated by the following observation.

\begin{theorem}
\label{thm:witnessAlb}
Let $W$ be a decomposable entanglement witness ($W^{T_A}>0$), normalized to unit trace. Let $\rho$ be a state detected by this witness, $\Tr(\rho W)<0$. Then, $|\Tr(\rho W)|\leq \mathcal{N}(\rho)$.
\end{theorem}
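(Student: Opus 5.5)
The plan is to move the partial transpose from the witness onto the state, and then bound the resulting overlap using the negative part of $\rho^\Gamma$, whose trace is exactly $\mathcal{N}(\rho)$.

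\textbf{Step 1: rewrite the witness.} By hypothesis $W^{T_A}\geq 0$, so I set $Q:=W^{\Gamma}\geq 0$, which gives $W=Q^{\Gamma}$. Partial transposition preserves the trace, hence $\Tr Q=\Tr W=1$. For the general decomposable form $W=P+Q^\Gamma$ with $P,Q\geq 0$, the same argument applies: the term $\Tr(\rho P)\geq 0$ only increases $\Tr(\rho W)$, and $\Tr Q=1-\Tr P\leq 1$.

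\textbf{Step 2: move the transpose onto $\rho$.} Partial transposition is self-adjoint with respect to the Hilbert--Schmidt inner product. Therefore
\[
\Tr(\rho W)=\Tr(\rho\, Q^{\Gamma})=\Tr(\rho^{\Gamma} Q).
\]

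\textbf{Step 3: split $\rho^\Gamma$ into positive and negative parts.} Take the Jordan decomposition $\rho^{\Gamma}=R_{+}-R_{-}$, with $R_\pm\geq 0$ of orthogonal support. Since $\Tr\rho^\Gamma=1$ and $\|\rho^\Gamma\|_1=\Tr R_++\Tr R_-$, we get $\Tr R_-=(\|\rho^\Gamma\|_1-1)/2=\mathcal{N}(\rho)$ by Definition~\ref{def:Negativity}. It follows that
\[
\Tr(\rho^\Gamma Q)=\Tr(R_+Q)-\Tr(R_-Q)\geq -\Tr(R_-Q)\geq -\|Q\|_\infty \Tr R_- .
\]
Here the first inequality uses $\Tr(R_+Q)\geq 0$, as both operators are positive semidefinite. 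The second is the standard bound $\Tr(AB)\leq \|B\|_\infty\Tr A$ for $A,B\geq 0$.

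\textbf{Step 4: conclude.} Since $Q\geq 0$, we have $\|Q\|_\infty\leq \Tr Q\leq 1$. Hence $\Tr(\rho W)\geq -\mathcal{N}(\rho)$. Because $\Tr(\rho W)<0$ by assumption, this is exactly $|\Tr(\rho W)|\leq\mathcal{N}(\rho)$.

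The only delicate point is the normalization bookkeeping. Trace one for $W$ must carry over to $\Tr Q\leq 1$, and it is the operator norm of $Q$, not its trace, that multiplies $\Tr R_-$. Both facts follow from positivity of $Q$, so I do not expect a genuine obstacle.
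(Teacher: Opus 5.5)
Your proof is correct. Like the paper, you rewrite $\Tr(\rho W)=\Tr(\rho^{\Gamma}W^{\Gamma})$ and compare with the negative spectrum of $\rho^{\Gamma}$; the two arguments differ in which factor carries the operator norm.

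The paper uses the variational characterization $\min\{\Tr(\rho^{\Gamma}Q):\,Q\ge 0,\ \Tr Q=1\}=\lambda_{\min}(\rho^{\Gamma})$, attained at the projector onto the corresponding eigenvector. It then uses $|\lambda_{\min}(\rho^{\Gamma})|\le\mathcal{N}(\rho)$. In your notation, this amounts to bounding $\Tr(R_-Q)\le\|R_-\|_\infty\Tr Q$, whereas you bound $\Tr(R_-Q)\le\|Q\|_\infty\Tr R_-$.

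Each choice gives a different refinement of the claimed inequality, and neither dominates the other.
\begin{itemize}
\item The paper's route gives $|\Tr(\rho W)|\le|\lambda_{\min}(\rho^{\Gamma})|$. This bound is saturated by the optimal decomposable witness $W=\ketbra{\lambda}{\lambda}^{T_A}$, and it is tighter when the negativity is spread over many eigenvalues.
\item Your route gives $|\Tr(\rho W)|\le\|W^{\Gamma}\|_\infty\,\mathcal{N}(\rho)$. This is tighter when $W^{\Gamma}$ has small operator norm.
\end{itemize}

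Your Step 1 also covers the general decomposable form $W=P+Q^{\Gamma}$, using $\Tr Q=1-\Tr P\le 1$. The paper's proof only treats $W^{T_A}\ge 0$, although its argument extends just as easily: when $\rho$ is detected, $\lambda_{\min}(\rho^{\Gamma})<0$, so $\lambda_{\min}(\rho^{\Gamma})\Tr Q\ge\lambda_{\min}(\rho^{\Gamma})$.
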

\begin{proof}
Denote the smallest eigenvalue of $\rho^{T_A}$ as $\lambda$, i.e. $\lambda:=\lambda_{\min}(\rho^{T_A})$, and its corresponding eigenvector as $\ket{\lambda}$. It holds that $\lambda=\Tr(\rho^{T_A}\ketbra{\lambda}{\lambda})=\Tr(\rho\ketbra{\lambda}{\lambda}^{T_A})$, which by definition equals the smallest possible expectation value $\Tr(\rho W)$ over all Hermitian operators $W$ with $W^{T_A}\geq 0$ and $\Tr(W)=1$. The negativity is the absolute value of the sum of negative eigenvalues of $\rho^{T_A}$, and therefore $\mathcal{N}(\rho)\geq|\lambda|$.
\end{proof}
Although this observation is rather common knowledge, we make it explicit because it links the negativity of partial transposition, which is a theoretical value, with witness violation, which is the main tool used in experiments to detect entanglement. Therefore, it motivates the following section from the operational point of view: the bounds derived below for the negativity will in practice also bound the largest possible violation of decomposable entanglement witnesses.

\bigskip

\noindent \textbf{Characterization of the $\text{NEGFS}_{\gamma}$ sets.}
To characterize the $\mathrm{NEGFS}{\gamma}$ sets, we must first determine the corresponding value $\alpha_{+}$ associated to that set. To this end we restrict our analysis to PPS whose negativity is characterized in the following lemma:

\begin{lemma}
\label{lemma:NegativityIsotropic}
Given a normalized pure state $\ket{\psi}=\sum_{i}a_{i}\ket{ii}\in \mathbb{C}^{N}\otimes\mathbb{C}^{M}$, with Schmidt coefficients $a_i$, the negativity of its PPS with $p= D/(D+\alpha)$ is given by
\begin{equation}
\label{eq:NegativityIsotropic} 
\mathcal{N}(\rho)=\frac{1}{D+\alpha}\sum^{\chi}_{\substack{i>j \\ 
a_ia_j>\alpha^{-1}}}\big( \alpha a_{i}a_{j}-1\big). 
\end{equation}
\end{lemma}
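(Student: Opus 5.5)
The plan is to compute the spectrum of the partial transpose $\rho^{\Gamma}$ explicitly. With $p=D/(D+\alpha)$ we have $1-p=\alpha/(D+\alpha)$ and $p/D=1/(D+\alpha)$, so
\begin{equation*}
\rho=\frac{1}{D+\alpha}\left(\alpha\ketbra{\psi}{\psi}+\mathds{1}_D\right),\qquad \rho^{\Gamma}=\frac{1}{D+\alpha}\left(\alpha(\ketbra{\psi}{\psi})^{\Gamma}+\mathds{1}_D\right),
\end{equation*}
since the identity is invariant under partial transposition. The eigenvalues of $\rho^{\Gamma}$ are therefore $(1+\alpha\mu)/(D+\alpha)$, where $\mu$ ranges over the eigenvalues of $(\ketbra{\psi}{\psi})^{\Gamma}$. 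The whole task reduces to the well-known spectrum of the partial transpose of a pure state in Schmidt form.

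For $\ket{\psi}=\sum_i a_i\ket{ii}$ we get $(\ketbra{\psi}{\psi})^{\Gamma}=\sum_{i,j}a_ia_j\ketbra{ji}{ij}$ when transposing the first factor. I will decompose $\mathbb{C}^N\otimes\mathbb{C}^M$ into invariant blocks.
\begin{itemize}
\item On each $\ket{ii}$ with $i\le\chi$, the operator acts as $a_i^2$.
\item On each pair $\{\ket{ij},\ket{ji}\}$ with $i<j\le\chi$, it acts as $a_ia_j\sigma_x$, with eigenvalues $\pm a_ia_j$ on the symmetric and antisymmetric combinations.
\item On all remaining basis vectors (indices beyond $\chi$ or beyond $N$ on the $M$ side), it acts as $0$.
\end{itemize}
Checking that these blocks exhaust the space and are invariant is routine.

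It follows that the eigenvalues of $\rho^{\Gamma}$ are
\begin{equation*}
\frac{1+\alpha a_i^2}{D+\alpha}>0,\qquad \frac{1\pm\alpha a_ia_j}{D+\alpha}\ (i<j),\qquad \frac{1}{D+\alpha}.
\end{equation*}
For $\alpha\ge0$, the only possibly negative ones are $(1-\alpha a_ia_j)/(D+\alpha)$, and these are negative exactly when $a_ia_j>\alpha^{-1}$.

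Since $\rho^{\Gamma}$ is Hermitian with unit trace, $\|\rho^{\Gamma}\|_1=1+2\sum|\lambda_-|$. Hence $\mathcal{N}(\rho)$, as given in Def.~\ref{def:Negativity}, is the sum of the absolute values of the negative eigenvalues. This gives Eq.~\eqref{eq:NegativityIsotropic}, with the sum over unordered pairs $i>j$ restricted to $i,j\le\chi$ and $a_ia_j>\alpha^{-1}$.

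The main obstacle is only bookkeeping: making sure every pair is counted once and that the $M-N$ extra dimensions contribute only positive eigenvalues $1/(D+\alpha)$. I would also note that we assume $\alpha>0$, which is the relevant regime $\alpha_+$; for $\alpha\le0$ the state is PPT in the relevant range, or the condition $a_ia_j>\alpha^{-1}$ is vacuous, consistent with the formula giving zero.
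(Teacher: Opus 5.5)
Your proof is correct and follows the same route as the paper: write $\rho^{\Gamma}=\frac{1}{D+\alpha}\big(\alpha(\ketbra{\psi}{\psi})^{\Gamma}+\mathds{1}\big)$, use the block structure of $(\ketbra{\psi}{\psi})^{\Gamma}$ ($a_i^2$ on $\ket{ii}$, $\pm a_ia_j$ on each pair $\{\ket{ij},\ket{ji}\}$, zero elsewhere), and sum the absolute values of the negative eigenvalues; you actually carry this through for the PPS, whereas the paper's terse proof only writes out the pure-state case. One correction to your closing remark: for $\alpha<0$ the condition $a_ia_j>\alpha^{-1}$ holds for every pair rather than being vacuous, so the formula as written would give a negative value (for $\chi\geq 2$) rather than zero, even though the state is PPT for $\alpha\in[-1,0)$; the lemma should therefore simply be read for $\alpha>0$, as you assume.
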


\begin{proof}
Due to the linearity of partial transposition,  $\mathcal{N}(\rho)$ depends only on the negative eigenvalues of $\ketbra{\psi}{\psi}^{\Gamma}$. Using the Schmidt form: $\ket{\psi} = \sum_{i=1}^{\chi} a_i\ket{ii}$, $\ketbra{\psi}^{\Gamma} = \bigoplus_{i=1}^\chi a_i^2\ketbra{ii}\bigoplus_{i>j}a_ia_j(\ketbra{ij}{ji}+\ketbra{ji}{ij})$, we obtain $\mathcal{N}(\ketbra{\psi}) = \sum_{i>j}a_ia_j= (\sum_ia_i)^2/2-1/2$. 
\end{proof}

We illustrate Lemma~\ref{lemma:NegativityIsotropic} in Fig.~\ref{fig:Negativities}, where we display the maximal attainable negativity of a PPS as a function of the noise parameter $p$, for several pure states $\{\ket{\psi_i}\} \in \mathbb{C}^{6} \otimes \mathbb{C}^{6}$ with Schmidt ranks $\chi_i = 2, \dots, 6$ and uniform vector of Schmidt coefficients. We observe that the largest compatible product of Schmidt coefficients, $a_i a_j$, occurs when $\chi = 2$. Therefore, although large negativity is achieved only for states with maximal Schmidt rank, PPS with smaller Schmidt rank are more robust to noise (see the inset of the figure), as first noted in Ref.~\cite{vidal_robustness_1999}. Moreover, we find that if
\begin{equation}
\label{eq:pchi}
    2(\chi - 1) \leq \alpha_{+} \leq 2\chi,
\end{equation}
then the most entangled PPS compatible with this value of $\alpha_{+}$ comes from a pure state with Schmidt rank $\chi$.

As also shown in Fig.~\ref{fig:Negativities}, Haar-random unitaries are highly entangling, since they transform diagonal spectra into states with large negativity. Nonetheless, our numerical simulations~\cite{qutip_random_4_0_2} indicate that most Haar-random unitaries achieve negativity values significantly below the absolute maximum~\cite{datta_negativity_2010}.

With the above results for PPS, we proceed now to derive the conditions on the spectra needed to ensure that $\rho\in NEGFS_{\gamma}$.

\begin{theorem}
\label{thm:AlphaNegativity} 
 Let $\rho \in \mathcal B(\mathbb{C}^{N}\otimes\mathbb{C}^{M})$  be a normalized state ($N\leq M$, $D=N M$) with non-decreasingly ordered spectrum $\boldsymbol{\lambda} = \{\lambda_i^{\uparrow} \}_{i=0}^{D-1}$ 
 and $\sum_{i=0}^{D-1}\lambda_i = 1$. Given $\gamma$, there exists a parameter $\tau\leq N$, 
$\tau\in\mathbb{N}$, such that
\begin{equation}
\label{eq:RangeGammaNegativity}
\frac{\tau(\tau-1)}{2(2\tau+D)}>\gamma\geq\frac{(\tau-1)(\tau-2)}{2(2\tau+D-2)}\quad\text{for } 2\leq \tau < N,
\end{equation}
\begin{equation}
\label{eq:RangeGammaNegativity2}
\frac{N-1}{2}\geq\gamma\geq\frac{(N-1)(N-2)}{2(D+2N-2)}\quad\text{for } \tau = N.
\end{equation}
Then, fixing $\alpha_{+}=\frac{\tau^2-\tau+2D\gamma}{\tau-2\gamma-1}$, if
\begin{eqnarray}
&\lambda_{0}&\geq \frac{1}{D+\alpha_{+}}, \quad\text{or}\\
 \label{eq:GeneralAlphaPMNeg}
&K&\cdot \sum_{i=0}^{c-1} \lambda_i + \left[D -K \cdot c + \alpha_+ \right] \cdot \lambda_c \geq 1
\end{eqnarray}
then $\rho\in \text{NEGFS}_{\gamma}$, where $ K= \left( 1 + \alpha_+ \right)$, and $c = \left\lceil \frac{D-1}{1+\alpha_+} \right\rceil$.
\end{theorem}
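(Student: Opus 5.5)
The plan is to invoke Theorem~\ref{thm:InverseMapSNFS} with $\alpha_-=-1$ and obtain the spectral conditions from Lemma~\ref{Lemma:GeneralAlphaPM}. The whole task then reduces to one claim: for the stated $\alpha_+$, every $\sigma=\Lambda_{\alpha_+}(\rho)$ with $\rho\geq0$ satisfies $\mathcal N(\sigma)\leq\gamma$ after normalization. Once this holds, positivity of $\Lambda_{\alpha}^{-1}(\sigma)$ certifies $\sigma\in\mathrm{NEGFS}_\gamma$. Lemma~\ref{Lemma:GeneralAlphaPM} with $K=1+\alpha_+$ and $c=\lceil (D-1)/(1+\alpha_+)\rceil$ turns this certificate into the linear eigenvalue conditions in the statement. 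The condition $\lambda_0\geq 1/(D+\alpha_+)$ is the pure-$\alpha_+$ face, i.e.\ $\Lambda_{\alpha_+}^{-1}(\sigma)\geq0$ directly. The condition \eqref{eq:GeneralAlphaPMNeg} describes the convex hull obtained by also allowing $\alpha\in[-1,0]$, which maps into the SEPFS region. Convexity of the negativity (and hence of $\mathrm{NEGFS}_\gamma$, disregarding saturation) makes the hull admissible.

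\textbf{Reduction to pure states.} Since $\Lambda_{\alpha}$ is linear and $\mathcal N$ is convex, it suffices to bound $\mathcal N$ on the normalized images of pure states, i.e.\ on the PPS with $p=D/(D+\alpha_+)$. Invariance of $\mathcal N$ under local unitaries lets me take $\ket{\psi}=\sum_i a_i\ket{ii}$, so Lemma~\ref{lemma:NegativityIsotropic} applies and gives
\[
\mathcal N=\frac{1}{D+\alpha_+}\sum_{i>j,\ a_ia_j>1/\alpha_+}(\alpha_+a_ia_j-1).
\]

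\textbf{Maximizing over Schmidt vectors.} I then maximize this quantity over $\boldsymbol a$. I expect the maximum for fixed support size $\tau$ to be attained by the uniform vector $a_i=1/\sqrt{\tau}$. On the region where all pairs contribute, the sum equals $\alpha_+\big((\sum_i a_i)^2-1\big)/2-\tau(\tau-1)/2$, and this is maximized by equal coefficients via Cauchy--Schwarz. Pairs with $a_ia_j\leq 1/\alpha_+$ are dropped, which only helps concentrate the weight, so the optimum can be taken uniform on some support $\tau$. This gives
\[
\mathcal N_\tau(\alpha)=\frac{\tau(\tau-1)}{2}\cdot\frac{\alpha/\tau-1}{D+\alpha},
\]
valid for $\alpha>\tau$. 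Each $\mathcal N_\tau(\alpha)$ is increasing in $\alpha$. Comparing consecutive $\tau$ identifies which $\tau$ dominates on each $\alpha$-window, consistent with the observation around Eq.~\eqref{eq:pchi}. Setting $\mathcal N_\tau(\alpha_+)=\gamma$ and solving gives
\[
\alpha_+(\tau-2\gamma-1)=\tau^2-\tau+2D\gamma,
\]
which is exactly the stated $\alpha_+$.

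\textbf{Consistency of the ranges.} To justify the $\gamma$ windows, I evaluate $\mathcal N_\tau$ and $\mathcal N_{\tau-1}$ at the crossover points $\alpha=2\tau$ and $\alpha=2(\tau-1)$. This reproduces the endpoints $\tau(\tau-1)/(2(2\tau+D))$ and $(\tau-1)(\tau-2)/(2(2\tau+D-2))$ in \eqref{eq:RangeGammaNegativity}. At $\tau=N$ the window extends up to $\alpha\to\infty$, where $\mathcal N\to(N-1)/2$, giving \eqref{eq:RangeGammaNegativity2}.

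\textbf{Main obstacle.} The hardest step is the rigorous claim that, for $\alpha_+$ in the window attached to $\tau$, no Schmidt vector (uniform or not, of any rank) beats $\mathcal N_\tau(\alpha_+)$. My first approach fixes the set of active pairs and maximizes the concave-in-$\boldsymbol a$ Lagrangian. I would then show by symmetrization that the optimum is uniform on its support. Finally, I compare $\mathcal N_\tau$ across $\tau$ using the crossover computation, which reduces to checking that $\mathcal N_{\tau+1}-\mathcal N_\tau$ changes sign at $\alpha=2\tau$.
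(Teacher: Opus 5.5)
Your route is the paper's: Theorem~\ref{thm:InverseMapSNFS} with $\alpha_-=-1$ and Lemma~\ref{Lemma:GeneralAlphaPM} for the spectral conditions, reduction to PPS by linearity and convexity, Lemma~\ref{lemma:NegativityIsotropic}, dominance of the uniform rank-$\tau$ vector on $[2(\tau-1),2\tau]$, and solving $\mathcal N_\tau(\alpha_+)=\gamma$. Your $\mathcal N_\tau$, the crossover at $\alpha=2\tau$, the window endpoints and $\alpha_+$ are all correct.

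The gap is the step you flag yourself. You need that every unit Schmidt vector $a$ satisfies $F(a):=\sum_{i<j}(\alpha a_ia_j-1)_+\le\max_{\tau\le N}(\tau-1)(\alpha-\tau)_+/2$. Cauchy--Schwarz only covers vectors whose active set $E=\{(i,j):\alpha a_ia_j>1\}$ contains all pairs of the support. ``Dropping pairs concentrates the weight'' is not an argument. Your sketched fix also fails as stated. For fixed $E$, $\alpha\sum_E a_ia_j=\frac{\alpha}{2}a^{T}A_Ea$, where $A_E$ is the adjacency matrix; it is traceless, hence indefinite, so the objective is not concave in $a$. Passing to $b_i=a_i^2$ makes each term concave, but the objective is then only invariant under automorphisms of $E$, so averaging does not yield a uniform vector. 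Indeed, for the threshold graphs arising here, the maximizer on the sphere (the Perron vector of $A_E$) is non-uniform unless $E$ is complete on its vertex set. These are exactly the cases to rule out, e.g.\ one large coefficient $x$ and many small $y$ with $\alpha xy>1\ge\alpha y^2$. The paper does not prove this step either: it relies on Eq.~\eqref{eq:pchi}, read off from the uniform-coefficient curves of Fig.~\ref{fig:Negativities}.

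The step can be closed as follows. With $a$ sorted, $E$ is a threshold graph. After discarding isolated vertices it is connected, since the vertex of largest coefficient is adjacent to every non-isolated vertex. Let it have $m\ge1$ edges on $n'\le N$ vertices. Rayleigh gives $F(a)\le\frac{\alpha}{2}\lambda_1(A_E)-m$, and Hong's bound for connected graphs gives $\lambda_1(A_E)\le\sqrt{2m-n'+1}$. Write $m=\tau(\tau-1)/2+s$ with $1\le s\le\tau$; this forces $n'\ge\tau+1$. Then $2m-n'+1\le\tau^2-2\tau+2s\le(\tau-1+s/\tau)^2$, the gap being $(1-s/\tau)^2$, hence
\[
F(a)\le\frac{(\tau-1)(\alpha-\tau)}{2}+s\Big(\frac{\alpha}{2\tau}-1\Big).
\]
The right-hand side is affine in $s\in[0,\tau]$ with endpoint values $(\tau-1)(\alpha-\tau)/2$ and $\tau(\alpha-\tau-1)/2$. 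These are the uniform values for supports $\tau$ and $\tau+1\le n'\le N$. So no Schmidt vector beats the best uniform one, and your crossover computation then finishes the proof.
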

\begin{proof}
Given a pure state of Schmidt rank $\chi$, $\ket{\psi_\chi}$, the image of the reduction map is $\Lambda_\alpha(\ketbra{\psi_\chi}{\psi_\chi}) = \alpha\ketbra{\psi_\chi}{\psi_\chi} + \mathbb{I}:=\rho_\chi$, which corresponds to the unnormalized PPS addressed in Lemma~\ref{lemma:NegativityIsotropic}. From Eq.~\eqref{eq:pchi}, we derive the ranges of $\alpha_{+}$ for which the mixing with $\ket{\psi}$ with SR $\chi$ yields the maximal negativity.
\end{proof}

Theorem~\ref{thm:AlphaNegativity}, which is exact for PPS,  establishes a sufficient condition to determine whether a bipartite state $\rho$ belongs to $\mathrm{NEGFS}_{\gamma}$ based solely on $c = \left\lceil \frac{D-1}{1+\alpha_+}\right\rceil$ eigenvalues. Equivalently, this result identifies the corresponding value of $\alpha_{+}$ for each set $\mathrm{NEGFS}_{\gamma}$ in arbitrary dimensions. It is also relevant to study its asymptotic behavior, $N = M \rightarrow \infty $. We obtain a scaling for the turning points in Eq.~\eqref{eq:pchi} as $p=\frac{D}{D+2(\chi-1)}\propto 1$. Nevertheless, the negativity bounds introduced in Eq.~\eqref{eq:RangeGammaNegativity} $\gamma\propto\frac{\tau^2}{2D}\propto\frac{1}{2}$ for $\tau\sim N$. This indicates that, as the dimension increases, we would be able to find greater negativities for small values of $\alpha$ ($p\rightarrow 1$). 
\begin{figure}
    \centering
    \includegraphics[width=\linewidth]{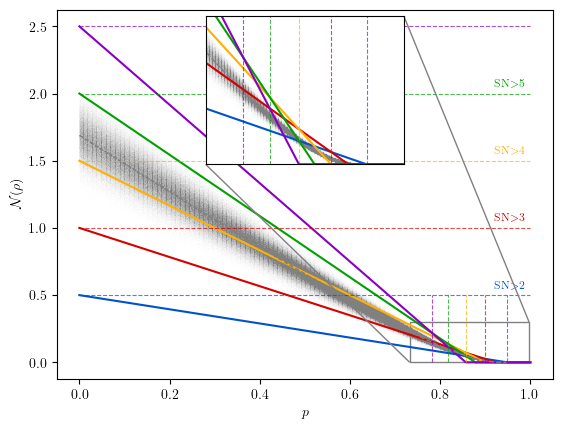}
    \caption{Maximal negativity of PPS of Schmidt rank $\chi$ as a function of $p =D/(D+\alpha)$ for $N = M = 6$. The dashed lines indicate the maximal negativity for pure Schmidt rank $\chi$ states (or convex combinations of them). In black, we depict the negativity obtained for each spectrum, maximizing over a sample of $10^4$ Haar random unitaries.} 
    \label{fig:Negativities}
\end{figure}

This theorem can be used as a criterion to bound the maximal negativity that an arbitrary $\rho$ might have given its eigenvalues $\boldsymbol{\lambda}$. We illustrate this in Fig.~\ref{fig:NEGFSnumerics}, where for simplicity we show the two-qubit case, whose maximal negativity is bounded by $\mathcal{N}(\ketbra{\psi})\leq 0.5$. We display the maximal $\mathcal{N}(\rho)$ obtainable according to Theorem~\ref{thm:AlphaNegativity} for different states (solid lines) and compare them with the maximal negativity obtained under random unitaries (dashed lines).
\begin{figure}[h]
    \centering
    \includegraphics[width=\linewidth]{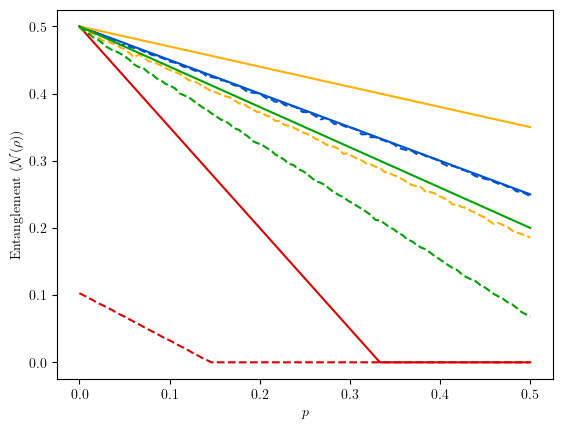}
    \caption{ Maximal negativity obtained from the spectrum with $10^4$ random unitary matrices given different parametrized $2$-qubit spectra (dashed lines) along with our bound on the maximal negativity obtainable for each spectrum (solid line) following Theorem~\ref{thm:AlphaNegativity}. The depicted spectra are of the form $(1-p/2,p/6,p/6,p/6)$ in blue, $((1-p)/2,(1-p)/2,p/2,p/2)$ in red, $(1-19p/30,p/3,p/5,p/10)$ in green and $(1-13p/15,p/3,p/3,p/5)$ in yellow.}
    \label{fig:NEGFSnumerics}
\end{figure}
Importantly,  Theorem~\ref{thm:AlphaNegativity} does not provide an upper bound on the negativity for rank-deficient states as illustrated by the fact that for $p=0$, all solid lines recover the trivial bound on the negativity.  Furthermore, since Theorem~\ref{thm:AlphaNegativity} is exact for PPS, the corresponding upper bound for the blue spectrum is nearly saturated, as it coincides with the PPS family analyzed above. For more general spectra, however, the upper bound becomes looser.

The entanglement negativity is arguably the central bipartite entanglement measure for NPT states, as it provides useful bounds on several demanding entanglement quantifiers, such as the Schmidt number~\cite{eltschka_negativity_2013} or the concurrence~\cite{eltschka_partial_2015}, both of which are considerably more difficult to compute.

\subsection{Bounding a complete family of entanglement monotones}
\label{sec:mks}
We conclude this section by discussing the use of the negativity to bound the cumulative sums of Schmidt coefficients $m_k$ as defined in Eq.~\eqref{eq:PartialSums}, when extended to mixed states  $\rho$ through Eq.~\eqref{eq:MixedMeasure}. These are of particular interest, since they constitute a complete family of monotones that determine LOCC-interconvertibility of the pure states decomposing $\rho$. Similar approaches have been used to bound entanglement measures and monotones from the negativity of positive maps~\cite{eltschka_partial_2015,chen_concurrence_2005,chen_entanglement_2005,fei_r_2006,li_measurable_2010,wootters_entanglement_1998,hill_entanglement_1997}.

\begin{proposition}
Let $\rho\in\mathcal{B}(\mathbb{C}^{N}\otimes\mathbb{C}^{M})$, with $N\leq M$. Then, its negativity $\mathcal{N}(\rho)$ is upper bounded by
\begin{equation}\label{eq:NegBoundMk}
\mathcal{N}(\rho)\leq (kx + (N-k)y)^2/2-1/2\,,
\end{equation}
where $x=\sqrt{(1-m_k)/k}$ and $y=\sqrt{m_k/(N-k)}$.
\end{proposition}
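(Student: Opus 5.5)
We first treat pure states and then pass to mixed states by convexity. Here $m_k(\rho)$ is the mixed-state extension of Eq.~\eqref{eq:MixedMeasure}: the infimum over decompositions of the largest $m_k(\ket{\psi_i})$. The bound should also be read in the regime where its right-hand side is increasing in $m_k$, i.e.\ $m_k\le (N-k)/N$. Beyond that point the right-hand side is taken at its maximum $(N-1)/2$, which is the trivial bound.

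\emph{Pure states.} By the proof of Lemma~\ref{lemma:NegativityIsotropic}, a pure state $\ket{\psi}=\sum_i a_i\ket{ii}$ has $\mathcal{N}(\ketbra{\psi})=(\sum_i a_i)^2/2-1/2$. It therefore suffices to maximize $\sum_{i=1}^N a_i$ subject to $\sum_i a_i^2=1$, $a_1\ge\dots\ge a_N\ge 0$, and the fixed value $\sum_{i\le k}a_i^2=1-m_k$. The constraint separates the two blocks $\{1,\dots,k\}$ and $\{k+1,\dots,N\}$, so we bound each block with Cauchy--Schwarz:
\[
\sum_{i\le k}a_i\le\sqrt{k(1-m_k)}=kx,\qquad \sum_{i>k}a_i\le\sqrt{(N-k)m_k}=(N-k)y .
\]
This gives $\mathcal{N}(\ketbra{\psi})\le (kx+(N-k)y)^2/2-1/2$. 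Equality holds for the flat profile $a_i=x$ for $i\le k$ and $a_i=y$ for $i>k$. This profile respects the ordering exactly when $x\ge y$, which is equivalent to $m_k\le (N-k)/N$, so the bound is tight for pure states.

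\emph{Inequality version.} Write $f(m):=\big(\sqrt{k(1-m)}+\sqrt{(N-k)m}\big)^2/2-1/2$. This function is concave in $m$ and increasing on $[0,(N-k)/N]$. Consequently, any pure state with $m_k(\psi)\le m$ also satisfies $\mathcal{N}\le f(m)$, in the monotone regime described above.

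\emph{Mixed states.} Take a decomposition $\rho=\sum_i p_i\ketbra{\psi_i}$ that is close to optimal for Eq.~\eqref{eq:MixedMeasure}, so that $\max_i m_k(\psi_i)\le m_k(\rho)+\epsilon$. Convexity of the negativity (the trace norm is convex and partial transposition is linear) gives
\[
\mathcal{N}(\rho)\le\sum_ip_i\mathcal{N}(\psi_i)\le\max_i f(m_k(\psi_i))\le f(m_k(\rho)+\epsilon).
\]
Letting $\epsilon\to0$ and using continuity of $f$ gives Eq.~\eqref{eq:NegBoundMk}.

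\emph{Main obstacle.} The only delicate point is the interplay between the ordering constraint and the monotonicity of $f$ when passing to mixed states. Because Eq.~\eqref{eq:MixedMeasure} uses a maximum over the decomposition, the individual $m_k(\psi_i)$ need not equal $m_k(\rho)$. They are only bounded above by it, so we need $f$ to be nondecreasing up to $m_k(\rho)$. This is why we state the bound in the regime $m_k\le (N-k)/N$, or equivalently replace $f$ by its monotone envelope, which is constant at $(N-1)/2$ beyond the peak.
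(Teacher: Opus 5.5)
Your proof is correct. It agrees with the paper on the ingredients: the pure-state formula $\mathcal{N}=((\sum_i a_i)^2-1)/2$, convexity of the negativity, and the flat profile as the extremizer at fixed $m_k$. Your two-block Cauchy--Schwarz argument actually proves what the paper's step 2 only asserts. The routes differ in how the elements of a decomposition are tied to $m_k(\rho)$.

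The paper's step 3 singles out the element $\ket{\psi_{j^*}}$ of largest negativity. It then argues, by "reversing" the extremality argument, that this same element also carries the largest $m_k$. That identification is not justified in general: the most negative element of a decomposition need not have the largest $m_k$. Moreover, at fixed $\sum_i a_i$ the flat profile actually \emph{minimizes} $m_k$, since any profile with the same negativity $\nu$ satisfies $f(m_k)\ge \nu$. The identification is also unnecessary. What is really needed is exactly your observation that $f(m)=\bigl(\sqrt{k(1-m)}+\sqrt{(N-k)m}\bigr)^2/2-1/2$ is nondecreasing. Then $m_k(\psi_i)\le\max_j m_k(\psi_j)\le m_k(\rho)+\epsilon$ bounds every element at once, with no need to identify a single element that is extremal for both quantities. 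Your $\epsilon$-optimal decomposition also handles the infimum in the mixed-state definition properly, which the paper glosses over.

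One simplification: your regime caveat is vacuous. For any pure state, the $k$ largest of the $N$ squared Schmidt coefficients sum to at least $k/N$. Hence $m_k(\psi)\le (N-k)/N$, and therefore $m_k(\rho)\le (N-k)/N$ as well. So $f$ is only ever evaluated on its increasing branch, which is exactly where $x\ge y$ and your pure-state bound is tight. You can run the $\epsilon$-step with $\min\{m_k(\rho)+\epsilon,(N-k)/N\}$. The monotone envelope is never needed, and the proposition holds exactly as stated with no restriction on $m_k$.
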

\begin{proof}
1. As $\mathcal{N}(\ketbra{\psi}{\psi})=(\sum_ia_i)^2/2-1/2$, and the negativity of $\rho=\sum_jp_j\ketbra{\psi_j}{\psi_j}$ is convex, i.e., $\mathcal{N}(\rho)\leq \sum_jp_j\mathcal{N}(\ketbra{\psi_j}{\psi_j})$, it follows that $\mathcal{N}(\rho)\leq\ [(\sum_ia_i^*)^2-1]/2$, where $\{a_i^*\}$ are the Schmidt coefficients of the state $\ket{\psi_{j^*}}$ with maximal negativity among all states $\ket{\psi_j}$, for any decomposition of $\rho$. 

2. For a fixed $m_k(\ketbra{\psi_{j^*}}{\psi_{j^*}})=1-\sum_{i=1}^k {a_i^{*}}^2$, the negativity $\mathcal{N}(\ketbra{\psi_{j^*}}{\psi_{j^*}})$ is maximized for $a_1^*=...=a_k^*:=x$ and $a_{k+1}^*=...=a_N^*:=y$, where
$ x = \sqrt{\frac{1-m_k}{k}}$ and  $y=\sqrt{\frac{m_k}{N-k}}$
are fixed due to the normalization constraint $\sum_i {a_i^{*}}^2 = kx^2+(N-k)y^2=1$, leading to Eq.~\eqref{eq:NegBoundMk}.

3. It remains to be shown that the state $\ket{\psi_{j^*}}$ that maximizes $\mathcal{N}(\ketbra{\psi_j}{\psi_j})$ among all pure states in a decomposition of $\rho$ also maximizes the monotone $m_k$ for every $k$. This follows by reversing the above reasoning: for fixed $\sum_i a_i$ (i.e., fixed negativity), the quantity $m_k$ is maximized precisely when the Schmidt coefficients take the form  
$a_1 = \cdots = a_k = x$ and $a_{k+1} = \cdots = a_N = y$.  
Thus, the state maximizing $\mathcal{N}(\ketbra{\psi_{j^*}}{\psi_{j^*}})$ also maximizes $m_k$. The tightest possible inequality is obtained by selecting the decomposition of $\rho$ in which $\ket{\psi_{j^*}}$ simultaneously attains the smallest possible values of $\mathcal{N}$ and $m_k$ among all optimal decompositions.
\end{proof}
\begin{figure}[tbp]
    \centering
    \includegraphics[width=\linewidth]{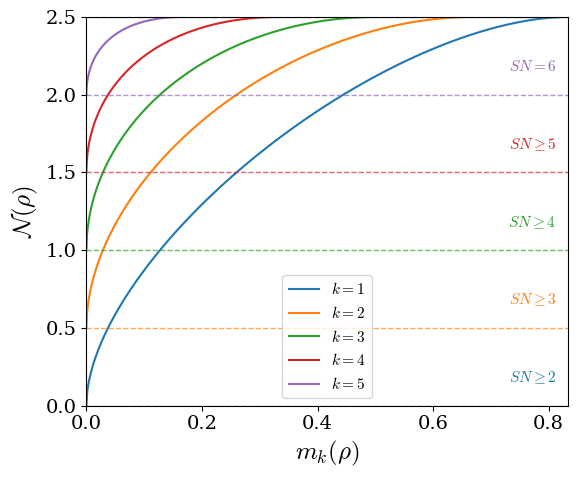}
    \caption{ Maximal negativity compatible with $m_k$, for mixed bipartite states of local dimension $N=6\leq M$ with respect to the measure $m_k$ in Eq.~\eqref{eq:PartialSums}. Here $m_k$ is defined according to Eq.~\eqref{eq:MixedMeasure}, and provides noise robustness to the criteria derived for Schmidt number. For each value of $k$, the horizontal dashed lines delimit the regions where the Schmidt number (SN) is above certain thresholds. 
    } 
    \label{fig:placeholder}
\end{figure}
For the case $M_{k'>k}=0$, the Schmidt rank of $\ket{\psi_{j^*}}$ is smaller or equal than $k$, and therefore $\rho$ has Schmidt number smaller or equal than $k$. This is a fundamentally distinguished case because no more than $k$ degrees of freedom need to be entangled in order to construct the state $\rho$, and therefore we shall devote the coming section to this case.

\section{Schmidt number from spectrum}
\label{sec:SNFS}
In this last section we address the characterization of the set of bipartite states whose SN cannot exceed a given value $\chi$, under the action of global unitaries. 
The SN is an entanglement measure in the form of Eq.~\eqref{eq:MixedMeasure}, which applies also to PPT-entangled states. Certification of SN can be done by using $\kappa$-positive but $(\kappa+1)$-negative maps \cite{terhal_schmidt_2000}, through 
SN-witnesses \cite{sanpera_schmidt-number_2001} (see Section~\ref{sec:SNW}), symmetric informationally complete measurements \cite{tavakoli_enhanced_2024, wang_schmidt_2024}, covariance matrix \cite{liu_bounding_2024} or hierarchy conditions of linear systems \cite{johnston_complete_2022}. Nevertheless, for full rank states, most of the known methods in the literature fail in bounding the entanglement properties of the states under scrutiny.

In accordance with the previous concepts, we introduce first the notion of Schmidt number from spectrum $\text{SNFS}_{\chi}$, which following Def.~\ref{def:EFS} reads (see also \cite{mallick_absolute_2026}):
\begin{definition}
\label{def:SNFS}
 A state $\rho \in \mathcal B(\mathbb{C}^{N}\otimes\mathbb{C}^{M})$ has $\chi$-SN from spectrum if, for any global unitary matrix $U\in \mathcal{U}(D)$, $SN(U\rho U^{\dag}) \leq \chi$ and the bound $\chi$ is saturated for at least one such unitary.
\end{definition}

Accordingly, $\mathrm{SNFS}_{\chi}$ denotes the set of states $\rho$ whose Schmidt number is at most $\chi$, certified directly from their spectra. The sets $\mathrm{SNFS}_{\chi}$ are convex, compact, and form a hierarchical nested structure $\mathrm{SNFS}_{1} \subset \mathrm{SNFS}_{2} \subset \cdots \subset \mathrm{SNFS}_{\min(N,M)}.
$
The innermost set, $\mathrm{SNFS}_{1}$, coincides with the separable-from-spectrum set (SEPFS), while the outermost $\mathrm{SNFS}_{\min(N,M)}$, spans the entire space $\mathcal{B}(\mathbb{C}^{N} \otimes \mathbb{C}^{M})$.

In what follows, we provide inner and outer characterizations of the sets $\mathrm{SNFS}_{\chi}$, all based on estimating the values of $\alpha_{+}(\chi)$ as close as possible to the tight threshold $\tilde{\alpha}_{+}(\chi)$, defined as the largest value of $\alpha_{+}(\chi)$ such that $\Lambda_{\alpha}(\rho) \in \mathrm{SNFS}_{\chi}$ for all $\rho$. To this end, we derive a lower bound for each set, along with two upper bounds, one based on the negativity and the other on the positive reduction criterion. These results are employed to certify which is the maximal SN of a state under global unitaries.

\subsection{$\text{SNFS}_{\chi}$ from Schmidt number robustness}
\label{sec:SNFSROB}
In analogy to the well known concept of robustness of a state \cite{vidal_robustness_1999}, it is possible to define a SN-robustness~\cite{clarisse_schmidt_2006} that quantifies how much a given state $\rho$ with $SN \leq\chi$  might be mixed while having a SN $\leq \chi$. Formally:  
\begin{definition}
\label{def:RandomSchmidtNumberRobustness}\cite{clarisse_schmidt_2006} The random SN-$\chi$ robustness is defined as the minimal $R_{\chi}$ such that
\begin{equation}
\label{eq:RandomSNRobustness}
\text{SN}\left(\frac{1}{1+R_{\chi}}(\rho+R_{\chi}\cdot\frac{\mathds{1}}{D})\right)\leq{\chi},
\end{equation}
for any possible state $\rho\in\mathbb{C}^{N}\otimes\mathbb{C}^{M}$.
\end{definition}

To set a lower bound we note that robustness is trivially related to the action of the reduction map introduced in Theorem~\ref{thm:InverseMapSNFS}. Then, if for a given family of states, namely $\rho_{F}$, we obtain a value $\alpha_+=\alpha_{F}$ such that $\Lambda_{\alpha_{F}}(\rho_{F})\in \text{SNFS}_{\chi}$ for $\alpha\in[-1,\alpha_{F}(\chi)]$, the value $\alpha_{F}(\chi)\leq \tilde{\alpha}(\chi)$ acts as a lower bound for the characterization of the $\text{SNFS}_{\chi}$ set, since the latter one has to fulfill Theorem~\ref{thm:InverseMapSNFS} for any $\rho\in\mathcal{B}(\mathbb{C}^{N}\otimes\mathbb{C}^{M})$, and not just for the restricted family $\rho_{F}$. 

In~\cite{clarisse_schmidt_2006}, one of such specific bounds for $\mathbb{C}^{N}\otimes\mathbb{C}^{N}$ was derived, which reads $\alpha_{LB}=\frac{2(N\chi-1)}{N-\chi}$. We use it in the following theorem:

\begin{theorem}
\label{thm:Alpha_LB_SNFS} 
 Let $\rho \in \mathcal B(\mathbb{C}^{N}\otimes\mathbb{C}^{N})$  be a normalized state of total dimension $D=N^2$, and $\boldsymbol{\lambda} = \{\lambda_i^{\uparrow} \}_{i=0}^{D-1}$ the corresponding eigenvalues in a non-decreasing order, with $\sum_{i=0}^{D-1}\lambda_i = 1$. If
\begin{equation}
\label{eq:Alpha_LB_SNFS}
\lambda_{0}\geq \frac{1}{D+\frac{2(N\chi-1)}{N-\chi}}, \quad\text{ or }
\end{equation}
 \begin{equation}
 \label{eq:Alpha_LB_SNFS2}
K \cdot \sum_{i=0}^{c-1} \lambda_i + \left[N^2 -K \cdot c + \frac{2(N\chi-1)}{N-\chi} \right] \cdot \lambda_c \geq 1,
\end{equation}
where $ K=  1 +\frac{2(N\chi-1)}{N-\chi} $, and $c = \left\lceil \frac{(N^2-1)(N-\chi)}{2(N\chi-1)+(N-\chi)} \right\rceil$, then $\rho\in\text{SNFS}_{\chi}$.\\
\end{theorem}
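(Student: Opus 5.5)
The argument will follow the template of Theorem~\ref{thm:AlphaNegativity}: we identify an admissible value of $\alpha_{+}$ for the measure SN with threshold $\chi$, and then feed it into Theorem~\ref{thm:InverseMapSNFS} and Lemma~\ref{Lemma:GeneralAlphaPM}. The first step is to show that for every $\alpha\in[-1,\alpha_{LB}]$, with $\alpha_{LB}=\frac{2(N\chi-1)}{N-\chi}$, and every state $\rho\in\mathcal B(\mathbb C^N\otimes\mathbb C^N)$, the operator $\Lambda_\alpha(\rho)$ has Schmidt number at most $\chi$.

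Since $\Lambda_\alpha$ is linear and the set of states with SN at most $\chi$ is convex, it suffices to check pure inputs $\ket{\psi}$. For these, $\Lambda_\alpha(\ketbra{\psi}{\psi})=\mathds 1+\alpha\ketbra{\psi}{\psi}$, which after normalization is the PPS with $p=D/(D+\alpha)$. Comparing with Definition~\ref{def:RandomSchmidtNumberRobustness}, this corresponds to $R_\chi=D/\alpha$.

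The random SN-$\chi$ robustness result of \cite{clarisse_schmidt_2006} states that mixing any state of $\mathbb C^N\otimes\mathbb C^N$ with white noise at weight corresponding to $\alpha\le\alpha_{LB}$ yields SN $\le\chi$. The worst case there is the maximally entangled state, for which the isotropic-state SN threshold is known. This gives the claim for $\alpha\in[0,\alpha_{LB}]$.

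For $\alpha\in[-1,0]$, $\Lambda_\alpha(\rho)$ is already separable, since the reduction-map range at $\alpha_-=-1$ yields separable outputs. This is the SEPFS statement from \cite{lewenstein_sufficient_2016}, and separable outputs certainly have SN $\le\chi$. Hence the hypothesis of Theorem~\ref{thm:InverseMapSNFS} holds with $\alpha_-=-1$ and $\alpha_+=\alpha_{LB}$.

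Theorem~\ref{thm:InverseMapSNFS} then says that any $\sigma$ with $\Lambda_\alpha^{-1}(\sigma)\ge0$ for some such $\alpha$ lies in $\mathrm{SNFS}_\chi$. By unitary covariance of $\Lambda_\alpha$, the condition $\Lambda_\alpha^{-1}(\sigma)\ge0$ is spectral: from Eq.~\eqref{eq:ReductioInversenMap} it reads $\lambda_0\ge 1/(D+\alpha)$ for $\alpha>0$ and $\lambda_{D-1}\le 1/(D+\alpha)$ for $\alpha<0$. Taking $\alpha=\alpha_{LB}$ gives Eq.~\eqref{eq:Alpha_LB_SNFS} directly.

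Convexity of $\mathrm{SNFS}_\chi$ then extends the conclusion to the convex hull of spectra satisfying either condition. Lemma~\ref{Lemma:GeneralAlphaPM} characterizes exactly this hull. Substituting $\alpha_-=-1$ and $\alpha_+=\alpha_{LB}$ gives $K=1+\alpha_{LB}$ and $c=\lceil (D-1)/(1+\alpha_{LB})\rceil$. With $D=N^2$ and $1+\alpha_{LB}=\frac{2(N\chi-1)+(N-\chi)}{N-\chi}$, this is the stated $c$, and Eq.~\eqref{eq:Alpha_LB_SNFS2} follows. This step is routine algebra.

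The main obstacle is the first step, namely justifying that the robustness bound of \cite{clarisse_schmidt_2006} applies uniformly to all pure states $\ket\psi$, and not only to the maximally entangled one. My first attempt will be to invoke the cited result as stated for arbitrary $\rho$. If only the isotropic case is available, I would fall back on twirling: a $U\otimes\bar U$ twirl maps a PPS of $\ket\psi$ to an isotropic state with smaller fidelity, but twirling lowers SN, so this goes the wrong direction. Instead I would decompose $\ket\psi$'s PPS directly into SN-$\chi$ pure states, using random-phase averaging over Schmidt-rank-$\chi$ truncations, as in the isotropic construction.
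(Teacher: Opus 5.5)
Your proof is the paper's proof. You take the random SN-$\chi$ robustness bound of \cite{clarisse_schmidt_2006}, which Definition~\ref{def:RandomSchmidtNumberRobustness} quantifies over all states, convert it to $\alpha_+=\alpha_{LB}$, and apply Theorem~\ref{thm:InverseMapSNFS} and Lemma~\ref{Lemma:GeneralAlphaPM} with $\alpha_-=-1$. Your $K$ and $c$ are correct. Your conversion $R_\chi=D/\alpha$ is also correct and is the one consistent with Definition~\ref{def:RandomSchmidtNumberRobustness}; the paper's $\alpha_{LB}=1/R_\chi$ corresponds to mixing with an unnormalized identity.

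You should, however, drop the remark that the maximally entangled state is the worst case, because it is false. The isotropic criterion $F\le\chi/N$ applied to $\mathds{1}+\alpha\ketbra{\Phi_N}{\Phi_N}$ gives $\alpha\le N(N\chi-1)/(N-\chi)$. That is the paper's $\alpha_{RED}$ with $M=N$, and it is $N/2$ times $\alpha_{LB}$. Already for $\chi=1$, $N=3$, every $\alpha\in(2,3]$ leaves the pseudo-pure state of $\ket{\Phi_3}$ separable, while the pseudo-pure state of $(\ket{00}+\ket{11})/\sqrt{2}$ is NPT. Lower-Schmidt-rank states are the more noise-robust ones, so the first step genuinely needs a bound that is uniform in $\ket\psi$. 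Citing that uniform bound, as you do in your ``first attempt'' and as the paper does, is the entire proof, and your fallback construction is not needed.
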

\begin{proof}
We make use of Lemma~\ref{Lemma:GeneralAlphaPM} considering $\alpha_{-}=-1$ and $\alpha_{LB}$, which relates with the robustness computed in~\cite{clarisse_schmidt_2006} through $\alpha_{LB}=1/R_{\chi}$.
\end{proof}
We note that Theorem~\ref{thm:Alpha_LB_SNFS} allows us to certify the SN of a state beyond SN robustness. Also in ~\cite{clarisse_schmidt_2006} a tight bound $\alpha_{C}(\chi)=2\cdot(2\chi^{2}-1)$ was conjectured. With this value, we extend the conjecture to generic states as follows:
\begin{conjecture}
\label{conj:Alpha_C_SNFS}
 Let $\rho \in \mathcal B(\mathbb{C}^{N}\otimes\mathbb{C}^{M})$  be a normalized state of total dimension $D=NM$ with $N\leq M$ and $\boldsymbol{\lambda} = \{\lambda_i^{\uparrow} \}_{i=0}^{D-1}$ the corresponding eigenvalues in a non-decreasing order, with $\sum_{i=0}^{D-1}\lambda_i = 1$. If
\begin{equation}
\label{eq:Alpha_C_SNFS}
\lambda_{0}\geq \frac{1}{D+2\cdot(2\chi^{2}-1)}, \quad\text{ or }
\end{equation} 
 \begin{equation}
 \label{eq:Alpha_C_SNFS2}
 \begin{split}
(4\chi^2-1) \cdot \sum_{i=0}^{c-1} \lambda_i &+ \Big [ D -(4\chi^2-1) \cdot c  \\ &+2(2\chi^2-1) \Big ] \cdot \lambda_c \geq 1,
\end{split}
\end{equation}
where $c = \left\lceil \frac{D-1}{4\chi^2-1} \right\rceil$, then $\rho\in\text{SNFS}_{\chi}$.\\
\end{conjecture}
Despite our efforts, we have not been able to complete the pending parts of the proof nor to find a counterexample to its claims with current state-of-the-art methods to certify SN \cite{liu_bounding_2024, johnston_complete_2022}. 
Finally, from~\cite{lewenstein_sufficient_2016} a general $\mathbb{C}^{N}\otimes\mathbb{C}^{M}$ lower bound of $\alpha_{+}=\chi+1$ has been derived, which we extend with our convex geometry approach.
\begin{theorem}
\label{thm:Alpha_L_SNFS} 
 Let $\rho \in \mathcal B(\mathbb{C}^{N}\otimes\mathbb{C}^{M})$  be a normalized state acting on a Hilbert space of global dimension $D=NM$ with $N\leq M$ and $\boldsymbol{\lambda} = \{\lambda_i^{\uparrow} \}_{i=0}^{D-1}$ the corresponding eigenvalues in a non-decreasing order, with $\sum_{i=0}^{D-1}\lambda_i = 1$. If
\begin{equation}
\label{eq:Alpha_L_SNFS}
\lambda_{0}\geq \frac{1}{D+\chi+1}, \quad\text{ or }
\end{equation}

 \begin{equation}
 \label{eq:Alpha_L_SNFS2}
(\chi+2) \cdot \sum_{i=0}^{c-1} \lambda_i + \left[D -(\chi+2) \cdot c + \chi+1 \right] \cdot \lambda_c \geq 1,
\end{equation}
where $c = \left\lceil \frac{D-1}{\chi+2} \right\rceil$, then $\rho\in\text{SNFS}_{\chi}$.\\
\end{theorem}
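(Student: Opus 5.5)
The plan is to combine Theorem~\ref{thm:InverseMapSNFS} with Lemma~\ref{Lemma:GeneralAlphaPM}, taking $\alpha_-=-1$ and $\alpha_+=\chi+1$, exactly as in the proof of Theorem~\ref{thm:Alpha_LB_SNFS}. The substantive input is that for every $\alpha\in[-1,\chi+1]$ and every state $\rho\in\mathcal B(\mathbb C^N\otimes\mathbb C^M)$, the image $\Lambda_\alpha(\rho)=\Tr(\rho)\one+\alpha\rho$ has Schmidt number at most $\chi$. For $\alpha\in[-1,0]$ this holds because $\Lambda_\alpha(\rho)$ is separable, which is the SEPFS result of~\cite{lewenstein_sufficient_2016}. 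By convexity of the set of states with $\text{SN}\leq\chi$, it then suffices to treat $\alpha=\chi+1$ and pure states $\rho=\ketbra{\psi}$, because a mixture $\one+\alpha\rho$ decomposes as a convex combination of the images $\one+\alpha\ketbra{\psi_i}$.

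The key step is therefore to show that the pseudo-pure state $\one+(\chi+1)\ketbra{\psi}$ has Schmidt number at most $\chi$ for every $\ket{\psi}$. I would proceed by duality with Schmidt-number witnesses. Take any $W$ with $\bra{\phi}W\ket{\phi}\geq0$ for all $\ket{\phi}$ of Schmidt rank at most $\chi$. We need $\Tr W+(\chi+1)\bra{\psi}W\ket{\psi}\geq 0$. Equivalently, the map $X\mapsto\Tr(X)\one-X/(\chi+1)$, which is the adjoint-type reduction map, should send the cone of $\chi$-block-positive operators into itself, or be $\chi$-positive in the appropriate sense.

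The standard fact I would use is that the generalized reduction map $X\mapsto\Tr(X)\one-X/k$ is $k$-positive. It follows from the inequality $\bra{\phi}\rho\ket{\phi}\leq k\,\Tr\rho$ tested against the projector onto the local support of a rank-$k$ Schmidt vector, together with the bound $\|\phi\|^2\lambda_{\max}(\rho_A)\leq\ldots$ on the relevant reduced operators. I would apply this with $k=\chi+1$, absorbing the offset via the factor relation $\alpha_+=\chi+1$. This is the case used in~\cite{lewenstein_sufficient_2016}, and I expect to cite it rather than rederive it.

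Once $\alpha_+=\chi+1$ is certified, Theorem~\ref{thm:InverseMapSNFS} shows that $\Lambda_{\alpha}^{-1}(\sigma)\geq0$ implies $\sigma\in\text{SNFS}_\chi$. Lemma~\ref{Lemma:GeneralAlphaPM} then converts this positivity condition, taken over the convex family $\alpha\in[-1,\chi+1]$, into the spectral condition. With $\alpha_-=-1$ we get $K=1+\alpha_+=\chi+2$ and $c=\lceil (D-1)/(\chi+2)\rceil$, which reproduces \eqref{eq:Alpha_L_SNFS2}. The single-eigenvalue condition \eqref{eq:Alpha_L_SNFS} is the vertex condition \eqref{eq:DemoLemma1} alone, with $\alpha=\alpha_+$, for which $\Lambda^{-1}_{\alpha_+}(\sigma)\geq0$ is equivalent to $\lambda_0\geq 1/(D+\chi+1)$.

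The main obstacle is the $\chi$-positivity step, namely pinning down exactly which reduction-type map is $\chi$-positive at the threshold $\chi+1$. Here I would rely on~\cite{lewenstein_sufficient_2016} for the value $\alpha_+=\chi+1$. The remaining steps are bookkeeping through Lemma~\ref{Lemma:GeneralAlphaPM}.
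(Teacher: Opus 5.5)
Your proposal follows the paper's proof: the paper likewise just feeds $\alpha_-=-1$ and $\alpha_+=\chi+1$ (taken from~\cite{lewenstein_sufficient_2016}) into Lemma~\ref{Lemma:GeneralAlphaPM} via Theorem~\ref{thm:InverseMapSNFS}, and your bookkeeping ($K=\chi+2$, $c=\lceil (D-1)/(\chi+2)\rceil$, vertex condition $\lambda_0\geq 1/(D+\chi+1)$) is exactly what is needed. You should drop the heuristic you give for $\alpha_+=\chi+1$: the dual requirement is $\Tr(W)\one+(\chi+1)W\geq 0$ for every $\chi$-block-positive $W$ (a plus sign, i.e.\ the content of Theorem~\ref{thm:WitnessMIN} itself), whereas $k$-positivity of $\one\otimes\text{RED}_k$ only gives a necessary condition for $\text{SN}\leq k$ and so can never certify $\text{SN}\leq\chi$ --- the citation carries the weight here, just as in the paper.
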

\begin{proof}
We make use of Lemma~\ref{Lemma:GeneralAlphaPM} considering $\alpha_{-}=-1$ and $\alpha_{+}=\chi+1$  derived in~\cite{lewenstein_sufficient_2016}.
\end{proof}
In general, the previous (lower) bound is not tight, however, it provides an inner characterization of the $\text{SNFS}_{\chi}$ sets. Moreover, such bound is significantly worse than the one derived from Theorem~\ref{thm:Alpha_LB_SNFS} when $N=M$. From our results on SEPFS, i.e., for $\chi=1$,  we recover the expected $\alpha_{LB / C}=2$. 
\subsection{$\text{SNFS}_{\chi}$ from negativity}
\label{sec:SNFSNEGFS}
Interestingly, the SN of any NPT mixed state can be upper bounded from NEG (Section~\ref{sec:NEGFS}).
\begin{lemma}
\label{Lemma:NegativitySN}
    For any state $\rho\in \mathcal B(\mathbb{C}^{N}\otimes\mathbb{C}^{M})$ of Schmidt number at most $\chi$, $\mathcal{N}(\rho)\leq (\chi-1)/2 $.
\end{lemma}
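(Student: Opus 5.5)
The plan is to reduce to pure states by convexity and then apply the pure-state formula from the proof of Lemma~\ref{lemma:NegativityIsotropic}. By Definition~\ref{def:SN}, a state $\rho$ of Schmidt number at most $\chi$ admits a decomposition $\rho=\sum_j p_j\ketbra{\psi_j}{\psi_j}$ in which every $\ket{\psi_j}$ has Schmidt rank at most $\chi$.

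The first step is convexity of the negativity. Partial transposition is linear, so $\rho^\Gamma=\sum_j p_j \ketbra{\psi_j}{\psi_j}^\Gamma$. The triangle inequality for the trace norm then gives $\|\rho^\Gamma\|_1\le \sum_j p_j\|\ketbra{\psi_j}{\psi_j}^\Gamma\|_1$. Since $\sum_j p_j=1$, this yields $\mathcal N(\rho)\le\sum_j p_j\,\mathcal N(\ketbra{\psi_j}{\psi_j})\le \max_j \mathcal N(\ketbra{\psi_j}{\psi_j})$. It therefore suffices to show $\mathcal N(\ketbra{\psi}{\psi})\le(\chi-1)/2$ for every pure state of Schmidt rank at most $\chi$.

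For the pure-state bound, write $\ket{\psi}=\sum_{i=1}^{\chi'}a_i\ket{ii}$ with $\chi'\le\chi$, using local unitaries, which leave the negativity invariant. The proof of Lemma~\ref{lemma:NegativityIsotropic} gives the block structure of $\ketbra{\psi}{\psi}^\Gamma$. There are diagonal entries $a_i^2\ge0$, and for each pair $i>j$ a $2\times2$ block with eigenvalues $\pm a_ia_j$. Hence $\mathcal N(\ketbra{\psi}{\psi})=\sum_{i>j}a_ia_j=\big[(\sum_i a_i)^2-1\big]/2$.

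By Cauchy--Schwarz and $\sum_i a_i^2=1$, we have $(\sum_{i=1}^{\chi'}a_i)^2\le \chi'\sum_i a_i^2=\chi'\le\chi$. This gives $\mathcal N(\ketbra{\psi}{\psi})\le(\chi-1)/2$, with equality for the maximally entangled state of rank $\chi$. Combining with the convexity step proves the lemma.

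None of the steps is a serious obstacle. The only point needing care is the eigenvalue bookkeeping of the partial transpose: the negative eigenvalues must be exactly $-a_ia_j$ for $i>j$, with no others. This follows directly from the direct-sum decomposition already stated in the proof of Lemma~\ref{lemma:NegativityIsotropic}.
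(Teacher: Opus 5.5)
Your proposal is correct and follows the paper's route: convexity of the negativity reduces the claim to pure states of Schmidt rank at most $\chi$, where the formula $\mathcal{N}(\ketbra{\psi}{\psi})=[(\sum_i a_i)^2-1]/2$ from Lemma~\ref{lemma:NegativityIsotropic} applies. Your Cauchy--Schwarz step $(\sum_{i=1}^{\chi'} a_i)^2\le\chi'\le\chi$ makes explicit the maximization over Schmidt coefficients that the paper leaves implicit.
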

\begin{proof}
    By convexity of the negativity, it is sufficient to demonstrate the bound for pure states, addressed in Lemma~\ref{lemma:NegativityIsotropic}. 
\end{proof}
Lemma~\ref{Lemma:NegativitySN} also implies that, given a state $\rho\in\mathcal B(\mathbb{C}^{N}\otimes\mathbb{C}^{M})$, if $\mathcal{N}(\rho)>(\chi-1)/2$ then $\text{SN}(\rho)>\chi$, a condition reproduced in Fig~\ref{fig:Negativities} by the horizontal dashed lines. The lemma immediately leads to the following theorem characterizing this superset of $\text{SNFS}_{\chi}$:
\begin{theorem}
\label{thm:Alpha_NEG_SNFS} 
 Let $\rho \in \mathcal B(\mathbb{C}^{N}\otimes\mathbb{C}^{M})$  be a normalized state (with $N\leq M$ and $D=NM$)  and $\boldsymbol{\lambda} = \{\lambda_i^{\uparrow} \}_{i=0}^{D-1}$ the corresponding eigenvalues in a non-decreasing order, with $\sum_{i=0}^{D-1}\lambda_i = 1$. If
\begin{equation}
\label{eq:Alpha_NEG_SNFS}
\lambda_{0}\geq \frac{1}{D+\frac{D(\chi-1)+N(N-1)}{N-\chi}}, \quad\text{ or }
\end{equation}
 \begin{equation}
 \label{eq:Alpha_NEG_SNFS2}
K \cdot \sum_{i=0}^{c-1} \lambda_i + \left[D -K \cdot c + \alpha_+ \right] \cdot \lambda_c \geq 1,
\end{equation}
where:  
\begin{equation}
K=\frac{\chi(D-1)+N(N-M)}{N-\chi},
\end{equation}
\begin{equation}
\alpha_{+}=\frac{D(\chi-1)+N(N-1)}{N-\chi},
\end{equation}
\begin{equation}
c = \left\lceil \frac{(D-1)(N-\chi)}{\chi(D-1)+N(N-M)} \right\rceil.
\end{equation}
Then, $\rho\in\text{NEGFS}_{(\chi-1)/2}$.\\
\end{theorem}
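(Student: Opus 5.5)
The plan is to reduce the statement to Theorem~\ref{thm:InverseMapSNFS} combined with Lemma~\ref{Lemma:GeneralAlphaPM}, with $\alpha_-=-1$. The only genuine work is to identify the largest $\alpha_+$ such that $\mathcal{N}[\Lambda_\alpha(\rho)]\leq(\chi-1)/2$ holds for every state $\rho$ and every $\alpha\in[-1,\alpha_+]$. Once this value is fixed, Lemma~\ref{Lemma:GeneralAlphaPM} turns the spectral conditions $\lambda_0\geq 1/(D+\alpha_+)$ and $\lambda_{D-1}\leq 1$ into the stated linear inequality. The constants $K$ and $c$ then follow from $K=1+\alpha_+$ and $c=\lceil (D-1)/(1+\alpha_+)\rceil$, after simplifying with the given $\alpha_+$.

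\textbf{Step 1: reduce to pure states.} The reduction map is linear and the negativity is convex. Hence it suffices to bound the negativity of the normalized image $\Lambda_\alpha(\ketbra{\psi})/(D+\alpha)$, i.e. of a PPS, over all pure $\ket{\psi}$. Lemma~\ref{lemma:NegativityIsotropic} gives this negativity as
\[
\frac{1}{D+\alpha}\sum_{i>j,\; a_ia_j>\alpha^{-1}}(\alpha a_ia_j-1).
\]
I will upper bound this expression by the version without the restriction on the sum, which is affine in the products $a_ia_j$:
\[
\frac{1}{D+\alpha}\Big(\alpha\sum_{i>j}a_ia_j-\binom{N}{2}\Big)=\frac{1}{D+\alpha}\Big(\alpha\,\frac{(\sum_i a_i)^2-1}{2}-\frac{N(N-1)}{2}\Big).
\]
This linearized expression is the one that yields the formula in the statement. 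It is maximized by the maximally entangled state, for which $(\sum_i a_i)^2=N$. Substituting gives the worst-case value $\big(\alpha(N-1)-N(N-1)\big)/\big(2(D+\alpha)\big)$, assuming $\alpha\geq N$ so that all pairs contribute.

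\textbf{Step 2: solve for $\alpha_+$.} Set the worst case from Step 1 equal to $(\chi-1)/2$:
\[
\alpha(N-1)-N(N-1)=(\chi-1)(D+\alpha).
\]
Solving gives $\alpha_+=\frac{D(\chi-1)+N(N-1)}{N-\chi}$, matching the statement. It then remains to check that $K=1+\alpha_+$ equals the stated $K$. Indeed,
\[
N-\chi+D\chi-D+N^2-N=\chi(D-1)+N(N-M),
\]
using $D=NM$. The expression for $c$ follows in the same way.

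\textbf{Main obstacle.} The main difficulty is justifying that the maximally entangled state is the worst case at this $\alpha_+$, and that the Step 1 linearization is legitimate. Dropping the restriction $a_ia_j>\alpha^{-1}$ only adds the terms with $a_ia_j\leq\alpha^{-1}$, each of which is non-positive, so the linearized quantity lower bounds the true negativity rather than upper bounding it. The two coincide only when every product satisfies $a_ia_j\geq\alpha^{-1}$, as for the maximally entangled state when $\alpha\geq N$. By Eq.~\eqref{eq:pchi}, lower-rank PPS are more robust to noise. So I would check directly that at $\alpha=\alpha_+\geq 2(N-1)$ the exact negativity is maximized by the uniform Schmidt vector of full rank $N$. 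To do this I would compare the exact expression over the uniform rank-$\tau$ states with the equality case of Theorem~\ref{thm:AlphaNegativity} at $\tau=N$ and $\gamma=(\chi-1)/2$. If this comparison fails, I would instead read the statement, as the paper phrases it, as a certificate of $\mathrm{NEGFS}_{(\chi-1)/2}$ obtained from Theorem~\ref{thm:AlphaNegativity} with $\tau=N$. This $\tau$ is admissible because $(\chi-1)/2$ lies in the range given by Eq.~\eqref{eq:RangeGammaNegativity2}. The final step would then be the membership argument via Theorem~\ref{thm:InverseMapSNFS}.
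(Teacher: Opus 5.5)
Your route is the paper's: you treat the uniform full-rank PPS as the worst case, i.e.\ you apply Theorem~\ref{thm:AlphaNegativity} with $\tau=N$ and $\gamma=(\chi-1)/2$. Your algebra for $\alpha_+$, $K=1+\alpha_+$ and $c$ is correct. There are two slips. With $\alpha_-=-1$, the second condition in Lemma~\ref{Lemma:GeneralAlphaPM} reads $\lambda_{D-1}\le 1/(D-1)$, not $\lambda_{D-1}\le 1$. And, as you note yourself, the linearization of Step~1 is a lower bound, so it should simply be dropped.

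The gap is that the one substantive step is asserted rather than proved. You state $\alpha_+\ge 2(N-1)$ and that $(\chi-1)/2$ lies in the window \eqref{eq:RangeGammaNegativity2}, but you check neither. That check is exactly what the paper's proof supplies. The uniform rank-$N$ and rank-$(N-1)$ curves cross at $\alpha=2(N-1)$, where the negativity is $\frac{(N-1)(N-2)}{2(D+2N-2)}<\frac12\le\frac{\chi-1}{2}$. Since the rank-$N$ negativity $\frac{(N-1)(\alpha-N)}{2(D+\alpha)}$ increases with $\alpha$, this forces $\alpha_+>2(N-1)$.

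The argument needs $\chi\ge 2$. For $\chi=1$ and $N\ge 3$ your assertion fails, and with it the statement. Then $\alpha_+=N<2(N-1)$. Take the PPS $(\mathds{1}+N\ketbra{\psi}{\psi})/(D+N)$ with $\ket{\psi}=(\ket{00}+\ket{11})/\sqrt2$: it has $\lambda_0=1/(D+N)$ but negativity $(N/2-1)/(D+N)>0$. So you must restrict to $2\le\chi\le N-1$ and actually do the computation.

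Separately, comparing only uniform rank-$\tau$ vectors, which is all Eq.~\eqref{eq:pchi} does, does not by itself exclude non-uniform Schmidt vectors. Here is one clean way to settle your ``main obstacle''.
\begin{itemize}
\item Since $\sum_{i<j}\max(0,\alpha a_ia_j-1)=\max_S\sum_{(i,j)\in S}(\alpha a_ia_j-1)$ over sets $S$ of pairs, its maximum over unit vectors $\boldsymbol a\ge 0$ equals $\max_G\big[\tfrac{\alpha}{2}\lambda_{\max}(G)-e(G)\big]$ over simple graphs $G$ on $N$ vertices.
\item Stanley's bound $\lambda_{\max}(G)\le(\sqrt{1+8e}-1)/2$ gives $\tfrac{\alpha}{2}\lambda_{\max}(G)-e\le\tfrac{\alpha}{4}(\sqrt{1+8e}-1)-e$. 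The right-hand side is nondecreasing in $e\le\binom N2$ whenever $\alpha\ge 2N-1$.
\item Hence the complete graph, i.e.\ the uniform full-rank vector, is optimal, with value $(N-1)(\alpha-N)/2$.
\item For $2\le\chi<N$ you do have $\alpha_+>2N-1$. Indeed $(N-\chi)(\alpha_+-2N+1)=D(\chi-1)-N^2+2N\chi-\chi$, which equals $N(M-N)+4N-2>0$ at $\chi=2$ and increases with $\chi$.
\end{itemize}
Finally, each PPS negativity is nondecreasing in $\alpha$, so checking $\alpha=\alpha_+$ covers the whole range $[-1,\alpha_+]$ required by Theorem~\ref{thm:InverseMapSNFS}.
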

\begin{proof}
The proof is analogous to the one of Theorem~\ref{thm:AlphaNegativity} substituting $\gamma$ by $(\chi-1)/2$ as the SN that we want to certify. Notice that the crossing between the line corresponding to the vector of Schmidt coefficients $\mathbf{a}_{1}=(1/\sqrt{N},\cdots,1/{\sqrt{N}})$ and $\mathbf{a}_{2}=(1/\sqrt{N-1},\cdots,1/\sqrt{N-1},0)$ takes place at $\alpha=2(N-1)$. Nevertheless, the maximal negativity attainable under global unitaries for this value is $\frac{N^2-3N+2}{2\cdot N\cdot(M+2)-4}$, which is smaller than $0.5$ for any $N,M$. Thus, only considering the pure state with maximal Schmidt rank as the PPS is already enough. 
\end{proof}
Similar conditions~\cite{eltschka_partial_2015} can be derived to bound the EM of concurrence $C(\rho)$ by its corresponding $SN(\rho)$:
\begin{equation}
\label{eq:ConcurrenceSN}
C(\rho) \leq \sum_j p_j C(\psi_j) \leq \sqrt{\frac{2\big(\mathrm{SN}(\rho) - 1\big)}{\mathrm{SN}(\rho)}},
\end{equation}
where $C(\rho)$ is obtained via the convex roof extension of the pure state measure $C(\ket{\psi})=\sqrt{2(1-\Tr(\rho_{A}^{2}))}$. For generic states, the concurrence is generally hard to evaluate, but there exist bounds based on the negativity (see also \cite{eltschka_negativity_2013, eltschka_partial_2015}):
\begin{equation}
\label{eq:concurrence_negativity}
2\sqrt{\frac{2}{\chi(\chi-1)}}\mathcal{N}(\rho)\leq \mathcal{C}(\rho)\leq 2\mathcal{N}(\rho).
\end{equation}
Thus, our approach allows to bound the \textit{concurrence from the spectrum} of a state.
\subsection{$\text{SNFS}_{\chi}$ from positive reduction}
\label{sec:SNFSPREDFS}
\noindent Another useful tool to bound the SN is the reduction map: 
\begin{equation}
\label{eq:RED_General}
     \text{RED}_{\kappa}(\rho) = \Tr(\rho)\mathds{1}-\frac{1}{\kappa}\rho
\end{equation}
which is known to be  a $\kappa$-positive but $(\kappa+1)$-negative map, and thus can be used to certify $SN(\rho)$ \cite{terhal_schmidt_2000, horodecki_reduction_1997}:
\begin{theorem}\cite{terhal_schmidt_2000}
\label{thm:kpositivity}
Given a state $\rho\in\mathcal{B}(\mathbb{C}^{N}\otimes\mathbb{C}^{M})$, if
\begin{equation}
\label{eq:Reduction_CRITERION_SN}
\left[\mathds{1}\otimes\text{RED}_{\kappa}\right](\rho)\not\geq0,
\end{equation}
then $\text{SN}(\rho)>\kappa$.
\end{theorem}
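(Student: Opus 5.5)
The plan is to prove the contrapositive: if $\mathrm{SN}(\rho)\le\kappa$, then $[\mathds{1}\otimes\text{RED}_{\kappa}](\rho)\geq 0$. Both sides are compatible with convexity. By Definition~\ref{def:SN}, $\rho=\sum_i p_i\ketbra{\psi_i}$ with every $\ket{\psi_i}$ of Schmidt rank at most $\kappa$. The map $\mathds{1}\otimes\text{RED}_\kappa$ is linear, and the positive semidefinite cone is closed under nonnegative combinations. It therefore suffices to show $[\mathds{1}\otimes\text{RED}_{\kappa}](\ketbra{\psi})\geq 0$ for every pure state with Schmidt rank $\chi\le\kappa$.

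For such a pure state, write the Schmidt form $\ket{\psi}=\sum_{i=1}^{\chi}a_i\ket{v_i}\ket{w_i}$. Since the map acts as the identity on $A$, compute directly:
\begin{equation*}
[\mathds{1}\otimes\text{RED}_\kappa](\ketbra{\psi})=\rho_A\otimes\mathds{1}_B-\tfrac{1}{\kappa}\ketbra{\psi},
\end{equation*}
where $\rho_A=\Tr_B\ketbra{\psi}=\sum_i a_i^2\ketbra{v_i}$. Positivity of this operator is equivalent to $\bra{\phi}\rho_A\otimes\mathds{1}\ket{\phi}\ge \frac{1}{\kappa}|\langle\phi|\psi\rangle|^2$ for all $\ket{\phi}$. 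Expand $\ket{\phi}=\sum_{i,j}c_{ij}\ket{v_i}\ket{w_j}$, completing $\{\ket{v_i}\}$ and $\{\ket{w_j}\}$ to bases. Only the components in $\mathrm{span}\{\ket{v_i}\}_{i\le\chi}$ matter, since other components only increase the left-hand side or leave it unchanged.

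The left-hand side then equals $\sum_{i\le\chi}a_i^2\sum_j|c_{ij}|^2\ge\sum_{i\le\chi}a_i^2|c_{ii}|^2$, and the overlap is $\langle\phi|\psi\rangle=\sum_{i\le\chi}a_i\overline{c_{ii}}$. By Cauchy–Schwarz over the $\chi$ terms,
\begin{equation*}
\Big|\sum_{i=1}^{\chi}a_i\overline{c_{ii}}\Big|^2\le\chi\sum_{i=1}^{\chi}a_i^2|c_{ii}|^2\le\kappa\sum_{i=1}^{\chi}a_i^2|c_{ii}|^2,
\end{equation*}
which gives the required inequality. Hence the image of every Schmidt-rank-$\le\kappa$ pure state is positive, so the image of $\rho$ is positive, contradicting the hypothesis. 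Therefore $\mathrm{SN}(\rho)>\kappa$.

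The only real step is the Cauchy–Schwarz estimate. It is exactly where the Schmidt-rank restriction $\chi\le\kappa$ enters: the number of nonzero terms in the overlap is bounded by $\chi$. All other steps are routine linearity and convexity. For completeness one could note that the bound is tight for the maximally entangled state of rank $\kappa+1$, which shows the map is not $(\kappa+1)$-positive, but this is not needed for the stated implication.
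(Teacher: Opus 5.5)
Your proof is correct and follows the same route the paper relies on. The paper gives no proof of its own: it cites Terhal--Horodecki and uses $\kappa$-positivity of $\text{RED}_\kappa$ together with a decomposition of $\rho$ into pure states of Schmidt rank at most $\kappa$ (see also the proof of Proposition~\ref{Prop:Our6.1.}). Your Cauchy--Schwarz estimate makes the $\kappa$-positivity step self-contained, and it is the same computation behind item~5 of Theorem~\ref{thm:Our3.1.}, namely that $A_{ij}=a_i^2\delta_{ij}-\frac{1}{\kappa}a_ia_j$ is positive semidefinite when $r\le\kappa$.
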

Though Theorem~\ref{thm:kpositivity} does not provide  sufficient conditions to certify the SN from the spectrum, that is, whether $\rho\in \text{EFS}_{\gamma}$, it allows us to define a new superset of the desired $\text{SNFS}_{\chi}$, namely the set of states that have $\kappa=\chi$ positive reduction from spectrum, $\text{PREDFS}_{\kappa}$. To avoid confusion with the approach in Section~\ref{sec:HowTo}, we slightly change the notation for the reduction map.
\begin{definition}
\label{def:PREDFS}
 A state $\rho \in \mathcal B(\mathbb{C}^{N}\otimes\mathbb{C}^{M})$ has positive $\kappa$-reduction from spectrum ($\text{PREDFS}_{\kappa}$) if, for any global unitary matrix $U$ acting on $\mathbb{C}^{N}\otimes\mathbb{C}^{M}$, $\left[\mathds{1}\otimes\text{RED}_{\kappa}\right](U\rho U^{\dag}) \geq0 $.
\end{definition}
Remarkably $\text{SNFS}_{\chi}\subset\text{PREDFS}_{\chi}$, since Theorem~\ref{thm:kpositivity} is sufficient, but not necessary for SN detection. 
 
 We approach these sets as before, namely, considering the range of $\alpha$ ( as in Theorem~\ref{thm:InverseMapSNFS}) such that $\Lambda_{\alpha}(\rho)\in\text{PREDFS}_{\chi}$. In this way, we 
 provide an upper bound for the $\text{SNFS}_{\chi}$ characterization, i.e., $\tilde\alpha\leq\alpha_{RED}$. 
 
 Positive reduction from spectra has been previously considered for $\kappa=1$ \cite{jivulescu_positive_2015}, and it is based on the positivity of any unitarily transformed state under the action of a positive but not completely positive map, introduced first in~\cite{hildebrand_PPT_2007} for the transposition map. Moving beyond the specific certification of $\text{PREDFS}_{1}$ addressed in \cite{jivulescu_positive_2015}, we develop a generalized framework for $\text{PREDFS}_{\kappa}$ valid for any $\kappa$. This transition from the unit case to the arbitrary $\kappa$ regime introduces significant technical complexities. We detail the specific conditions below, with the full analytical proofs and derivation strategies provided in Appendix~\ref{Appendix:DerivationPREDFS}.

\begin{theorem}
\label{thm:Our4.2.}
Let  $\rho \in \mathcal B(\mathbb{C}^{N}\otimes\mathbb{C}^{M})$  be a normalized state in a Hilbert space of global dimension $D=NM$. Let $\{\lambda_{i}^{\uparrow}\}$ be the eigenvalues of $\rho$ in non-decreasing order and $\{\gamma_{j}^{\downarrow}\}$ the eigenvalues of $\left[\mathds{1}\otimes\text{RED}_{\kappa}\right](\ketbra{\psi}{\psi})$ in non-increasing order (as introduced in Theorem~\ref{thm:Our3.1.} of Appendix~\ref{Appendix:DerivationPREDFS}). If for any possible pure state $\ket{\psi}$
\begin{equation}
\label{eq:Our4.2.}
\sum_{i=1}^{D}\lambda_{i}^{\uparrow}\cdot\gamma_{i}^{\downarrow}\geq 0,
\end{equation}
then $\rho\in\text{PREDFS}_{\kappa}$. 
\end{theorem}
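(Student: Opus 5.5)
The plan is to reduce positivity of $[\mathds{1}\otimes\text{RED}_{\kappa}](U\rho U^{\dagger})$ to a family of scalar inequalities, one for each pure test vector, and then bound each inequality from below by a spectral quantity. The bound will follow from the von Neumann (rearrangement) trace inequality.

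Fix an arbitrary global unitary $U$ and set $\sigma=U\rho U^{\dagger}$. The operator $[\mathds{1}\otimes\text{RED}_{\kappa}](\sigma)$ is Hermitian, so it is positive semidefinite if and only if $\bra{\psi}[\mathds{1}\otimes\text{RED}_{\kappa}](\sigma)\ket{\psi}\geq 0$ for every normalized $\ket{\psi}\in\mathbb{C}^{N}\otimes\mathbb{C}^{M}$. I would move the map onto the test vector using its Hilbert--Schmidt dual:
\begin{equation*}
\Tr\big(\ketbra{\psi}{\psi}\,[\mathds{1}\otimes\text{RED}_{\kappa}](\sigma)\big)=\Tr\big([\mathds{1}\otimes\text{RED}_{\kappa}]^{\dagger}(\ketbra{\psi}{\psi})\,\sigma\big).
\end{equation*}
The dual of $\text{RED}_{\kappa}(X)=\Tr(X)\mathds{1}-X/\kappa$ is again $\text{RED}_{\kappa}$: the map $X\mapsto \Tr(X)\mathds{1}$ is self-dual and the identity is self-dual. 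Hence the quantity to control is $\Tr(G_\psi\,U\rho U^{\dagger})$, where $G_\psi:=[\mathds{1}\otimes\text{RED}_{\kappa}](\ketbra{\psi}{\psi})$ is Hermitian with eigenvalues $\{\gamma_j^{\downarrow}\}$ as in Theorem~\ref{thm:Our3.1.}. In the paper's normalization the partial trace in $\text{RED}_{\kappa}$ yields $\rho_A\otimes\mathds{1}_M$, and this is still self-adjoint, so the reduction remains valid.

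Next I apply the lower von Neumann trace inequality for Hermitian matrices $A,B$:
\begin{equation*}
\Tr(AB)\geq\sum_i a_i^{\uparrow}b_i^{\downarrow}.
\end{equation*}
The minimum of $\Tr(G_\psi V\rho V^{\dagger})$ over all unitaries $V$ is attained by pairing the largest eigenvalues of $G_\psi$ with the smallest eigenvalues of $\rho$. This gives
\begin{equation*}
\Tr(G_\psi\,U\rho U^{\dagger})\geq\sum_{i=1}^{D}\lambda_i^{\uparrow}\gamma_i^{\downarrow}.
\end{equation*}
This step needs care: it holds even though $G_\psi$ may have negative eigenvalues, because it follows from the Birkhoff/majorization proof of the rearrangement inequality applied to $|\bra{g_j}U\ket{\lambda_i}|^2$, which forms a doubly stochastic matrix.

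By hypothesis the right-hand side is nonnegative for every $\ket{\psi}$. Therefore $\bra{\psi}[\mathds{1}\otimes\text{RED}_{\kappa}](U\rho U^{\dagger})\ket{\psi}\geq 0$ for all $\ket{\psi}$ and all $U$, which is Definition~\ref{def:PREDFS}.

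The main obstacle is bookkeeping rather than conceptual. The duality step must correctly identify the operator whose spectrum is $\{\gamma_j\}$, including the placement of the identity and the $1/\kappa$ factor and whether $\rho_A$ or $\rho_B$ appears. I will verify this explicitly against the definition in Theorem~\ref{thm:Our3.1.}. I will also check the direction of the ordering in the rearrangement bound, since reversing it would give an upper bound instead of the required lower bound.
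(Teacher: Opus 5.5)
Your proposal is correct and follows the paper's route. You test positivity on pure states, move $[\mathds{1}\otimes\text{RED}_{\kappa}]$ onto $\ketbra{\psi}{\psi}$ using its self-adjointness (Observation~\ref{obs:selfadjoint}), and identify $\min_U \Tr(U\rho U^{\dagger}G_\psi)$ with $\sum_i\lambda_i^{\uparrow}\gamma_i^{\downarrow}$; your doubly-stochastic/Birkhoff justification of the lower von Neumann trace inequality makes explicit a step the paper only asserts, and it correctly allows $G_\psi$ to have a negative eigenvalue.
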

\begin{proof}
(See also~\cite{jivulescu_positive_2015},\cite{hildebrand_PPT_2007}). We delegate some technical details of the extension to general values of $k$ to Appendix~\ref{Appendix:DerivationPREDFS}. 

\noindent Denoting $\left[\mathds{1}\otimes\text{RED}_{\kappa}\right]:=\xi_\kappa$,  the following chain of equations are equivalent:
\begin{equation}
\label{eq:4.2.1}
    \xi_{\kappa}(U\rho U^{\dag})\geq 0, \forall U,
\end{equation}
\begin{equation}
\label{eq:4.2.2}
    \Tr[\xi_{\kappa}(U\rho U^{\dag})\ketbra{\psi}{\psi}]\geq 0, \quad \forall U, \forall \ket{\psi}
\end{equation}
\begin{equation}
\label{eq:4.2.3}
\min_{U} \Tr[U\rho U^{\dag}\xi_{\kappa}^{\dag}(\ketbra{\psi}{\psi})]\geq 0, \quad \forall\ket{\psi}
\end{equation}
\begin{equation}
\label{eq:4.2.4}
    \min_{U} \langle U\rho U^{\dag},\xi_{\kappa}(\ketbra{\psi}{\psi})\rangle = \sum_{i=1}^{D}\lambda_{i}^{\uparrow}\gamma_{i}^{\downarrow}\geq 0
\end{equation}
For the last implication, we have made use of Observation~\ref{obs:selfadjoint} of the Appendix~\ref{Appendix:DerivationPREDFS}.
\end{proof}

Notice that Theorem~\ref{thm:Our4.2.} needs to be fulfilled $\forall \ket{\psi}$, i.e., for any possible vector of Schmidt coefficients $\{a_{i}\}$. Thus, even though we propose a complete characterization of the sets, it is needed to compute an infinite amount of inequalities in order to certify a state as $\text{PREDFS}_{\kappa}$. To compute the $\alpha_{RED}$ such that it encloses the set of $\text{PREDFS}_{\kappa}$, we consider the following proposition on isotropic states.
\begin{corollary}
\label{coro:PREDFSCHI} 
 Let $\rho \in \mathcal B(\mathbb{C}^{N}\otimes\mathbb{C}^{M})$  be a normalized state acting on a Hilbert space of global dimension $D=NM$ with $N\leq M$ and $\boldsymbol{\lambda} = \{\lambda_i^{\uparrow} \}_{i=0}^{D-1}$ the corresponding eigenvalues in a non-decreasing order, with $\sum_{i=0}^{D-1}\lambda_i = 1$. If
\begin{equation}
\label{eq:PREDFSCHI}
\lambda_{0}\geq \frac{1}{D+\frac{N\cdot(M\kappa-1)}{N-\kappa}}, \quad\text{ or }
\end{equation}
 \begin{equation}
 \label{eq:PREDFSCHI2}
 \begin{split}
\frac{\kappa(D-1)}{N-\kappa} \cdot \sum_{i=0}^{c-1} \lambda_i + \\+\Big [D -\frac{\kappa(D-1)}{N-\kappa} \cdot c  + \frac{N(M\kappa-1)}{(N-\kappa)} \Big ] \cdot \lambda_c \geq 1,
\end{split}
\end{equation}
where  $c = \left\lceil \frac{N-\kappa}{\kappa} \right\rceil$, then $\rho\in\text{PREDFS}_{\kappa}$
\end{corollary}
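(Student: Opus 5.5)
Following the general recipe of Section~\ref{sec:HowTo}, we apply Theorem~\ref{thm:InverseMapSNFS} with the reduction family $\Lambda_\alpha$ and replace the target set by $\text{PREDFS}_\kappa$. The only new input is the value $\alpha_{RED}$ such that $\Lambda_\alpha(\rho)\in\text{PREDFS}_\kappa$ for every state $\rho$ and every $\alpha\in[-1,\alpha_{RED}]$. Once this threshold is known, Lemma~\ref{Lemma:GeneralAlphaPM} with $\alpha_-=-1$ and $\alpha_+=\alpha_{RED}$ turns it into the stated conditions: the single-eigenvalue bound and the linear inequality in the first $c+1$ eigenvalues.

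\textbf{Step 1: reduce to pure states.} By linearity of $\Lambda_\alpha$ and convexity of $\text{PREDFS}_\kappa$, it suffices to check $\Lambda_\alpha(\ketbra{\psi})=\mathds{1}+\alpha\ketbra{\psi}$ for all pure $\ket{\psi}$. Since $\ket{\psi}$ ranges over all pure states, the global unitary in Definition~\ref{def:PREDFS} is absorbed. So we need
\[
\xi_\kappa(\mathds{1}+\alpha\ketbra{\psi})\geq 0 \quad \forall\ket{\psi}.
\]
Using $\xi_\kappa(\mathds{1})=(M-1/\kappa)\mathds{1}$, this becomes
\[
\left(M-\tfrac{1}{\kappa}\right)\mathds{1}+\alpha\left(\rho_A\otimes\mathds{1}_M-\tfrac{1}{\kappa}\ketbra{\psi}\right)\geq 0 .
\]

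\textbf{Step 2: find the worst case.} The operator in parentheses has eigenvalue $-\tfrac1\kappa+\text{(something)}$ along $\ket{\psi}$, and it is most negative for the maximally entangled state on the $N$-dimensional support. Then $\rho_A=\mathds{1}/N$, and the minimal eigenvalue of $\mathds{1}\otimes\mathds{1}/N-\tfrac1\kappa\ketbra{\psi}$ is $\tfrac1N-\tfrac1\kappa$.

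I would justify this extremality using the isotropic structure and the spectrum computed in Theorem~\ref{thm:Our3.1.}, as the corollary's wording ("isotropic states") suggests. For a Schmidt vector $a$, the relevant minimal eigenvalue comes from the block spanned by $\ket{\psi}$ and the $\ket{ii}$. I would minimise it over the simplex of Schmidt coefficients and show that the uniform vector is optimal. This is the step I expect to be the main obstacle, since Theorem~\ref{thm:Our4.2.} requires control over all Schmidt vectors.

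\textbf{Step 3: solve for the threshold.} Requiring
\[
M-\tfrac1\kappa+\alpha\left(\tfrac1N-\tfrac1\kappa\right)\geq 0
\]
gives
\[
\alpha\leq \frac{N(M\kappa-1)}{N-\kappa}=\alpha_{RED}.
\]
This matches Eq.~\eqref{eq:PREDFSCHI}, and $K=1+\alpha_{RED}$ should reduce to the stated coefficient $\kappa(D-1)/(N-\kappa)$ or its analogue. The value of $c$ then follows from Lemma~\ref{Lemma:GeneralAlphaPM} with $c=\lceil (D-1)/(1+\alpha_{RED})\rceil$.

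I would double-check the algebra of $K$ and $c$ against the statement, where $c=\lceil (N-\kappa)/\kappa\rceil$ appears. A mismatch there would signal a normalization convention, such as the unnormalized map remark after Eq.~\eqref{eq:ReductioInversenMap}. It would not be a conceptual gap.
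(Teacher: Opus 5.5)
Your proposal is correct and follows the paper's route. The paper likewise fixes $\alpha_-=-1$, obtains $\alpha_{RED}=N(M\kappa-1)/(N-\kappa)$ from the worst pure state in Proposition~\ref{Prop:Our6.1.}, and feeds it into Lemma~\ref{Lemma:GeneralAlphaPM}; the step you flag as the main obstacle is supplied by item 5 of Theorem~\ref{thm:Our3.1.}, which gives $\eta_{\min}\geq 1/r-1/\kappa$ with equality for uniform Schmidt coefficients, and since this bound decreases in $r$ the case $r=N$ is worst. Your algebra check also closes without any normalization issue, since $1+\alpha_{RED}=\kappa(D-1)/(N-\kappa)=K$ and $\lceil (D-1)/K\rceil=\lceil (N-\kappa)/\kappa\rceil$.
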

\begin{proof}
The derivation of the parameter $\alpha_{RED}(\kappa)$ for isotropic states is detailed in Proposition~\ref{Prop:Our6.1.} of the Appendix~\ref{Appendix:DerivationPREDFS} following the structure of \cite{jivulescu_positive_2015}.
\end{proof}
\noindent As expected, we recover the results of \cite{jivulescu_positive_2015} when we consider $\kappa=1$.

Finally, we find states for which the SN can be completely determined using only the reduction map. Considering the sufficient condition in Theorem~\ref{thm:kpositivity}, we find $\text{SN}(\rho)>\chi-1$. Moreover, using our Theorem~\ref{thm:Alpha_LB_SNFS}, we upper bound it as $\text{SN}(\rho)\leq\chi$. Thus, we certify $\text{SN}(\rho)=\chi$ with the use of the single mathematical tool of the reduction map with two different approaches. In specific, it is possible to find spectra with $\lambda_{\min}=1/(N^2+\alpha_{LB}(\chi))$ such that the maximal negativity of the reduction map with $\kappa=\chi-1$ under unitaries (see Eq.\eqref{eq:NegativityReduction} of Appendix~\ref{sec:NegativityReduction}) is greater than $0$. 

\subsection{Comparison of the $\text{SNFS}_{\chi}$ bounds}
At this stage, it is useful to summarize and compare our results obtained so far. On one hand, Lemma~\ref{Lemma:NegativitySN} together with Theorem~\ref{thm:Alpha_NEG_SNFS} and the results from Theorem~\ref{thm:Our4.2.}, can  be used to test the validity of the conjectured values of $\alpha_{C}(\chi)$ in Conjecture~\ref{conj:Alpha_C_SNFS}. In this regard, we find that the supersets of $\text{SNFS}_{\chi}$ always have larger values of the parameter $\alpha$, which motivates the search for tighter bounds, breaking or validating Conjecture~\ref{conj:Alpha_C_SNFS}. In the following, we collect the different bounds we obtained for $\alpha$ for the $\text{SNFS}_{\chi}$ sets and assess how they compare among them and with $\alpha_C$.   
\begin{table}[h]
    \centering
    \begin{tabular}{|c|c|c|}
    \hline
       Technique  &  $\alpha_{+}(\chi)$ \\
       \hline
        Robustness (LB) &$\frac{2(N\chi-1)}{N-\chi}, \chi+1 $ \\
            Conjectured (C) & $2\cdot(2\chi^2-1)$ \\
      Negativity (NEG) & $\frac{D(\chi-1)+N(N-1)}{N-\chi}$ \\
          Positive reduction (RED) & $\frac{N(M\chi-1)}{N-\chi}$  \\
        \hline
    \end{tabular}
    \caption{Bounds on $\alpha$ for $\mathrm{SNFS}_{\chi}$ derived from different techniques.  
    The bound from SN robustness (Section~\ref{sec:SNFSROB}) provides a lower bound for $\tilde\alpha$. Negativity (Section~\ref{sec:SNFSNEGFS}) and positive reduction (Section~\ref{sec:SNFSPREDFS}) provide sufficient conditions to certify $\text{SN}>\chi$, thus providing upper bounds on $\tilde\alpha$. We remind the reader that the robustness bound $\frac{2(N\chi-1)}{N-\chi}$ is stated for $\mathbb{C}^N\otimes\mathbb{C}^N$ states only.}
    \label{tab:summary}
\end{table}
Despite we cannot give a complete characterization of the sets, Table~\ref{tab:summary} provides a comparison between the different sets for the pseudo-pure states, PPS, ranging from pure states ($p=0$) to the MMS ($p=1$). Explicitly,
\begin{equation}
\label{eq:comparison_sets}
\alpha_{LB}\leq\alpha_{C}\leq\alpha_{NEG}\leq\alpha_{RED}.
\end{equation}
Moreover, the complete inclusions of the considered $\text{EFS}_{\gamma}$ sets is only pending for the unknown case $\text{NEGFS}_{(\chi-1)/2}\overset{?}{\subset} \text{PREDFS}_{\chi}$, which is also unknown for the non-FS case. It is also interesting to note that the $\text{NEGFS}_{(\chi-1)/2}$ set does not come from a $\kappa$-positive map as the $\text{PREDFS}_{\kappa}$, but from simpler convexity relations using a $1$-positive map and the Schmidt decomposition of different pure states. On the other hand, since $\alpha_{NEG}\leq\alpha_{RED}$,  to detect $\mathrm{SNFS}_{\chi}$ it is always better to use the $\mathrm{NEGFS}$ bound. Despite these limitations, the study of the $\text{PREDFS}_{\chi}$ is still interesting on its own. 
\subsection{Spectral properties of Schmidt number witnesses}
\label{sec:SNW}

We finish our study by exploiting the well known fact that the characterization of SN has a dual description in terms of Schmidt number witnesses SNWs \cite{sanpera_schmidt-number_2001}.  A SNW corresponds to a hyperplane separating the convex set of states with a certain SN $\chi$ and an exterior point. For $\chi=1$, one recovers the well known entanglement witnesses~\cite{horodecki_separability_1996,terhal_family_2001,lewenstein_characterization_2001}.
They are defined as follows:
\begin{definition}
Let $\text{SN}_{\chi}$ denote the set of bipartite quantum states with Schmidt number at most $\chi$.
A Schmidt number witness for Schmidt number $\chi$ is a Hermitian operator $W_\chi$ such that $\Tr(W_\chi \sigma) \geq 0$, $\forall \sigma \in \text{SN}_{\chi}$.
Moreover, there exists at least one state $\rho$ with $\text{SN}(\rho)>\chi$ such that $\Tr(W_\chi \rho) < 0$.
\end{definition}

In this way, a $\chi$-Schmidt number witness $W_{\chi}$ detects states whose SN is strictly larger than $\chi$ by separating them from the convex set $\text{SN}_{\chi}$. A complete characterization of the $\text{SNFS}_{\chi}$ sets can be translated into constraints on the spectrum of admissible witnesses. Indeed, for a state $\rho$ with spectrum $\{\lambda_{i}^{\uparrow}\}$, it follows that 
\begin{equation}
\min_U \Tr\left(U\rho U^\dag W_{\chi}\right)=
\sum_i \lambda_i^{\uparrow}\mu_i^{\downarrow}
\geq 0,
\end{equation}
for every $\chi$-SN witness $W_{\chi}$, where $\{\mu_{i}^{\downarrow}\}$ denotes the spectrum of $W$.  To the best of our knowledge, this dual spectral viewpoint has so far been investigated only for EW and SEPFS in Ref.~\cite{johnston_inverse_2018}.

Extensive work has been devoted to determine important properties of the witnesses (decomposability, optimality) from their eigenvalues~\cite{sarbicki_spectral_2008,chruscinski_spectral_2009,champagne_spectral_2022,johnston_inverse_2018, song_spectral_2025,ende2025CritOptWits}. Here, we provide further criteria to find bounds on the eigenvalues of Schmidt number witnesses.  In particular, from Theorem~\ref{thm:InverseMapSNFS}, it 
follows:
\begin{theorem}
\label{thm:WitnessMIN}
Let $W_{\chi}$ be a $\text{SN}_{\chi}$ witness of states acting in $\mathbb{C}^{N}\otimes\mathbb{C}^{M}$. Then, it fulfills:
\begin{equation}
\label{Eq:413}
    W_{\chi} + \frac{\Tr(W_{\chi})}{\alpha_{+}} \mathds{1} \geq 0, 
\end{equation}
i.e., its minimal eigenvalue is at least $\lambda_{\min}(W_{\chi})\geq-\Tr(W_{\chi})/\alpha_{+}$, where $\alpha_{+}$ is the SN bound or a lower bound for it (see Table~\ref{tab:summary}).
\end{theorem}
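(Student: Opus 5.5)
The argument is a duality one: pair the witness with the image of an arbitrary pure state under the reduction map $\Lambda_{\alpha_+}$, and then turn the resulting inequality into an operator inequality by a single adjoint computation. By the definition of $\alpha_+$ in Theorem~\ref{thm:InverseMapSNFS} (or any lower bound on it, such as $\alpha_{LB}$ or $\chi+1$ from Table~\ref{tab:summary}), every operator of the form $\Lambda_{\alpha_+}(\rho)=\Tr(\rho)\mathds{1}+\alpha_+\rho$ with $\rho\geq 0$ has Schmidt number at most $\chi$ after normalization. Because $\mathrm{SN}_\chi$ is a convex cone condition, the normalization is irrelevant. Hence for every state $\rho$ (in particular every pure $\ket{\psi}$) we have $\Tr[W_\chi\Lambda_{\alpha_+}(\rho)]\geq 0$, simply by the defining property of a $\chi$-SN witness.

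Next we move $\Lambda_{\alpha_+}$ onto the witness via the Hilbert--Schmidt adjoint. The reduction map is self-adjoint up to its structure:
\[
\Tr[W_\chi(\Tr(\rho)\mathds{1}+\alpha_+\rho)]=\Tr\!\big[\rho\,(\Tr(W_\chi)\mathds{1}+\alpha_+W_\chi)\big].
\]
This gives $\Tr[\rho\,\Lambda_{\alpha_+}(W_\chi)]\geq 0$ for all $\rho\geq0$. Taking $\rho=\ketbra{\psi}{\psi}$ for arbitrary $\ket{\psi}$ yields $\Lambda_{\alpha_+}(W_\chi)\geq 0$. Dividing by $\alpha_+>0$ then gives exactly Eq.~\eqref{Eq:413}.

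The eigenvalue statement follows by choosing $\ket{\psi}$ to be the eigenvector of $W_\chi$ with minimal eigenvalue. This gives $\alpha_+\lambda_{\min}(W_\chi)+\Tr(W_\chi)\geq 0$, which is the claimed bound $\lambda_{\min}(W_\chi)\geq-\Tr(W_\chi)/\alpha_+$.

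The only point requiring care is that the valid range for $\alpha$ is an interval $[-1,\alpha_+]$. We only use the endpoint $\alpha_+$ itself, or any smaller positive value, which is legitimate since a smaller $\alpha$ mixes in more identity and, by convexity of $\mathrm{SN}_\chi$, stays inside the set. That is why a lower bound on the tight $\tilde\alpha_+$ suffices and yields a weaker but valid constraint. We also need $\alpha_+>0$ for the division, which holds for all entries of Table~\ref{tab:summary}.
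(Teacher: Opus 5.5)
Your proposal is correct and follows the paper's argument. The paper takes an arbitrary state $\Delta$ and notes that $\sigma=\frac{1}{D+\alpha_+}(\mathds{1}+\alpha_+\Delta)=\frac{1}{D+\alpha_+}\Lambda_{\alpha_+}(\Delta)$ has Schmidt number at most $\chi$ by Theorem~\ref{thm:InverseMapSNFS}. It then rewrites $\Tr(W_\chi\sigma)\geq 0$ as $\Tr[(W_\chi+\Tr(W_\chi)\mathds{1}/\alpha_+)\Delta]\geq 0$ for all $\Delta\geq 0$, which is exactly your adjoint computation. Your remark that any smaller positive $\alpha$ also works, because $\mathds{1}+\alpha'\rho$ is a positive combination of $\mathds{1}+\alpha_+\rho$ and the separable $\mathds{1}$, justifies the ``or a lower bound for it'' clause, which the paper leaves implicit.
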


\begin{proof}
Let $\Delta$ be a generic state, i.e., a unit trace PSD matrix acting in $\mathbb{C}^{N} \otimes \mathbb{C}^{M}$. Then,
\begin{equation}
    \sigma = \left(1 - \frac{D}{D+ \alpha_{+}}\right) \Delta + \frac{\mathds{1}}{D+\alpha_{+}}, \end{equation}
will be detected as $\text{SN}_{\chi}$ according to Theorem~\ref{thm:InverseMapSNFS}. Consequently, for any $\text{SN}=\chi$ witness, $\Tr(W_{\chi} \sigma) \geq 0$. We derive the condition
\begin{equation}
    \Tr \left[\left(W_{\chi} + \frac{\Tr(W_{\chi})}{\alpha_{+}} \mathds{1} \right)\Delta\right]  \geq 0.
\end{equation}
As the previous inequality is valid for any positive semidefinite $\Delta$, it implies the claimed result. 
\end{proof}

For the specific case of detecting entanglement from separability, the bound $\lambda_{\min}(W_{1})\geq -\Tr(W_{1})/2$ was only known to hold for decomposable entanglement witnesses \cite{johnston_inverse_2018,rana_negative_2013}, those that can be expressed as $\text{EW}=P+Q^{T_{A}}$, with $P,Q\geq 0$ \cite{horodecki_bound_1999, terhal_detecting_2002}. Consequently, given the separability threshold at $\alpha_{+}=2$, we extend the previous result to non-decomposable entanglement witnesses \cite{song_spectral_2025}. Moreover, this approach is valid to bound the entanglement of $W_{\chi}$ witnesses for arbitrary $\chi$.

Finally, by considering the lower bound of $\alpha_{-}$, it is also possible to upper bound the biggest eigenvalue of any witness.
\begin{theorem}
\label{thm:WitnessMAX} Let $W_{\chi}$ be a $\text{SN}=\chi$ witness in $\mathbb{C}^{N}\otimes\mathbb{C}^{M}$ with maximal eigenvalue $\lambda_{\max}$. Then, the following relation is fulfilled:
\begin{equation}
\label{eq:WMAX}
\lambda_{\max}(W_{\chi})\leq\Tr(W_{\chi}).
\end{equation}
\end{theorem}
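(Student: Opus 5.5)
The plan is to mirror the proof of Theorem~\ref{thm:WitnessMIN}, but now using the other endpoint $\alpha_{-}=-1$ of the admissible range in Theorem~\ref{thm:InverseMapSNFS}. For any $\alpha\in[\alpha_-,\alpha_+]$ and any state $\Delta$, the operator $\Lambda_\alpha(\Delta)=\mathds{1}+\alpha\Delta$ has Schmidt number at most $\chi$, since it lies in $\mathrm{SNFS}_\chi$ after normalization. Because $\mathrm{SEPFS}\subset\mathrm{SNFS}_\chi$ and $\alpha_-=-1$ is the positivity threshold, I will take $\alpha=-1$ (or $\alpha\to-1^{+}$ together with a limiting argument, using closedness of $\mathrm{SN}_\chi$).

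\textbf{Step 1 (the family of test states).} Take $\Delta=\ketbra{\phi}{\phi}$ to be an arbitrary pure state and set
\begin{equation*}
\sigma=\frac{\mathds{1}-\ketbra{\phi}{\phi}}{D-1}.
\end{equation*}
This is exactly the normalized image $\Lambda_{-1}(\ketbra{\phi}{\phi})/(D-1)$. Its spectrum is $(0,\frac{1}{D-1},\dots,\frac{1}{D-1})$, which is precisely the exceptional rank-deficient spectrum recalled in Section~\ref{sec:prelims} as lying in SEPFS. So $\sigma$ is separable for every $\ket{\phi}$, hence $\sigma\in\mathrm{SN}_\chi$ for every $\chi\geq1$. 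This step can be justified either through Theorem~\ref{thm:InverseMapSNFS} with $\alpha_-=-1$ or directly from the cited SEPFS characterization.

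\textbf{Step 2 (evaluate the witness).} Since $W_\chi$ is a $\chi$-SN witness, $\Tr(W_\chi\sigma)\geq0$, i.e.
\begin{equation*}
\Tr(W_\chi)-\bra{\phi}W_\chi\ket{\phi}\geq0
\end{equation*}
for all unit vectors $\ket{\phi}$. Equivalently, $\Tr(W_\chi)\mathds{1}-W_\chi\geq0$.

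\textbf{Step 3 (conclude).} Choosing $\ket{\phi}$ to be the eigenvector of $W_\chi$ associated with its largest eigenvalue gives $\lambda_{\max}(W_\chi)\leq\Tr(W_\chi)$, which is Eq.~\eqref{eq:WMAX}.

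The main obstacle is Step 1: making rigorous that $\mathds{1}-\ketbra{\phi}{\phi}$ has Schmidt number at most $\chi$ (indeed, is separable). If I prefer not to rely on the cited SEPFS result for the boundary spectrum, I will argue by limits instead. For $\alpha>-1$, $\mathds{1}+\alpha\ketbra{\phi}{\phi}$ is a full-rank state close to the maximally mixed state, and its membership in SEPFS follows from Theorem~\ref{thm:InverseMapSNFS}, whose hypothesis $\Lambda_\alpha^{-1}(\sigma)\geq0$ holds by construction. The inequality $\Tr(W_\chi)+\alpha\bra{\phi}W_\chi\ket{\phi}\geq0$ then holds for all $\alpha\in(-1,0]$, and letting $\alpha\to-1$ yields the claim by continuity.
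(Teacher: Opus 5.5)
Your proposal is correct and follows the paper's own route: the paper proves this theorem by repeating the argument of Theorem~\ref{thm:WitnessMIN} with $\alpha=-1$, i.e.\ by testing $W_\chi$ on $(\mathds{1}-\Delta)/(D-1)$, which is exactly your Steps 1--3. Restricting to pure $\Delta$ and justifying separability of the boundary spectrum $(0,\tfrac{1}{D-1},\dots,\tfrac{1}{D-1})$, either through the cited SEPFS result or through your limit $\alpha\to-1^{+}$, only makes explicit what the paper leaves implicit.
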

\begin{proof}
The proof is analogous to that of Theorem~\ref{thm:WitnessMIN}, but considering $\alpha=-1$ and the corresponding flip in the inequalities instead.
\end{proof}

\section{Conclusions}
\label{conclusions}
In this work, we introduced a new paradigm in entanglement certification from the notion of $\gamma$-entanglement from spectrum, $\mathrm{EFS}_{\gamma}$, defined as the sets of bipartite mixed states whose entanglement content, with respect to a given entanglement measure EM, cannot exceed $\gamma$ under any global unitary transformation. As a consequence, membership in $\mathrm{EFS}_{\gamma}$ is fully determined by the eigenvalue spectrum of the state. Building on our previous techniques based on linear maps and their inverses, we demonstrated that these tools also yield powerful characterizations of such entanglement-constrained sets. Importantly, our criteria require only a small number of eigenvalues, making them applicable even in scenarios where the available information about the state is very limited. 

We focused on two widely used bipartite entanglement measures, the negativity and the Schmidt number, and derived bounds characterizing the corresponding families of sets $\mathrm{EFS}_{\gamma}$. Our methods, however, are general and can be applied to other entanglement measures as well. Our analysis further shows that the negativity is particularly well suited for the spectral characterization of entanglement. We also remark that obtaining tight bounds for Schmidt number sets, $\mathrm{SNFS}_{\chi}$, remains an open problem, even for pseudo-pure states (PPS). Additionally, we examined the relationship between the two measures and studied the set of states with positive reduction from spectrum as an upper bound for those with bounded Schmidt number under global unitaries.

Our criteria are specially tailored to full-rank and highly mixed entangled states, which are among the most challenging to characterize using state-of-the-art methods. We remark that such states are experimentally relevant, as they correspond to pure states affected by depolarizing noise, a scenario of significant practical importance. Further investigation of low-rank states as well as alternative noise models is needed to gain a deeper understanding of the effect of quantum channels when one only has a limited knowledge of the state. Finally, extending our techniques to multipartite systems offers a promising direction for future research. \\

\noindent \textbf{Acknowledgments.--}
We want to acknowledge fruitful conversations with B. Mallick, S. Mukherjee, N. Ganguly and A.S. Majumdar. We thank K.N.B. Teja, J. Ahiable, G. Rajchel-Mieldzioć and K. Życzkowski for valuable discussions. \\

JAV acknowledges financial support from Ministerio de Ciencia e Innovación of the Spanish Government FPU23/02761. GMR acknowledges financial support by the European Union
under ERC Advanced Grant TAtypic, Project No. 101142236. JAV, AR and AS acknowledge financial support from Spanish MICIN (projects: PID2022:141283NBI00) with the support of FEDER funds, the Spanish Government with funding
from European Union NextGenerationEU (PRTR-C17.I1), the Generalitat de Catalunya,
the Ministry for Digital Transformation and of Civil Service of the Spanish Government through the QUANTUM ENIA project -Quantum Spain Project- through the Recovery, Transformation and Resilience Plan NextGeneration EU within the framework
of the Digital Spain 2026 Agenda. \\

\noindent \textbf{Author contribution statements.--}
All authors contributed to the development of the main ideas and theoretical framework. JAV performed the main calculations and wrote the first draft of the manuscript. GMR and AR contributed sections of the manuscript and were actively involved throughout the research and writing process. AS supervised the project, assisted with the writing process, and guided the overall research direction. All authors contributed to the work, read and discussed the manuscript, and approved the final version. Large language model tools were used to assist with typo correction, grammar checking, coding, and the development of figures.

\bibliographystyle{quantum}
\bibliography{TFM}

\appendix
\section{Negativity of the reduction map from the spectrum}
\label{sec:NegativityReduction}
In this section we extend the calculations of Section~\ref{sec:NEGFS} to the use of other positive but not completely positive maps. 
In specific, we address the sum of the negative eigenvalues of the action of the reduction map 
\begin{equation}
\xi_{\kappa}(\rho)=\left[\mathds{1}\otimes\text{RED}_{\kappa}\right](\rho),
\end{equation}
which combines the notion of negativity addressed in Section~\ref{sec:NEGFS} and the reduction map which we employ to bound the $\text{SNFS}_{\chi}$ set in Section~\ref{sec:SNFS}.\\ 

\begin{definition}
\label{def:NegativityReduction}
We define the 
reduction map negativity as
\begin{equation}
\label{def:NegativityReductionMap}
\mathcal{N}_{red}=\frac{||\xi_{\kappa(\rho)}||_{1}-(M-1/\kappa)}{2}.
\end{equation}
\end{definition}
This function is convex, due to convexity of the trace norm and linearity of the trace, and can be applied to any other positive but not completely positive map easily. We consider its maximal bounds over unitary transformations introducing the negativity of the reduction map from spectrum ($\text{NRFS)}$. \\

\begin{figure}[h!]
    \centering
    \includegraphics[width=\linewidth]{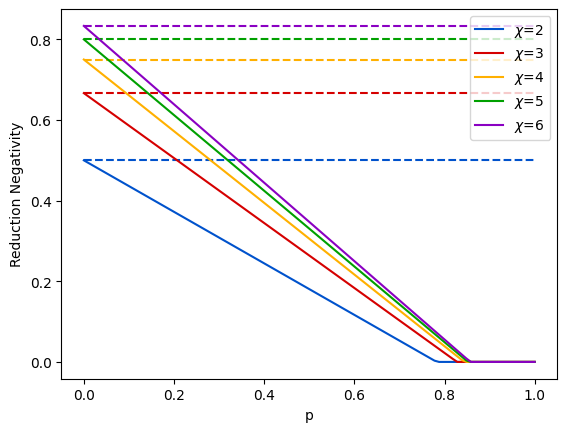}
    \caption{Maximal reduction map negativity for the map parameter $\kappa=1$ of PPS of different Schmidt Rank $\chi$ as a function of $p =D/(D+\alpha)$ for $N = M = 6$. }
    \label{fig:ReductionMapNegativity}
\end{figure}

As we have shown in Theorem~\ref{thm:Our3.1.}, $\xi_{\kappa}(\ketbra{\psi}{\psi})$ has at most $1$ negative eigenvalue. Thus, the same calculations from the spectrum can equivalently be done by considering $\mathcal{N}_{red}^{'}=\min\{0,\lambda_{\min}(\xi_{\kappa}(\rho))\}$, as the same maximal negativity of the reduction map can be achieved if we assume that all the negativity is concentrated in the smallest possible eigenvalue. 
Moreover, this single negative eigenvalue fulfills $\eta_{q}\geq -\frac{1}{\chi}+\frac{1}{\kappa}$ (it is negative if $\chi>\kappa$, i.e., we want to certify with a bigger $SR$ $\chi$ than the map parameter $\kappa$). The reduction map is linear, thus $\mathds{1}\otimes\text{RED}_{\kappa}$ will also be linear, yielding the following Lemma.
\begin{lemma}
\label{lemma:NegativityIsotropicReduction}
Given a normalized pure state $\ket{\psi}=\sum_{i}a_{i}\ket{ii}$ in $\mathbb{C}^{N}\otimes\mathbb{C}^{M}$, the reduction map negativity of $\rho=\frac{1}{D+\alpha}\left(\mathds{1}+\alpha\ketbra{\psi}{\psi}\right)$
is given by
\begin{equation}
\label{eq:NegativityReduction}
\mathcal{N}_{red}(\rho)=-\min\left[0,\frac{1}{D+\alpha}\left(\bigg (M-\frac{1}{\kappa}\bigg )+\alpha\eta_{q}\right)\right]
\end{equation}
where $\eta_{q}$ is the smallest eigenvalue of $[\mathds{1}\otimes\text{RED}_{\kappa}](\ketbra{\psi}{\psi}).$ 
Moreover, if $\ket{\psi}$ has SR $\chi$, then $\mathcal{N}_{red}(\rho)$ is saturated by $\mathbf{a}=(1/\sqrt{\chi},\cdots,1/\sqrt\chi,0,\cdots,0)$, which attains the extreme value of $\eta_{q}$.
\end{lemma}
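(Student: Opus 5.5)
Since $\xi_\kappa:=\mathds{1}\otimes\text{RED}_\kappa$ is linear, I will compute its action on the two pieces of $\rho=\frac{1}{D+\alpha}(\mathds{1}+\alpha\ketbra{\psi}{\psi})$ separately. On the identity, $\text{RED}_\kappa$ acts on the $B$ factor $\mathds{1}_M$ as $\Tr(\mathds{1}_M)\mathds{1}_M-\frac1\kappa\mathds{1}_M$. Hence
\begin{equation*}
\xi_\kappa(\mathds{1}_D)=\big(M-\tfrac{1}{\kappa}\big)\mathds{1}_D ,
\end{equation*}
a multiple of the identity. Therefore
\begin{equation*}
\xi_\kappa(\rho)=\frac{1}{D+\alpha}\Big[\big(M-\tfrac1\kappa\big)\mathds{1}+\alpha\,\xi_\kappa(\ketbra{\psi}{\psi})\Big].
\end{equation*}
Both terms are diagonal in the same basis, so the spectrum of $\xi_\kappa(\rho)$ is $\frac{1}{D+\alpha}\big[(M-\frac1\kappa)+\alpha\eta_j\big]$, where $\{\eta_j\}$ is the spectrum of $\xi_\kappa(\ketbra{\psi}{\psi})$.

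Next I will use the fact, from Theorem~\ref{thm:Our3.1.} of the Appendix, that $\xi_\kappa(\ketbra{\psi}{\psi})$ has at most one negative eigenvalue $\eta_q$. Since $M-\frac1\kappa>0$ and $\alpha\ge0$, every other eigenvalue of $\xi_\kappa(\rho)$ is positive. At most one eigenvalue of $\xi_\kappa(\rho)$ can then be negative, namely $\frac{1}{D+\alpha}[(M-\frac1\kappa)+\alpha\eta_q]$. I also need $\Tr\,\xi_\kappa(\rho)=M-\frac1\kappa$, which follows from $\Tr\,\text{RED}_\kappa(X)=(M-\frac1\kappa)\Tr X$. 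With these two facts, $\|\xi_\kappa(\rho)\|_1=\Tr\,\xi_\kappa(\rho)-2\sum_{\text{neg}}\text{eig}$. Inserting this into Definition~\ref{def:NegativityReduction} gives exactly Eq.~\eqref{eq:NegativityReduction}, with the $\min[0,\cdot]$ accounting for the case where that eigenvalue is nonnegative.

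For the saturation claim, $\mathcal N_{red}$ depends on $\ket\psi$ only through $\eta_q$, and it is decreasing in $\eta_q$. So it suffices to minimize $\eta_q$ over Schmidt vectors of rank $\chi$. I will write $\xi_\kappa(\ketbra{\psi}{\psi})=\rho_A\otimes\mathds{1}-\frac1\kappa\ketbra{\psi}{\psi}$, with $\rho_A=\sum_i a_i^2\ketbra{i}$. The vector $\ket\psi$ lies in the span of $\{\ket{ii}\}$, where this operator acts as $\mathrm{diag}(a_i^2)-\frac1\kappa\ketbra{\boldsymbol a}$, with $\boldsymbol a$ the vector of the $a_i$. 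The negative eigenvalue $\eta$ solves the secular equation
\begin{equation*}
\frac1\kappa\sum_i\frac{a_i^2}{a_i^2-\eta}=1 .
\end{equation*}
I then aim to show that $\eta_q\ge\frac1\kappa-\frac1\chi$ for every rank-$\chi$ vector, with equality at the flat vector. At $a_i^2=1/\chi$ the equation is satisfied by $\eta=\frac1\chi-\frac1\kappa$, matching the bound quoted before the lemma. The route to the inequality is to lower bound $\eta_q$ variationally, via $\eta_q\ge\min_{\|\phi\|=1}\big[\langle\phi|\rho_A\otimes\mathds{1}|\phi\rangle-\frac1\kappa|\langle\phi|\psi\rangle|^2\big]$, and to handle the minimization over $\boldsymbol a$ by convexity or majorization of the secular function.

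I expect this extremality step to be the main obstacle. The sign bookkeeping in the stated bound $\eta_q\ge-\frac1\chi+\frac1\kappa$ needs care against the explicit flat-vector value, which has the opposite sign. If the clean inequality fails, I will fall back on directly comparing the secular equation for a general Schmidt vector with the flat one using Schur-concavity in $(a_i^2)$.
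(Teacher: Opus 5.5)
Your derivation of the formula is correct and is the paper's argument. You use linearity of $\xi_\kappa$, the identity $\xi_\kappa(\mathds{1})=(M-\frac1\kappa)\mathds{1}$, and the single negative eigenvalue of $\xi_\kappa(\ketbra{\psi}{\psi})$ from Theorem~\ref{thm:Our3.1.}. Your trace computation $\Tr\,\xi_\kappa(\rho)=M-\frac1\kappa$, which turns $\|\xi_\kappa(\rho)\|_1$ into minus twice the negative eigenvalue, fills in a step the paper leaves implicit. The one-negative-eigenvalue fact also follows directly from Weyl interlacing, since $\xi_\kappa(\ketbra{\psi}{\psi})$ is $\rho_A\otimes\mathds{1}\geq 0$ minus a rank-one positive term.

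The saturation half, however, is left unproved, and the inequality you set out to prove is false as written. For $\chi>\kappa$ the flat vector gives $\eta_q=\frac1\chi-\frac1\kappa<0<\frac1\kappa-\frac1\chi$. The correct target is therefore $\eta_q\geq\frac1\chi-\frac1\kappa$, with equality iff the $a_i^2$ are uniform. The inline remark before the lemma has the sign flipped; item~5 of Theorem~\ref{thm:Our3.1.} states it correctly, and the paper's proof simply invokes that item.

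Your primary route does not help. The quantity $\min_{\|\phi\|=1}[\langle\phi|\rho_A\otimes\mathds{1}|\phi\rangle-\frac1\kappa|\langle\phi|\psi\rangle|^2]$ equals $\eta_q$ by Rayleigh--Ritz, so it is not a lower bound on it.

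Your fallback is the right idea and closes quickly:
\begin{itemize}
\item Put $t=-\eta>0$. The secular equation becomes $g(t):=\sum_{i=1}^{\chi}\frac{a_i^2}{a_i^2+t}=\kappa$.
\item $g$ decreases strictly from $\chi$ to $0$, so for $\chi>\kappa$ there is a unique root $t^*$ and $\eta_q=-t^*$.
\item For fixed $t$, the map $s\mapsto s/(s+t)$ is strictly concave. Jensen with $\sum_i a_i^2=1$ gives $g(t)\leq \frac{1}{1/\chi+t}$, with equality iff $a_i^2\equiv 1/\chi$.
\item At $t_0=\frac1\kappa-\frac1\chi$ this yields $g(t_0)\leq\kappa=g(t^*)$. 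Hence $t^*\leq t_0$, i.e. $\eta_q\geq\frac1\chi-\frac1\kappa$, with equality only for the flat vector.
\item If $\chi\leq\kappa$, then $\eta_q\geq 0$ and $\mathcal N_{red}=0$, so saturation is trivial.
\end{itemize}
Since $\mathcal N_{red}$ is non-increasing in $\eta_q$ for $\alpha\geq 0$, this gives the claim.

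You can either cite item~5, as the paper does, or include this argument. The argument is more self-contained than the paper, whose justification of item~5 is only by analogy with the $\kappa=1$ case.
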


\begin{proof}
It is first needed to consider the action of the linear map onto the MMS, as $[\mathds{1}\otimes\text{RED}_{\kappa}](\mathds{1})=(M-1/\kappa)\mathds{1}$. The action of the map onto a general pure state is described in detail in the Appendix~\ref{Appendix:DerivationPREDFS}.
\end{proof}

From the previous result, it is possible to state the following theorem regarding the negativity of the reduction map.
\begin{theorem}
\label{thm:AlphaNegativityReduction} 
Let $\rho \in \mathcal B(\mathbb{C}^{N}\otimes\mathbb{C}^{M})$  be a normalized state acting on a Hilbert space of global dimension $D=N M$ with $N\leq M$ and $\boldsymbol{\lambda} = \{\lambda_i^{\uparrow} \}_{i=0}^{D-1}$ the corresponding eigenvalues in a non-decreasing order, with $\sum_{i=0}^{D-1}\lambda_i = 1$. 
If
\begin{equation}
\lambda_{0}\geq \frac{1}{D+\frac{\chi(\kappa(M-\gamma D)-1)}{\gamma\chi\kappa+\chi-\kappa}}, \quad\text{or}
\end{equation}
 \begin{equation}
 \label{eq:GeneralAlphaPMReduction}
 \begin{split}
 \frac{\kappa(\chi(\gamma+M-\gamma D)-1)}{\gamma\chi\kappa+\chi-\kappa} \cdot \sum_{i=0}^{c-1} \lambda_i +\\
 + \Big [D - \frac{\kappa(\chi(\gamma+M-\gamma D)-1)}{\gamma\chi\kappa+\chi-\kappa}\cdot c + \\ 
+\frac{\chi(\kappa(M-\gamma D)-1)}{\gamma\chi\kappa+\chi-\kappa} \Big ] \cdot \lambda_c \geq 1,
\end{split}
\end{equation}
then $\rho\in \text{NRFS}_{\gamma}$, where $c = \left\lceil\frac{(D-1)(\gamma\kappa\chi+\chi-\kappa)}{\kappa(\gamma\chi+\chi M -\gamma\chi D -1)}\right\rceil$.
\end{theorem}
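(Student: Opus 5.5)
The plan follows the template of Theorem~\ref{thm:AlphaNegativity}: find the threshold $\alpha_+$ for which every pseudo-pure image $\Lambda_{\alpha}(\ketbra{\psi}{\psi})$ with $\alpha\in[-1,\alpha_+]$ has reduction-map negativity at most $\gamma$. Theorem~\ref{thm:InverseMapSNFS} and Lemma~\ref{Lemma:GeneralAlphaPM} (with $\alpha_-=-1$, so $K=1+\alpha_+$ and $c=\lceil (D-1)/(1+\alpha_+)\rceil$) then turn this threshold into the stated spectral conditions.

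\textbf{Step 1: reduction to pure states.} By linearity of $\Lambda_\alpha$ and convexity of $\mathcal{N}_{red}$ (Definition~\ref{def:NegativityReduction}), any state of the form $\Lambda_\alpha(\rho)$ with $\rho=\sum_j p_j\ketbra{\psi_j}$ is a convex mixture of normalized PPS $\Lambda_\alpha(\ketbra{\psi_j})$. Hence it suffices to bound $\mathcal{N}_{red}$ on PPS, uniformly over $\ket{\psi}$.

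\textbf{Step 2: the worst PPS.} By Lemma~\ref{lemma:NegativityIsotropicReduction},
\[
\mathcal{N}_{red}(\rho_{PPS})=-\min\left[0,\frac{(M-1/\kappa)+\alpha\eta_q}{D+\alpha}\right].
\]
For fixed Schmidt rank $\chi$, the worst case is the flat vector $(1/\sqrt{\chi},\dots,1/\sqrt{\chi},0,\dots)$, which attains the extreme value $\eta_q=\frac{1}{\kappa}-\frac{1}{\chi}$ (recall $\eta_q\geq -\frac{1}{\chi}+\frac{1}{\kappa}$).

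Substituting this value, the condition $\mathcal{N}_{red}\leq\gamma$ becomes
\[
-(M-\tfrac1\kappa)-\alpha\left(\tfrac1\kappa-\tfrac1\chi\right)\leq \gamma(D+\alpha).
\]
This inequality is linear in $\alpha$. Solving it at equality and multiplying numerator and denominator by $\chi\kappa$ should give
\[
\alpha_+=\frac{\chi(\kappa(M-\gamma D)-1)}{\gamma\chi\kappa+\chi-\kappa},
\]
possibly up to the sign convention used for the normalization offset $M-1/\kappa$ in Definition~\ref{def:NegativityReduction}. I will match this convention carefully so that the expression agrees with the stated one. One then checks that $K=1+\alpha_+$ reproduces the coefficient $\kappa(\chi(\gamma+M-\gamma D)-1)/(\gamma\chi\kappa+\chi-\kappa)$. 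Indeed $\gamma\chi\kappa+\chi-\kappa+\chi\kappa M-\chi\kappa\gamma D-\chi$ simplifies to this numerator. The formula for $c$ then follows directly from $c=\lceil (D-1)/(1+\alpha_+)\rceil$.

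\textbf{Step 3: conclude.} Plug $\alpha_+$ into Lemma~\ref{Lemma:GeneralAlphaPM}. If the spectrum satisfies either displayed inequality, then $\Lambda_{\alpha_+}^{-1}(\rho)\geq 0$. Theorem~\ref{thm:InverseMapSNFS} then gives $\rho\in\mathrm{NRFS}_\gamma$, and unitary covariance of the map shows the bound holds for every $U\rho U^\dagger$.

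\textbf{Main obstacle.} The delicate point is Step 2: justifying that the flat Schmidt-rank-$\chi$ vector is the global worst case over all $\ket{\psi}$, not just within fixed rank. Unlike the transposition case (Eq.~\eqref{eq:pchi}), one must compare ranks. I would first note that $\eta_q$ is monotone in $\chi$, which the eigenvalue analysis of Theorem~\ref{thm:Our3.1.} should show. The statement then effectively fixes $\chi$ as the relevant maximal rank, just as Theorem~\ref{thm:Alpha_NEG_SNFS} restricts to a single dominant PPS. I would also verify that the denominator $\gamma\chi\kappa+\chi-\kappa$ is positive when $\chi>\kappa$, so that the inequality direction is preserved.
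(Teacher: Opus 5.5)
Your skeleton matches the paper's proof: flat Schmidt vector in Lemma~\ref{lemma:NegativityIsotropicReduction}, solve for $\alpha_+$, then feed it into Lemma~\ref{Lemma:GeneralAlphaPM} with $\alpha_-=-1$. Your check that $K=1+\alpha_+$ and $c=\lceil (D-1)/(1+\alpha_+)\rceil$ reproduce the stated coefficients is right. The gap is Step 2, the only substantive computation: it is wrong as written, and you defer its repair.

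\textbf{The sign of $\eta_q$.} For the flat rank-$\chi$ vector, $\ket{\psi}$ is itself an eigenvector of $\xi_\kappa(\ketbra{\psi}{\psi})=\mathrm{Tr}_B(\ketbra{\psi}{\psi})\otimes\one-\frac{1}{\kappa}\ketbra{\psi}{\psi}$. Its eigenvalue is $\eta_q=\frac{1}{\chi}-\frac{1}{\kappa}<0$, which is item 5 of Theorem~\ref{thm:Our3.1.}. The inequality $\eta_q\geq-\frac{1}{\chi}+\frac{1}{\kappa}$ you quote is a sign slip in the appendix text. With your sign, $\eta_q>0$ and the pseudo-pure state has no reduction negativity for any $\alpha\geq0$. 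Your displayed inequality, solved at equality, then gives $\alpha=-\chi(\kappa(M+\gamma D)-1)/(\gamma\chi\kappa+\chi-\kappa)<0$, not the stated $\alpha_+$.

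\textbf{The sign of $\gamma$.} Even with the correct $\eta_q$, imposing $\mathcal{N}_{red}\leq\gamma$ with $\gamma\geq0$ (the nonnegative quantity of Definition~\ref{def:NegativityReduction}) gives $\alpha_+=\chi(\kappa(M+\gamma D)-1)/(\chi-\kappa-\gamma\chi\kappa)$. That is the stated formula with $\gamma\mapsto-\gamma$. The stated expression comes only from requiring $\lambda_{\min}(\xi_\kappa(\rho_{PPS}))\geq\gamma$, i.e.\ $(M-\frac{1}{\kappa})+\alpha(\frac{1}{\chi}-\frac{1}{\kappa})\geq\gamma(D+\alpha)$. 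This rearranges to $\alpha(\gamma+\frac{1}{\kappa}-\frac{1}{\chi})\leq M-\frac{1}{\kappa}-\gamma D$. Multiplying by $\chi\kappa$ gives exactly the stated $\alpha_+$, provided $\gamma\chi\kappa+\chi-\kappa>0$, i.e.\ $\gamma>\eta_q$.

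So $\gamma$ has to be read as a nonpositive lower bound on the signed quantity $\mathcal{N}'_{red}=\min\{0,\lambda_{\min}(\xi_\kappa(\rho))\}$, introduced just before Lemma~\ref{lemma:NegativityIsotropicReduction}. The convention you must fix is the sign of $\gamma$ and of $\eta_q$, not the offset $M-1/\kappa$. Step 1 should then invoke concavity of $\lambda_{\min}$ rather than convexity of $\mathcal{N}_{red}$.

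\textbf{Your ``main obstacle''.} It resolves differently from what you suggest: no rank comparison singles out $\chi$. For $\alpha>0$ the PPS eigenvalue $(M-\frac{1}{\kappa}+\alpha\eta)/(D+\alpha)$ increases with $\eta$. Since $\eta_q\geq\frac{1}{r}-\frac{1}{\kappa}\geq\frac{1}{N}-\frac{1}{\kappa}$, the global worst case is the flat vector of rank $N$. The stated $\alpha_+$ decreases in $\chi$, so the criterion is valid for all $\rho$ only with $\chi=N$. For $\chi<N$ it would certify noisy rank-$N$ maximally entangled states that violate the bound.

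\textbf{Step 3.} The second displayed inequality does not give $\Lambda_{\alpha_+}^{-1}(\rho)\geq0$. Lemma~\ref{Lemma:GeneralAlphaPM} only places the spectrum in the convex hull of $\{\lambda_0\geq 1/(D+\alpha_+)\}$ and $\{\lambda_{D-1}\leq 1/(D-1)\}$. You therefore also need the $\alpha=-1$ branch, together with convexity of the target set. The $\alpha=-1$ branch holds, since $\lambda_{\max}(\xi_\kappa(\ketbra{\psi}{\psi}))\leq 1\leq M-\frac{1}{\kappa}$ for $M\geq2$.
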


\begin{proof}
Given a pure state of Schmidt rank $\chi$, $\ket{\psi_\chi}$, the image of the reduction map is $\Lambda_\alpha(\ketbra{\psi_\chi}{\psi_\chi}) = \alpha\ketbra{\psi_\chi}{\psi_\chi} + \mathbb{I}:=\rho_\chi$, which corresponds to the unnormalized PPS addressed in Lemma~\ref{lemma:NegativityIsotropicReduction}. On the other hand, $\alpha_{-} = -1$ is considered, as discussed in Section~\ref{sec:HowTo}. 
\end{proof}

\begin{figure}[htpb!]
    \centering
    \includegraphics[width=\linewidth]{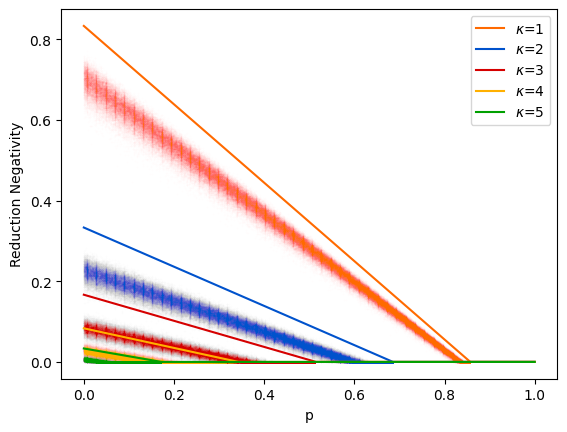}
    \caption{Maximal reduction map negativity given different map parameter $\kappa$ of PPS of maximal Schmidt rank $\chi=6$ as a function of $p =D/(D+\alpha)$ for $N = M = 6$. We also depict the reduction map negativities obtained for each spectrum under the action of $10^4$ Haar random unitary matrices.}
    \label{fig:ReductionsNegativities}
\end{figure}
In Fig.~\ref{fig:ReductionsNegativities} one observes a significant difference between the average negativity generated by Haar-random unitaries and the analytical bounds, as commented also in Section~\ref{sec:NEGFS}. Lemma~\ref{lemma:NegativityIsotropicReduction} provides the states that attain the desired maximum.

\section{Pseudo pure states}
\label{sec:PPS}
It is specially interesting that the values of $\alpha_{+}$ that we compute throughout the work are only tight for PPS. Therefore, by knowing the noise parameter of the pure state, it is possible to deduce the maximal entanglement that can be generated under global unitaries.
In the following, we detail how such a parameter can be inferred in practical scenarios. 
We consider the case of pure states $\ketbra{\psi}{\psi}$ affected by the depolarizing channel with certain probability, i.e., a PPS
\begin{equation}
    \rho=(1-p)\ketbra{\psi}{\psi}+p\frac{\one}{D}\,.
\end{equation}
Notice that $p=\frac{D}{D+\alpha_{+}}$. 
This case is particularly tractable analytically (see e.g. Fig.~\ref{fig:Negativities}) and can model any type of noise in a worst-case scenario, assuming that all gates spanning an operator basis can occur with equal probability. Then the spectrum depends on a single parameter.
If the target state $\ket{\psi}$ is known, one can compute the fidelity
\begin{equation}
    F=\bra{\psi}\rho\ket{\psi} = (1-p) + \frac{p}{D},
\end{equation}
and obtain the noise parameter $p$ with error $\varepsilon$ by using $n\propto O(\varepsilon^{-2})$ measurements with identical copies, since the variance scales with the root of the number of measurements $1/\sqrt{n}$. 
Alternatively, assuming that one has access to a device able to prepare two simultaneous copies of the state at each shot, one can use the swap test through the purity $\Tr(\rho^2)=\Tr(\rho^{\otimes 2}V)$ where $V$ is the SWAP operator. The purity can also be estimated with sequential copies with randomized measurements~\cite{elben_mixed-state_2020,neven_symmetry-resolved_2021}. In either case we have $\Tr(\rho^2)=(1-p)^2+2p(1-p)/D+p^2/D$. Similarly as in fidelity estimation, the error decreases as $O(\varepsilon^{-2})$ with the number of measurements.\\

Notice that this strategy does not apply to the case where we have a source of randomly unknown quantum states, distributed over the Haar measure, with fixed depolarizing noise parameter $p$. Given any pair of such states, $\rho=(1-p)\ketbra{\psi}+p\one/D$ and $\sigma=(1-p)\ketbra{\phi}+p\one/D$, we have that
\begin{align}
    \mathbb{E}(\Tr(\rho\sigma)) &= (1-p)^2\mathbb{E}(|\bra{\psi}\phi\rangle|^2)+\frac{p(2-p)}{D}\nonumber\\
    &= \frac{1}{D}\,,
\end{align}
since the average overlap between Haar random states is $\mathbb{E}(|\bra{\psi}\phi\rangle|^2)=1/D$. Thus one cannot gain information about $p$ by directly computing the expectation value with respect to the prepared states.

\section{Derivation of the PREDFS conditions}
\label{Appendix:DerivationPREDFS}
Here we derive in detail the conditions for $\text{PREDFS}_{\kappa}$ presented in Section~\ref{sec:SNFSPREDFS}. The proofs are based on Ref.~\cite{hildebrand_PPT_2007} and \cite{jivulescu_positive_2015}, where the authors derive similar bounds for the specific case of separability, namely for $\kappa=1$. We present here a generalization to arbitrary $\kappa$ of their calculations.\\

First, we recall that the map $\left[\mathds{1}\otimes\text{RED}_{\kappa}\right](\rho)=[\mathds{1}\otimes\text{RED}_{\kappa}](\rho)$ is self-adjoint. 
\begin{observation}
\label{obs:selfadjoint}
The reduction map applied on a subsystem of a state in a $\mathbb{C}^{M}\otimes\mathbb{C}^{N}$ Hilbert space is self-adjoint, namely for all $X,Y\in\mathcal{B}(\mathbb{C}^{M}),\mathcal{B}(\mathbb{C}^{N})$ it satisfies that
\begin{equation}
\label{eq:SelfAdjoint}
 \Tr\left(X^{\dag}\left[\mathds{1}\otimes\text{RED}_{\kappa}\right](Y) \right) = \Tr\left(\left[\mathds{1}\otimes\text{RED}_{\kappa}\right](X)^{\dag}Y \right) 
\end{equation}
\end{observation}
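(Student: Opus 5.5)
The plan is to show that $\xi_\kappa:=\mathds{1}\otimes\text{RED}_{\kappa}$ coincides with its Hilbert--Schmidt adjoint. The key fact is that $\text{RED}_\kappa$ is a real linear combination of the identity map and the map $\sigma\mapsto\Tr(\sigma)\mathds{1}$, and each piece is self-adjoint. I read $X,Y$ as operators on the full bipartite space $\mathbb{C}^{N}\otimes\mathbb{C}^{M}$, since $\xi_\kappa$ acts there. Writing $\text{RED}_\kappa=\mathcal{T}-\frac{1}{\kappa}\mathrm{id}$ with $\mathcal{T}(\sigma)=\Tr(\sigma)\mathds{1}_M$, we get
\begin{equation*}
\xi_\kappa(Y)=\Tr_B(Y)\otimes\mathds{1}_M-\frac{1}{\kappa}Y .
\end{equation*}
By linearity, it is enough to verify the identity~\eqref{eq:SelfAdjoint} separately for the two terms.

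\emph{Second term.} The term $-\frac{1}{\kappa}Y$ is immediate, because $\frac{1}{\kappa}$ is real:
\begin{equation*}
\Tr\!\left(X^\dag\,\tfrac{1}{\kappa}Y\right)=\Tr\!\left(\left(\tfrac{1}{\kappa}X\right)^\dag Y\right).
\end{equation*}

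\emph{First term.} For $\Tr_B(Y)\otimes\mathds{1}_M$, apply the defining property of the partial trace, $\Tr\big(Z^\dag(A\otimes\mathds{1})\big)=\Tr\big(\Tr_B(Z)^\dag A\big)$, once in each direction:
\begin{equation*}
\Tr\!\left(X^\dag(\Tr_B Y\otimes\mathds{1})\right)=\Tr\!\left((\Tr_B X)^\dag\,\Tr_B Y\right)=\Tr\!\left((\Tr_B X\otimes\mathds{1})^\dag\,Y\right).
\end{equation*}
This uses $(\Tr_B X)^\dag=\Tr_B(X^\dag)$. To keep the argument transparent I would check everything on product operators $X=A\otimes B$, $Y=C\otimes E$: both sides then reduce to $\Tr(A^\dag C)\,\overline{\Tr B}\,\Tr E$. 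General operators follow by sesquilinear extension.

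Summing the two pieces gives~\eqref{eq:SelfAdjoint}. Equivalently, one can say that $\mathcal{T}$ and $\mathrm{id}$ are Hermiticity-preserving and self-adjoint, that tensoring with $\mathds{1}$ preserves self-adjointness, and that real linear combinations of self-adjoint maps stay self-adjoint. There is no genuine obstacle here. The only points needing care are the notation, namely that $X,Y$ live on the composite space and not on the separate factors as the statement literally suggests, and that $\kappa$ is real so that no complex conjugation spoils the identity.

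This result is what justifies passing from~\eqref{eq:4.2.3} to~\eqref{eq:4.2.4}, where $\xi_\kappa^\dag$ is replaced by $\xi_\kappa$.
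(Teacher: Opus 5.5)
Your proof is correct and takes essentially the same route as the paper. The paper reduces to simple tensors $X=X_A\otimes X_B$, $Y=Y_A\otimes Y_B$ by sesquilinearity and concludes by direct computation; your splitting of $\text{RED}_\kappa$ into $\sigma\mapsto\Tr(\sigma)\mathds{1}$ and $-\frac{1}{\kappa}\,\mathrm{id}$, together with the partial-trace identity, just makes that computation explicit, and your reading of $X,Y$ as operators on the composite space matches what the paper's proof implicitly assumes.
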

\begin{proof}
It is possible to consider the case of simple tensors $X=X_{A}\otimes X_{B}$ and $Y=Y_{A}\otimes Y_{B}$ due to antilinearity in $X$ and linearity in $Y$ of both expressions. The conclusion follows from direct computation.
\end{proof}
Secondly, we would like to calculate the spectra of the action of the reduction map on a subsystem of a general pure state, which was only known for $\kappa=1$.

\begin{theorem}
\label{thm:Our3.1.}
Let $\ket{\psi}\in\mathbb{C}^{N}\otimes\mathbb{C}^{M}$ be a normalized pure state with Schmidt decomposition, up to local unitaries,
\begin{equation}
\label{eq:SchmidtDecompositionPREDFS}
\ket{\psi}=\sum_{i=1}^{r}a_{i}\ket{ii}
\end{equation}
where $\{a_{i}>0\}$ and $1\leq r\leq \min(N,M)$ is the SR of $\ket\psi$. Suppose that the sets $\{a_{i}\}_{i=1}^{r}$ and $\{b_{1}>b_{2}>\cdots>b_{q}\}$ (where each $b_{i}$ has multiplicity $m_{i}$, $i=1,\cdots,q$) coincide. Then, the eigenvalues of the $\kappa$-reduced projection on $\left[\mathds{1}\otimes\text{RED}_{\kappa}\right](\ketbra{\psi}{\psi})$ fulfill the following list of conditions:
\begin{enumerate}
    \item Each $a_{i}^2$ appears with multiplicity $m_{i}\cdot M -1$.
    \item There are $q$ simple eigenvalues $\eta_1 > \cdots > \eta_q$, defined as the $q$ real solutions of $F_{x}(y)=0$, where
\begin{equation}
\label{eq:FXY}
    F_x(y) := 1 - \frac{1}{\kappa}\sum_{i=1}^{q}\frac{m_i b_i^2}{b_i^2 - y}.
\end{equation}
    \item The values interlace according to
\begin{equation}
\label{eq:OrderedEigenvaluesReduction}
    b_1^2 > \eta_1 > b_2^2 > \eta_2 > \cdots > b_q^2 > \eta_q,
\end{equation}
    where the smallest one satisfies
\begin{equation}
\label{eq:EtaqExpression}
\eta_q = 1 - \frac{1}{\kappa} - \sum_{i=1}^{q-1}\eta_i - \sum_{i=1}^q (m_i-1)\cdot b_i^2.
\end{equation}
    \item The null eigenvalue has multiplicity $(N-r)\cdot M$.  
    \item The auxiliary block
\begin{equation}
\label{eq:AuxiliaryMatrix}
A_{ij}= \big(a_i^2\delta_{ij} - \tfrac{1}{\kappa}a_i a_j\big)_{i,j\in(1,..., r)}
\end{equation}
has at most one negative eigenvalue. If $r\leq \kappa$, then $A\geq 0$. If $r>\kappa$, its unique negative eigenvalue $\eta_{q}$ satisfies
\begin{equation}
\label{eq:BoundEtaQ}
\eta_{\min}
\geq
-\left(\frac{1}{\kappa}-\frac{1}{r}\right),
\end{equation}
with equality if and only if the nonzero Schmidt coefficients are uniform,
$a_{i}=1/\sqrt{r}$.
\end{enumerate}
Finally, if $r>\kappa$, then necessarily $\eta_{q}<0$.
\end{theorem}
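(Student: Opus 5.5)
The plan is to write the operator explicitly and read off its spectrum from a block decomposition. Since $\text{RED}_\kappa$ acts on the second factor, linearity gives
\begin{equation*}
\xi_\kappa(\ketbra{\psi}{\psi})=\Tr_B(\ketbra{\psi}{\psi})\otimes\mathds{1}_M-\frac{1}{\kappa}\ketbra{\psi}{\psi}
=\sum_{i=1}^{r}a_i^2\ketbra{i}{i}\otimes\mathds{1}_M-\frac{1}{\kappa}\ketbra{\psi}{\psi}.
\end{equation*}
I would split $\mathbb{C}^N\otimes\mathbb{C}^M$ into three mutually orthogonal invariant subspaces.
\begin{enumerate}
\item The span of $\ket{ij}$ with $i>r$. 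There the operator vanishes, giving the null eigenvalue with multiplicity $(N-r)M$ (item 4).
\item The span of $\ket{ij}$ with $i\le r$ and $j\neq i$. These vectors are orthogonal to $\ket{\psi}$, so each $\ket{ij}$ is an eigenvector with eigenvalue $a_i^2$, contributing $M-1$ copies per $i$.
\item The $r$-dimensional span of $\{\ket{ii}\}_{i\le r}$. It contains $\ket{\psi}$, and the restriction of the operator to it is exactly the matrix $A$ of Eq.~\eqref{eq:AuxiliaryMatrix}.
\end{enumerate}
The whole problem thus reduces to diagonalizing the rank-one perturbation $A=\mathrm{diag}(a_i^2)-\frac1\kappa vv^T$, where $v=(a_1,\dots,a_r)$.

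For $A$, I would use the standard secular-equation analysis of a diagonal-plus-rank-one matrix. Group the indices by the distinct values $b_k$, each with multiplicity $m_k$. Inside each degenerate block, the $m_k-1$ directions orthogonal to the restriction of $v$ are unaffected by the perturbation and keep eigenvalue $b_k^2$. Added to the $m_k(M-1)$ copies from the second subspace, this gives multiplicity $m_kM-1$ for $b_k^2$ (item 1).

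The remaining $q$ eigenvalues live on the span of the block-projections of $v$, a $q\times q$ diagonal-plus-rank-one problem. For $y$ different from every $b_k^2$, the determinant lemma $\det(D-\frac1\kappa uu^T-y)=\det(D-y)\bigl(1-\frac1\kappa u^T(D-y)^{-1}u\bigr)$ gives exactly $F_x(y)=0$ of Eq.~\eqref{eq:FXY}. I would then examine $F_x$: it is strictly decreasing on each interval between consecutive poles $b_k^2$, tends to $1$ as $y\to-\infty$, and tends to $\mp\infty$ on either side of each pole. Hence it has exactly one root in each interval $(b_{k+1}^2,b_k^2)$ and one root below $b_q^2$, none of them coinciding with a pole since every $m_kb_k^2>0$. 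This yields item 2 and the interlacing of Eq.~\eqref{eq:OrderedEigenvaluesReduction} (item 3). The expression \eqref{eq:EtaqExpression} for $\eta_q$ follows by equating $\Tr A=\sum_i a_i^2-\frac1\kappa=1-\frac1\kappa$ with the sum of all eigenvalues of $A$, namely $\sum_k(m_k-1)b_k^2+\sum_k\eta_k$.

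Item 5 and the final claim come next. By interlacing, every eigenvalue of $A$ except $\eta_q$ exceeds some $b_k^2>0$, so $A$ has at most one negative eigenvalue. Its sign is the sign of $F_x(0)=1-\frac{r}{\kappa}$, because $F_x$ is decreasing on $(-\infty,b_q^2)$:
\begin{itemize}
\item if $r\le\kappa$, then $F_x(0)\ge0$ forces $\eta_q\ge0$, so $A\ge0$;
\item if $r>\kappa$, then $F_x(0)<0$ forces $\eta_q<0$.
\end{itemize}

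The step I expect to be the real obstacle is the uniform lower bound \eqref{eq:BoundEtaQ} with its equality case. I would write $\eta_q=-t$ with $t>0$, so that the secular equation reads $\kappa=\sum_{i=1}^r h(a_i^2)$ with $h(s)=s/(s+t)$. Since $h$ is strictly concave in $s$ and $\sum_i a_i^2=1$, Jensen's inequality gives
\begin{equation*}
\kappa=\sum_{i=1}^r h(a_i^2)\le r\,h(1/r)=\frac{1}{1/r+t}.
\end{equation*}
Rearranging gives $t\le \frac1\kappa-\frac1r$, which is exactly \eqref{eq:BoundEtaQ}. Strict concavity gives equality if and only if all $a_i^2=1/r$. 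The remaining work is routine bookkeeping: checking that the multiplicities add up to $NM$, and handling the degenerate-block eigenvectors carefully.
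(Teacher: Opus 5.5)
Your proposal is correct and follows the paper's route, which the paper in turn adapts from Ref.~\cite{jivulescu_positive_2015}: the same split into the $i\neq j$ sector and the $r$-dimensional $\{\ket{ii}\}$ block, the secular function $F_x$ with the extra $1/\kappa$, interlacing, and the trace identity for $\eta_q$. Your sign test via $F_x(0)=1-r/\kappa$ and your concavity (Jensen) proof of the bound on $\eta_q$, with its equality case, supply details the paper only delegates to that reference; the one caveat, which the statement itself shares, is that for $r=\kappa$ your own analysis gives $\eta_q=0$, so the null eigenvalue then has multiplicity $(N-r)M+1$ rather than exactly $(N-r)M$.
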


\begin{proof}
The argument follows closely Theorem $3.1.$ and Lemma $3.2.$ of \cite{jivulescu_positive_2015}. We will comment the main differences when extending to arbitrary natural values of $\kappa$, since the action of $\left[\mathds{1}\otimes\text{RED}_{\kappa}\right](\ketbra{\psi}{\psi}) = \sum_{i=1}^r a_{i}^{2} \ketbra{e_{i}}{e_{i}} \otimes \mathds{1}_M - 1/\kappa \cdot \sum_{i,j}a_ia_j\ketbra{ii}{jj}$ introduces an additional factor $-1/\kappa$ in the off-diagonal block.

As in the original proof, the spectrum splits into contributions from ($\{\ket{ij}\}_{i\not=j}$), yielding multiplicities $m_{i}\cdot (M-1)$ for each $a_{i}^2$; and the $r$-dimensional subspace spanned by $\{\ket{ii}\}_{i=1}^{r}$, where the problem is reduced to the matrix $A$ defined above. Its eigenvalues, following \cite{jivulescu_positive_2015} are precisely $b_{i}^2$ with multiplicities $m_{i}-1$ and the simple roots $\eta_{i}$ for the defined function $F_{x}(y)$, which only differs from the literature in the $1/\kappa$ multiplicative factor.

Interlacing and the explicit formula for $\eta_{q}$ are also analogous to \cite{jivulescu_positive_2015}. Since $\Tr(A)\not=0$, some extra terms appear in the expression of $\eta_{q}$. Moreover, notice that given $\kappa>1$, $\eta_{q}$ can be positive, negative and $0$.
\end{proof}
\begin{proposition}
\label{Prop:Our6.1.}
Let $\ket{\psi}$ be a normalized pure state in $\mathbb{C}^{N}\otimes\mathbb{C}^{M}$ with maximal Schmidt rank $N$ and let $\alpha\in[-1,\frac{N\cdot(M\kappa-1)}{N-\kappa}]$. Then,
\begin{equation}
\label{eq:PropOur6.1.}
\Lambda_{\alpha}(\rho)=\mathds{1}+\alpha\cdot\ketbra{\psi}{\psi} \in \text{PREDFS}_{\kappa}
\end{equation}
\end{proposition}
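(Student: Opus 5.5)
The plan is to reduce membership in $\text{PREDFS}_{\kappa}$ to a spectral estimate on a single pure state, using linearity of $\xi_\kappa=\mathds{1}\otimes\text{RED}_\kappa$ together with Theorem~\ref{thm:Our3.1.}.

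\textbf{Reduction to a pure state.} For any global unitary $U$ we have
\begin{equation*}
U\Lambda_\alpha(\rho)U^\dagger=\mathds{1}+\alpha\ketbra{\phi}{\phi}, \qquad \ket{\phi}:=U\ket{\psi}.
\end{equation*}
As $U$ ranges over $\mathcal{U}(D)$, $\ket{\phi}$ ranges over all normalized pure states. So the Schmidt rank $N$ of $\ket{\psi}$ plays no role beyond fixing the worst case, and it suffices to show
\begin{equation*}
\xi_\kappa(\mathds{1})+\alpha\,\xi_\kappa(\ketbra{\phi}{\phi})\geq 0 \quad\text{for every } \ket{\phi}.
\end{equation*}

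\textbf{The identity term.} By direct computation, $\xi_\kappa(\mathds{1})=(M-1/\kappa)\mathds{1}$, as already used in Lemma~\ref{lemma:NegativityIsotropicReduction}. The condition therefore becomes
\begin{equation*}
M-\tfrac{1}{\kappa}+\alpha\,\mu\geq 0 \quad\text{for every eigenvalue } \mu \text{ of } \xi_\kappa(\ketbra{\phi}{\phi}).
\end{equation*}
This is linear in $\alpha$, so the admissible set of $\alpha$ is an interval containing $0$. I would treat the two endpoints separately.

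\textbf{Case $\alpha\ge 0$.} Only the smallest eigenvalue $\eta_q$ matters. Let $r\le N$ be the Schmidt rank of $\ket{\phi}$. By item 5 of Theorem~\ref{thm:Our3.1.}:
\begin{itemize}
\item if $r\le\kappa$, then $\eta_q\ge 0$ and there is nothing to check;
\item if $r>\kappa$, then $\eta_q\ge -(1/\kappa-1/r)\ge -(1/\kappa-1/N)=-\frac{N-\kappa}{\kappa N}$, since $1/r\ge 1/N$.
\end{itemize}
Hence it suffices that
\begin{equation*}
M-\tfrac1\kappa-\alpha\,\frac{N-\kappa}{\kappa N}\ge0,
\quad\text{i.e.}\quad
\alpha\le \frac{N(M\kappa-1)}{N-\kappa}.
\end{equation*}
This is exactly the stated upper endpoint. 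It is tight because equality in item 5 is attained by the uniform Schmidt vector of rank $N$. This is where the hypothesis of maximal Schmidt rank enters: it identifies the extremal state.

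\textbf{Case $-1\le\alpha<0$.} Now the largest eigenvalue of $\xi_\kappa(\ketbra{\phi}{\phi})$ must be bounded. From the spectral description in Theorem~\ref{thm:Our3.1.}, every eigenvalue is at most $b_1^2\le 1$. This holds for the eigenvalues $a_i^2$, for the $\eta_i$ by interlacing, and for the zeros. Alternatively, one can note directly that
\begin{equation*}
\xi_\kappa(\ketbra{\phi}{\phi})\le \rho_A^\phi\otimes\mathds{1}\le\mathds{1}.
\end{equation*}
The condition then reduces to $M-1/\kappa-|\alpha|\ge M-1/\kappa-1\ge 0$. This holds since $M\ge N>\kappa\ge1$ forces $M\ge2$.

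\textbf{Main obstacle.} The step I expect to need the most care is the uniform lower bound on $\eta_q$ over all Schmidt vectors (item 5), including the monotonicity in $r$. If that item is not taken as given, I would prove it by minimizing the smallest eigenvalue of $A=\mathrm{diag}(a^2)-\frac1\kappa aa^{T}$. Writing a test vector $x$ gives
\begin{equation*}
x^{T}Ax=\sum_i a_i^2x_i^2-\tfrac1\kappa\Big(\sum_i a_ix_i\Big)^2.
\end{equation*}
By Cauchy--Schwarz, $\big(\sum_i a_ix_i\big)^2\le r\sum_i a_i^2x_i^2$. Optimizing over $a$ under the constraint $\sum_i a_i^2=1$ then shows the minimum is attained at uniform coefficients, which gives $\eta_q=1/r-1/\kappa$.
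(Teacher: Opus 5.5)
Your argument is correct and matches the paper's proof: linearity with $\xi_\kappa(\mathds{1})=(M-1/\kappa)\mathds{1}$, positivity for $r\leq\kappa$, the bound $\eta_q\geq 1/r-1/\kappa$ from item 5 of Theorem~\ref{thm:Our3.1.} for $r>\kappa$, and the worst case $r=N$. You also make two things explicit that the paper defers to the cited reference: the reduction via unitary invariance, and the endpoint $\alpha\in[-1,0)$ via $\xi_\kappa(\ketbra{\phi}{\phi})\leq\mathds{1}$.

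If you want to prove item 5 yourself, apply your Cauchy--Schwarz inequality to the positive term, $\sum_i a_i^2x_i^2\geq(\sum_i a_ix_i)^2/r$. This gives $x^TAx\geq(1/r-1/\kappa)(\sum_i a_ix_i)^2\geq(1/r-1/\kappa)\|x\|^2$, using $(\sum_i a_ix_i)^2\leq\|x\|^2$ (from $\sum_i a_i^2=1$) and $1/r<1/\kappa$. Substituting the inequality into the negative term instead only yields the weaker bound $1-r/\kappa$, so ``optimizing over $a$'' is not what closes the argument.
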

\begin{proof}
This proof is analogous to the one of Proposition 6.1. (1) of \cite{jivulescu_positive_2015} renaming the unnormalized isotropic states as the action of the reduction map on the whole pure state and considering $\mathds{1}\otimes\text{RED}_{\kappa}$ with values of $\kappa>1$. 

We first consider a pure state of arbitrary Schmidt rank $r$. For $r\leq\kappa$, positivity follows from $\kappa$-positivity of the reduction map. For $r>\kappa$, Theorem~\ref{thm:Our3.1.} bounds the unique negative eigenvalue with equality for the uniform Schmidt vector. Adding the identity contribution gives positivity whenever $\alpha\leq r(M\kappa-1)/(r-\kappa)$. The biggest possible value of $\alpha$ is then given when $r=N$.
\end{proof}
\end{document}